\newtheorem{theorem}{Theorem}
\newtheorem{corollary}{Corollary}[theorem]
\crefname{lemma}{Lemma}{Lemmas}
\crefname{assumption}{Assumption}{Assumptions}
\crefname{algorithm}{Algorithm}{Algorithms}
\crefname{section}{Section}{Sections}
\crefname{corollary}{Corollary}{corollaries}
\crefname{appendix}{Appendix}{appendices}
\renewcommand{\footnoterule}{%
  \vspace{0.2in}
  \kern-4pt
  \hrule width 0.4\columnwidth
  \kern 2.6pt}
\def\rrr#1\\{\par
\medskip\hbox{\vbox{\parindent=2em\hsize=6.12in
\hangindent=4em\hangafter=1#1}}}
\let\ORGhypersetup\hypersetup
\protected\def\hypersetup{\ORGhypersetup}
  \def\hypersetup#1{}%
  \let\Cref\crtCref
  \let\cref\crtcref
\title{\textbf{Pre-validation Revisited}}
\author{Jing Shang$^{1}$ \quad Sourav Chatterjee$^{1,2}$ 
\quad Trevor Hastie$^{1,3}$ \quad Robert Tibshirani$^{1,3}$}
\date{\normalsize
$^1$Department of Statistics \\
$^2$Department of Mathematics \\
$^3$Department of Biomedical Data Science \\
Stanford University}
\begin{document}
\maketitle

\begin{abstract}
    Pre-validation is a way to build prediction model with two datasets of significantly different feature dimensions. Previous work showed that the asymptotic distribution of the resulting test statistic for the pre-validated predictor deviates from a standard Normal, hence leads to issues in hypothesis testing. In this paper, we revisit the pre-validation procedure and extend the problem formulation without any independence assumption on the two feature sets. We propose not only an analytical distribution of the test statistic for the pre-validated predictor under certain models, but also a generic bootstrap procedure to conduct inference. We show properties and benefits of pre-validation in prediction, inference and error estimation by simulations and applications, including analysis of a breast cancer study and a synthetic GWAS example.
\end{abstract}

\section{Introduction}
Modern biomedical technologies have transformed how we diagnose, treat, and prevent diseases in the past decade. In particular, we see surging needs in combining gene expression data with traditional clinical measurements. However, since gene expression data is usually high-dimensional, while the total number of available clinical measurements is rather limited, naive approaches such as pooling all features together will not work well. Particularly, the microarray features would dominate in inference and variable selection, resulting in negligence of valuable information from clinical data or failure of type I error control.

To address the issue caused by imbalance in feature dimensions, \citeasnoun{tibshirani2002pre} proposed the \textit{pre-validation} procedure to make a fairer comparison between the two sets of predictors. In the setting of gene expression data and clinical measurements, the procedure includes two steps: first we use high-dimensional gene expression data to train the leave-one-out fits of the response, then we use the fitted values and clinical data to build a final prediction model. It turns out that the pre-validation procedure not only enables us to test whether the gene expression data have predictive power with type-I error control, but also gives a good estimate of the prediction error. \autoref{fig:preval} illustrates the two-stage pre-validation procedure.

\begin{figure}[H]
    \centering
    \includegraphics[scale=0.8]{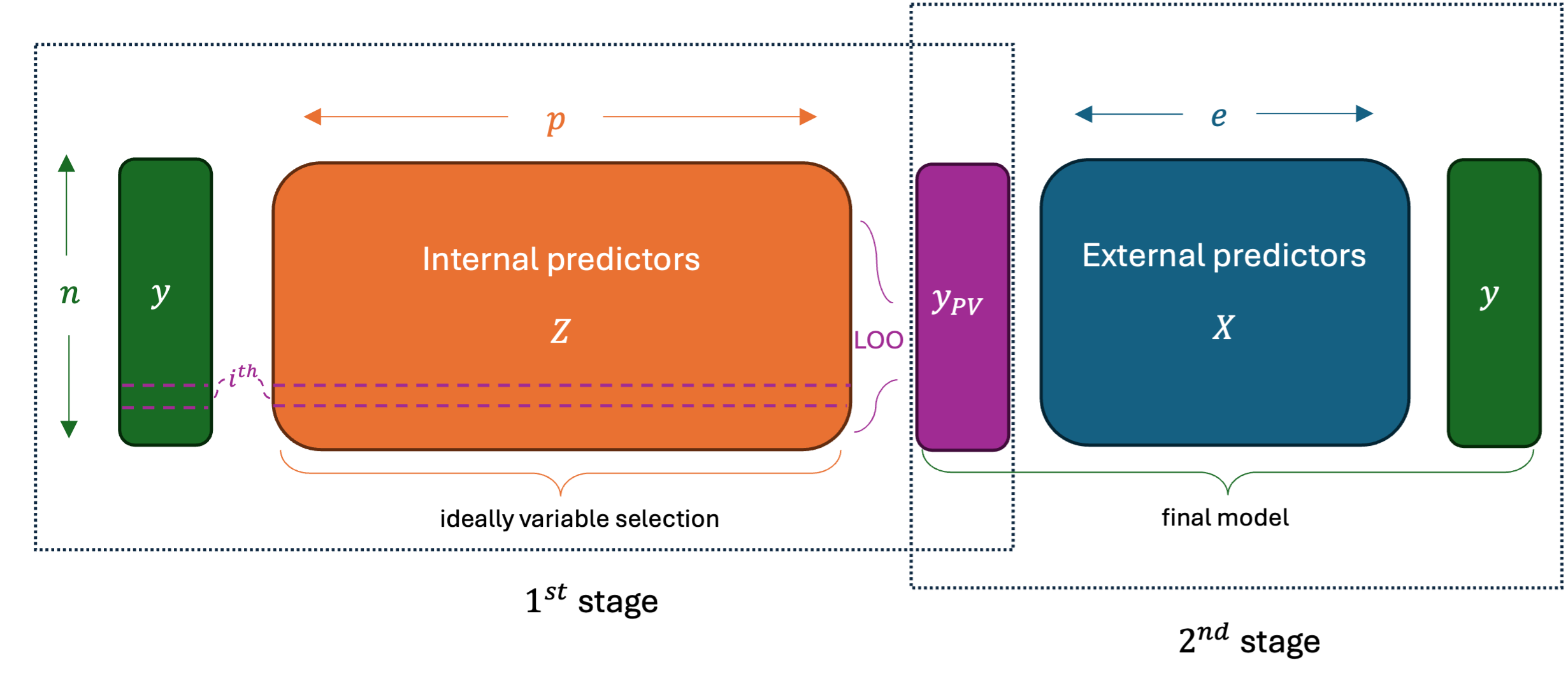}
    \caption
    {\em{Illustration of the Pre-validation Procedure:}
    \small first we use gene expression data $Z$ to train the leave-one-out fits $y_{PV}$ of response $y$, then we use the fitted values and clinical data $X$ to build a final prediction model.}

    \label{fig:preval}
\end{figure}

In the usual regression setting, we know that the test statistic of a coefficient asymptotically follows a standard Normal distribution under the null, and use this fact to conduct hypothesis test on the coefficients. However, \citeasnoun{hofling2008study} have shown that t-statistic for testing significance of the pre-validated predictor does not follow Normal distribution asymptotically. Instead, they provided a corrected asymptotic distribution in the two-stage linear regression procedure, when internal predictors (i.e. genetic features) and external predictors (i.e. clinical features) are independent. 

\autoref{fig:lsnoX} shows an example. It plots the quantiles of standard Normal and their proposed distribution against the test statistics, when there are no external predictors in the model. We can see that the distribution of test statistics have a heavier tail than the standard Normal. It gives an example of how the use of a standard Normal distribution to test hypotheses can cause issues. For more sophisticated learning models such as ridge or Lasso regression, they did not provide an analytical asymptotic distribution for the test statistic, but rather proposed a permutation test.

\begin{figure}[H]
     \centering
     \begin{subfigure}[b]{0.48\textwidth}
         \centering
         \includegraphics[width=\textwidth]{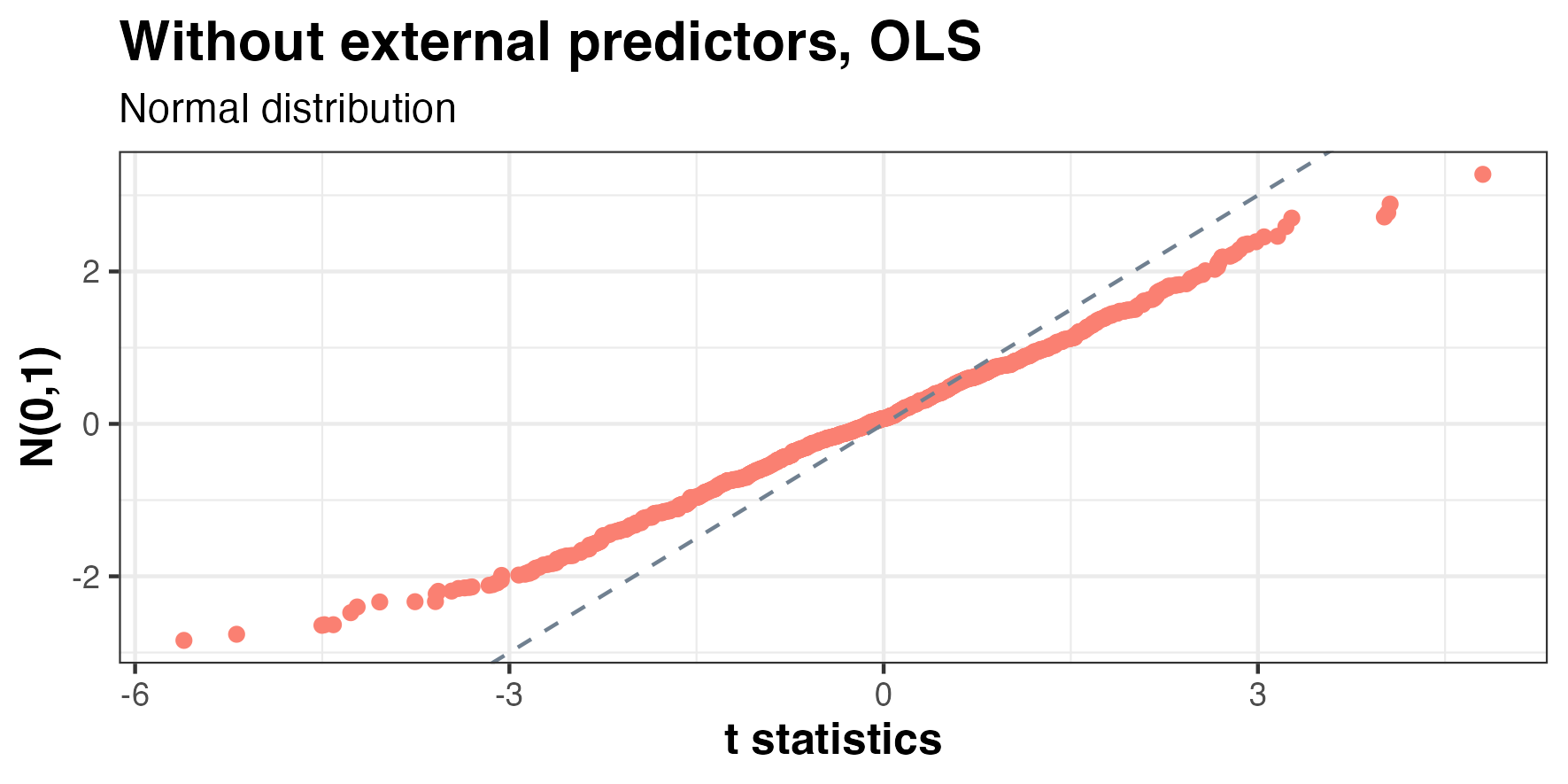}
         \caption{Normal Distribution}
     \end{subfigure}
     \begin{subfigure}[b]{0.48\textwidth}
         \centering
         \includegraphics[width=\textwidth]{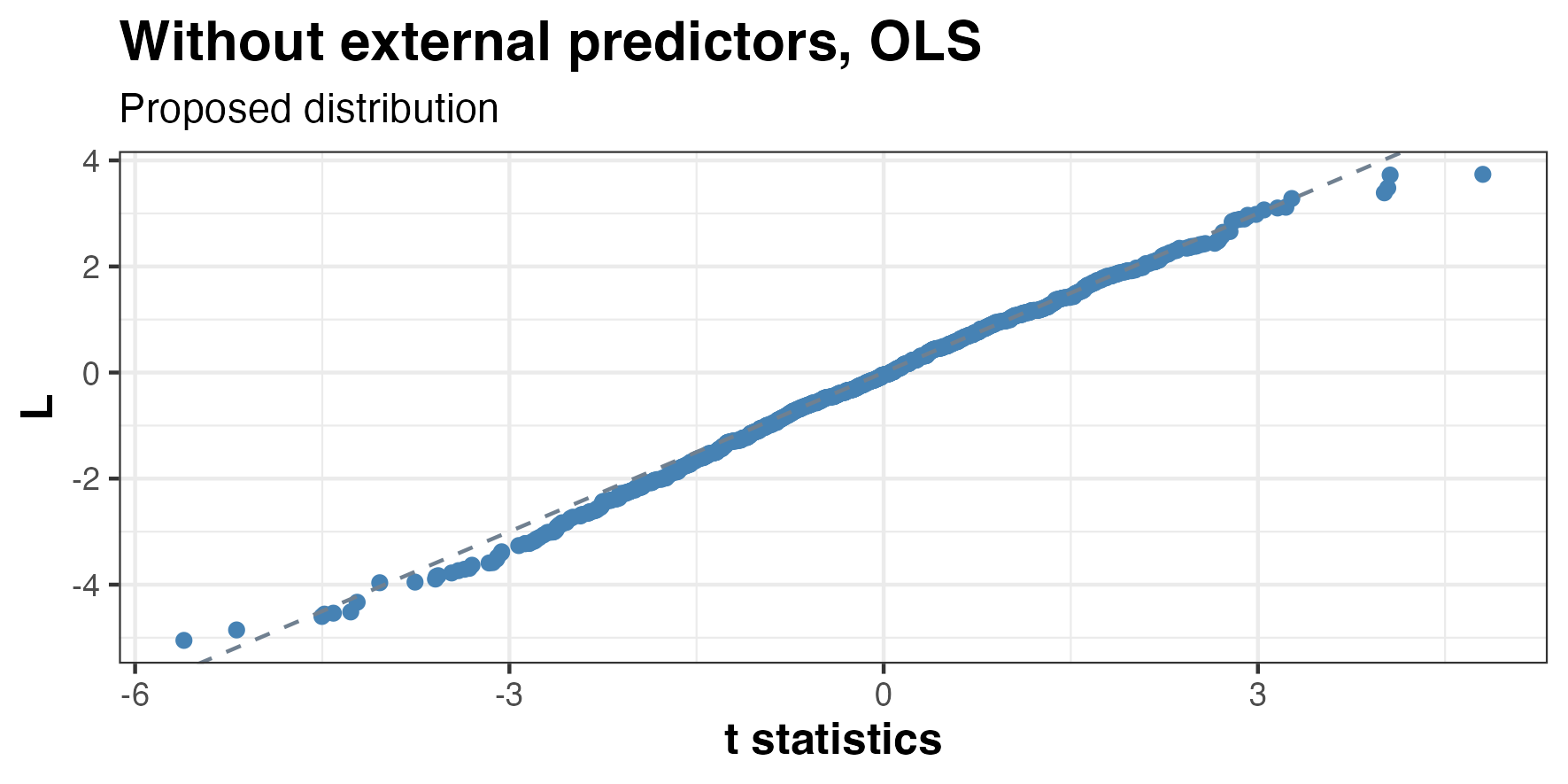}
         \caption{Proposed Distribution}
     \end{subfigure}
     \hfill
     \caption{\em{Comparison with Different Distributions under Model without External Predictors}}
    \label{fig:lsnoX}
\end{figure}

Following the basic framework of the pre-validation procedure, this paper aims to (1) loosen the independence assumption between internal and external predictors, which can lead to invalidity of the permutation test, and instead propose an analytical asymptotic distribution to account for correlation; (2) investigate other common learning procedures that allow for coefficient shrinkage (e.g. ridge regression), and provide analytical asymptotic distributions; (3) for models that do not have closed form solutions for the leave-one-out fits (e.g. Lasso regression), provide a general parametric bootstrapping approach that well approximates the asymptotic null distribution; (4) investigate error properties, i.e. quality of the error estimates, of the pre-validation procedure.

We start with an introduction to the pre-validation procedure in \cref{sec:recap}, and give a recap of the distribution of the test statistic for pre-validated predictors. In \cref{sec:analytic}, we propose an analytical asymptotic distribution of the test statistics when least squares or ridge regression is applied to the first stage of pre-validation. We also provide a brief example of an application to breast cancer study. In \cref{sec:bootstrap}, we provide a general bootstrapping approach to deal with the situations where the asymptotic distribution is not tractable analytically and permutation test is not valid either. We also discuss a handful of computational tricks to accelerate the bootstrap procedure. In \cref{sec:Simulation}, we display simulation results for both the analytical distribution and the bootstrap distribution. In \cref{sec:GWAS}, we further illustrate the merits of pre-validation procedure and bootstrap-based hypothesis testing, by applying them to a synthetic genome-wide association study (GWAS). We also propose a novel approach powered by pre-validation to estimate heritability, i.e. variance of phenotype that can be explained by genotypes. In \cref{sec:error}, we investigate error properties of the pre-validation procedure and provide empirical evidence of good error estimates. We conclude with some discussion in \cref{sec:discussion}.

\section{Pre-validation} \label{sec:recap}
\subsection{Methodology}
The pre-validation procedure is essentially a combination of the leave-one-out method and two-stage regression. Let $n$ be sample size, $y \in \mathbb{R}^n$ be vector of responses (e.g. whether cancer is benign), $Z \in \mathbb{R}^{n \times p}$ be matrix of internal variables (e.g. genetic expressions), and $X \in \mathbb{R}^{n \times e}$ be matrix of external variables (e.g. clinical features), usually $p \gg n$. The procedure can be described as follows:
\begin{enumerate}
    \item At the first stage, leave out the $i^{th}$ observation each time, $i=1,2,...,n$, fit the regression model on the remaining $n-1$ samples $\{(y_k, z_k)\}_{k\neq i}$, and make a prediction of the $i^{th}$ response $y_{PV,i}$ based on $z_i$.
    \item At the second stage, fit a regression model using external predictors and pre-validated predictors $\{\left(y_i, x_i, y_{PV,i}\right)\}_{i=1}^n$, and conduct inference.
\end{enumerate}

\cref{alg:LOO_PV} shows the above procedure in pseudo-code:
\begin{algorithm}
\caption{Leave-one-out Pre-validation Procedure}\label{alg:LOO_PV}
\begin{algorithmic}
\For{$i$ in $1:n$}
\State $f_{(i)}\gets$ regression model $y_{(i)}\sim Z_{(i)}$ \Comment{Leave out $i^{th}$ observation}
\State $y_{PV,i}\gets f_{(i)}\left(z_i\right)$ \Comment{Predict $i^{th}$ response}
\EndFor
\State
\State $g\gets$ regression model $y \sim \left(X, y_{PV}\right)$ \Comment{Use pre-validated estimates at $2^{nd}$ stage}
\State $t\gets \frac{\hat{\beta}_{PV}}{\hat{\sigma}\left(\hat{\beta}_{PV}\right)}$ from model $g$ \Comment{Test hypothesis}
\end{algorithmic}
\end{algorithm}

Note that using the other $n-1$ samples to train model and make prediction at the first stage is essential in pre-validation: since we  fit the second-stage model on those pre-validated predictions, including $y_i$ in training set for the $i^{th}$ sample would be considered as data-reuse and essentially ``cheating". \citeasnoun{tibshirani2002pre} have shown that re-using prediction fit on the full dataset can lead to serious overfitting problem.

Moreover, there are extra benefits of using leave-one-out regression instead of general cross-validation, which divides the training set into $k$-folds (same as leave-one-out if $k=n$). It is a more efficient use of data and can avoid potential issues caused by insufficient sample size when training the first-stage model.

\subsection{Related Work}
As mentioned above, \citeasnoun{hofling2008study} proposed an asymptotic distribution under null hypothesis in a two-stage linear regression procedure, when internal and external predictors are assumed to be independent.

To be more specific, they considered two cases. In the first case, they assume there are no external variables $X$. Let $H = Z(Z^TZ)^{-1}Z^T$ be the projection matrix of leave-one-out linear regression, $D = \text{diag}\{(H_{ii})_{i=1}^n\}$ be the matrix with diagonal elements of $H$, and $I$ be the $n\times n$ identity matrix. Then the pre-validation procedure can be expressed as:
\begin{equation*}
\begin{cases}
    &y_{PV}= (I-D)^{-1}(H-D)y \\
    &y= y_{PV}\beta_{PV} + \epsilon
\end{cases}
\end{equation*}
where $\epsilon \sim \mathcal{N}(0, \sigma^2I)$, and the hypothesis of interest is $H_0:\beta_{PV} = 0$. They have shown that a larger set of internal variables, i.e. a larger $p$, results in a greater departure of asymptotic distribution from normal for the t-statistic. In particular, as $n \rightarrow \infty$,
\begin{equation}\label{t_noX}
    t=\frac{\hat{\beta}_{PV}}{\hat{\sigma}\left(\hat{\beta}_{PV}\right)} \xrightarrow{d} \frac{C-p}{\sqrt{C}}, \quad \quad C\sim \chi_p^2
\end{equation}

The second case is an extension to the above, where they allow for external variables $X$ in the second stage of pre-validation. They assume the distribution of $X$ only depends on $y$: $X_{ik} = y_i + \xi_{ik}$, $i=1,...,n$, $k=1,...,e$, where $\xi_{ik} \overset{\text{i.i.d.}}{\sim} N(0,\sigma^2_k)$. Similar to the above, now the procedure becomes:

\begin{equation*}
\begin{cases}
    &y_{PV}= (I-D)^{-1}(H-D)y \\
    &y= y_{PV}\beta_{PV} + X\beta +\epsilon
\end{cases}
\end{equation*}
where $\beta \in \mathbb{R}^e$ is the regression coefficient vector for external variables. They have shown that under the same null hypothesis $H_0:\beta_{PV} = 0$, t-statistic now have asymptotic distribution:
\begin{equation}\label{t_withX}
    t=\frac{\hat{\beta}_{PV}}{\hat{\sigma}\left(\hat{\beta}_{PV}\right)} \xrightarrow{d} \frac{N^TN-p}{\sqrt{N^TN}} - \frac{N^TA(\mathbf{11^T}+Cov(\xi))^{-1}\mathbf{1}}{\sqrt{N^TN} (1-\mathbf{1^T(\mathbf{11^T}+Cov(\xi))^{-1})}\mathbf{1})}
\end{equation}
where $N \sim \mathcal{N}(0,I_p)$, $A= (A_1,...,A_e)$ with $A_k \sim \mathcal{N}(0,\sigma^2_k I_p)$, $\mathbf{1}= (1,...,1)^T \in \mathbb{R}^e$, and $Cov(\xi)= \text{diag}(\sigma^2_1,..., \sigma^2_e)$.

We treat the above analysis as a benchmark for our analytical results in the next section. We first propose a new model formulation that allows for dependence between the internal and external variables, and then continue to discuss models beyond least squares regression.

\section{Analytical Distribution of the Test Statistic} \label{sec:analytic}
We begin with loosening the independence assumption of internal and external variables. Since our goal is to conduct hypothesis testing on the coefficient of pre-validated predictor, our primary interest is the following null model:
\begin{equation}
    \begin{aligned}
        y&= X\beta_0 + \epsilon\\
        Z&= X\Gamma + E,
    \end{aligned}
\end{equation}
where responses $y \in \mathbb{R}^n$, external variables $X \in \mathbb{R}^{n \times e}$, internal variables $Z \in \mathbb{R}^{n \times p}$, correlation between them is captured by $\Gamma \in \mathbb{R}^{e \times p}$, entries of $E \overset{\text{i.i.d.}}{\sim} N(0,\sigma^2_z)$, and entries
of $\epsilon \overset{\text{i.i.d.}}{\sim} N(0,\sigma^2_x)$.

Throughout, we assume the second-stage regression of $y$ on external predictor $X$ and pre-validated predictor $y_{PV}$ incorporates an intercept term. Therefore, $X$ should be augmented by adding a column of all 1’s. Now, the null model can be interpreted as: conditioning on external variables $X$, the internal variables $Z$ are not predictive of $y$.

\subsection{Least Squares Regression in the First Step} \label{sec:linear}
First, we consider the basic model where least squares regression is applied to both stages of pre-validation.

Define $y_{PV,i}$ to be the predicted value of $y_i$ obtained by regressing $y$ on $Z$ after omitting the $i^{th}$ observation, denoted by $Z_{(i)}$ and $y_{(i)}$. For each $1 \leq i \leq n$, let $z_i$ denote the $i^{th}$ row of $Z$. Now the leave-one-out pre-validated estimate $y_{PV,i}$ can be expressed as:
\begin{equation*}
    y_{PV,i} = z_i(Z^T_{(i)} Z_{(i)})^{-1}Z^T_{(i)}y_{(i)} := (I-D)^{-1}(H-D)y_i
\end{equation*}
where $H = Z(Z^TZ)^{-1}Z^T$ is the projection matrix of leave-one-out linear regression, and $D = \text{diag}\{(H_{ii})_{i=1}^n\}$ is the matrix with diagonal elements of $H$.

Let $y_{PV}$ denote the vector of the $y_{PV,i}$’s, $\beta_{PV}$ be the regression coeﬃcient of $y_{PV}$ when we regress $y$ on $y_{PV}$ and $X$:
\begin{equation*}
    y= y_{PV}\beta_{PV} + X\beta +\epsilon
\end{equation*}

Note that since $X$ already has a column of all 1’s, there is naturally an intercept term in the second-stage regression. Let $\Tilde{X} = \begin{bmatrix}
    y_{PV} \\ X
\end{bmatrix}$, a commonly used estimator of the standard deviation of $\hat{\beta}_{PV}$ is given by the square root of the top-left-most entry of the matrix $\sigma_x^2(\Tilde{X}^T \Tilde{X})^{-1}$, which we denote by $\hat{\sigma}\left(\hat{\beta}_{PV}\right)$. We would like to understand the limiting distribution of $t = \frac{\hat{\beta}_{PV}}{\hat{\sigma}\left(\hat{\beta}_{PV}\right)}$ as $n \rightarrow \infty$, under suitable conditions on limiting behavior of $X$.

We propose the following theorem where the signal-to-noise ratio $SNR = \frac{||\beta_0||_2}{\sigma^2_x}$ is fixed, but allowed to be arbitrarily small:
\begin{theorem} \label{thm:linear}
    Let $X'$ be the design matrix, $X$ be the predictor matrix without a column of 1, and $e$ be the number of columns in $X$. Assume that $\sigma^2_x,\sigma^2_z > 0$. Suppose that as $n \rightarrow \infty$, $\sigma^2_x$, $\sigma^2_z$, $e$, $p$, and $\Gamma$ remain fixed, while $\beta_0$ and $X$ vary with $n$ in such a way that \begin{enumerate}
    \item $\sqrt{n}\beta_0$ converges to a vector $\alpha_0 \in \mathbb{R}^e$;
    \item $\frac{1}{n}X^TX$ converges to an positive definite matrix $\Sigma_{e \times e}$;
    \item $\frac{1}{n}\mathbf{1}^TX$ converges to a vector $\Theta_{1 \times e }$, where 1 denotes the vector of all 1’s;
    \item $\max_{1\leq i \leq n}||x_i||= o(n^{1/3})$, where $x_i$ denotes the $i^{th}$ row of $X$ and $||x_i|| $is its Euclidean norm.
\end{enumerate}

Then, as $n \rightarrow \infty$,
\begin{equation*}
t=\frac{\hat{\beta}_{PV}}{\hat{\sigma}\left(\hat{\beta}_{PV}\right)} \xrightarrow[n \rightarrow \infty]{d} L,
\end{equation*}
where $L$ is a random variable defined as follows. Let two matrices:
\begin{equation*}
    B := \begin{bmatrix}
        1&&\Theta\Gamma\\
        \Gamma^T\Theta^T &&\Gamma^T\Sigma\Gamma +\sigma^2_zI
    \end{bmatrix},\quad
    D:= \begin{bmatrix}
        \Theta\\
        \Gamma^T\Sigma
    \end{bmatrix}
\end{equation*}.

Let $(P_0,Q_0)$ be jointly normal random vectors of dimensions $p+1$ and $e$ respectively, with mean zero and joint covariance matrix $\sigma^2_x \begin{bmatrix}
B&&D\\
D^T&&\Sigma \end{bmatrix}$.

Then, \begin{equation} \label{eq:linear}
    L=\frac{(P_0 + D\alpha_0)^T B^{-1}(P_0-D\Sigma^{-1}Q_0)-\sigma_x^2 (p+1)}{\sigma_x\sqrt{(P_0+ D\alpha_0)^T (B^{-1}-B^{-1}D \Sigma^{-1}D^T B^{-1})(P_0 + D\alpha_0)}},
\end{equation}
where the term within square root in the denominator is strictly positive with probability one.
\end{theorem}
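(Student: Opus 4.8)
The plan is to reduce $t$ to a ratio of a bilinear form and a quadratic form in the Gaussian noise, and then identify their joint limit. Write $X'=[\mathbf{1},X]$ for the (intercept-augmented) external design and $M=I-X'(X'^TX')^{-1}X'^T$ for its residual maker. The Frisch--Waugh--Lovell theorem gives $\hat{\beta}_{PV}=(y_{PV}^TMy)/(y_{PV}^TMy_{PV})$, and from the stated formula for $\hat{\sigma}(\hat{\beta}_{PV})$ together with the partitioned inverse (the top-left entry of $(\Tilde{X}^T\Tilde{X})^{-1}$ equals $1/(y_{PV}^TMy_{PV})$), one gets $\hat{\sigma}(\hat{\beta}_{PV})=\sigma_x/\sqrt{y_{PV}^TMy_{PV}}$. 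Since under the null $My=M\epsilon$ (because $X$ lies in the column space of $X'$), this yields the representation $t=\dfrac{y_{PV}^TM\epsilon}{\sigma_x\sqrt{y_{PV}^TMy_{PV}}}$, so it suffices to find the joint limiting law of the numerator $N:=y_{PV}^TM\epsilon$ and the denominator term $V:=y_{PV}^TMy_{PV}$.

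The first key step is to linearize $y_{PV}$. Writing $y_{PV}=(I-D_H)^{-1}(H-D_H)y$ with $H$ the first-stage hat matrix and $D_H=\mathrm{diag}(H)$, I would replace $y_{PV}$ by $Hy$ and control the remainder through the leverages. The leading term $Hy$ is a rank-$(p+1)$ object, while the diagonal correction, once paired with $\epsilon$, contributes a purely deterministic shift: the quantity $\epsilon^TD_H(I-D_H)^{-1}M\epsilon$ concentrates at $\sigma_x^2\,\mathrm{tr}(H)=\sigma_x^2(p+1)$ and enters $N$ with a negative sign, producing the centering term $-\sigma_x^2(p+1)$. The engine here is a law of large numbers for the diagonal quadratic form $\sum_i H_{ii}\epsilon_i^2$, whose variance is $O\!\big(\sum_i H_{ii}^2\big)\le \max_i H_{ii}\cdot\mathrm{tr}(H)\to0$; establishing $\max_i H_{ii}\to0$ (and the analogous smallness that makes $(I-D_H)^{-1}\approx I$) from the norm bound in condition~4 is the main technical lemma behind this reduction.

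Because $H$ has fixed rank $p+1$, the remaining random parts of $N$ and $V$ collapse to functions of a few low-dimensional objects. Writing the first-stage design as $[\mathbf{1},Z]$, one has $N\approx(\tfrac{1}{\sqrt n}[\mathbf{1},Z]^Ty)^T(\tfrac1n[\mathbf{1},Z]^T[\mathbf{1},Z])^{-1}(\tfrac{1}{\sqrt n}[\mathbf{1},Z]^TM\epsilon)$ and similarly for $V$, so everything is governed by the scores $\tfrac{1}{\sqrt n}[\mathbf{1},Z]^T\epsilon$ and $\tfrac{1}{\sqrt n}X^T\epsilon$, the signal shift $\tfrac{1}{\sqrt n}[\mathbf{1},Z]^TX\beta_0\to D\alpha_0$, and the Gram matrices $\tfrac1n[\mathbf{1},Z]^T[\mathbf{1},Z]\to B$, $\tfrac1n[\mathbf{1},Z]^TX\to D$, $\tfrac1nX^TX\to\Sigma$. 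To get the joint limit I would condition on $Z$ (equivalently on $E$): the scores are then \emph{exactly} Gaussian, with covariances equal to the above Gram matrices, and since the entries of $E$ are i.i.d.\ these Gram matrices converge in probability to $B,D,\Sigma$ under the moment conditions 2--4. Slutsky then yields $\big(\tfrac{1}{\sqrt n}[\mathbf{1},Z]^T\epsilon,\ \tfrac{1}{\sqrt n}X^T\epsilon\big)\xrightarrow{d}(P_0,Q_0)$ with the stated covariance $\sigma_x^2\left[\begin{smallmatrix}B&D\\ D^T&\Sigma\end{smallmatrix}\right]$.

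Assembling the pieces by the continuous mapping theorem gives $N\to(P_0+D\alpha_0)^TB^{-1}(P_0-D\Sigma^{-1}Q_0)-\sigma_x^2(p+1)$ and $V\to(P_0+D\alpha_0)^T(B^{-1}-B^{-1}D\Sigma^{-1}D^TB^{-1})(P_0+D\alpha_0)$, whence $t\xrightarrow{d}L$. Here the partialling-out of the external design, after folding the intercept into the first-stage design, reproduces $B,D,\Sigma$ through the Schur-complement identity $B-D\Sigma^{-1}D^T=\mathrm{diag}\!\big(1-\Theta\Sigma^{-1}\Theta^T,\ \sigma_z^2 I_p\big)$; this same identity makes the denominator's quadratic form manifestly positive semidefinite, and the $\sigma_z^2 I_p$ block together with the nondegeneracy of the Gaussian $P_0+D\alpha_0$ (its covariance $\sigma_x^2B$ is positive definite) forces strict positivity with probability one. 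I expect the main obstacle to be the \emph{joint} convergence rather than the marginals: $N$ and $V$ are quadratic in the same $\epsilon$ and share the same random $H$, so one must track their limits together and, simultaneously, cleanly separate the deterministic trace $\sigma_x^2(p+1)$ from the random rank-$(p+1)$ part --- all of which hinges on the uniform leverage control supplied by condition~4.
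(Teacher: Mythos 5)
Your proposal follows essentially the same route as the paper's proof (given in \cref{appendix:ridge} for the ridge case, of which this theorem is the $\kappa=0$ instance): the block-inverse/FWL reduction of $t$ to $\tilde{y}^T M\epsilon/(\sigma_x\sqrt{\tilde{y}^T M\tilde{y}})$, replacement of the leave-one-out fits by the full-data projection with the diagonal leverage correction concentrating at $\sigma_x^2\operatorname{tr}(H)=\sigma_x^2(p+1)$ under condition~4, joint Gaussian convergence of the scores $(n^{-1/2}\tilde{Z}^T\epsilon,\,n^{-1/2}X^T\epsilon)$ conditional on $E$, and the Schur-complement identity for strict positivity of the denominator. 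The argument is correct and matches the paper's in all essential steps.
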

In practice, one should simply take $\alpha_0 = \sqrt{n}\beta_0$, $\Sigma = \frac{1}{n}X^TX$, $\Theta=\frac{1}{n}\mathbf{1}^TX$ if the parameters of the problem are known, or use plug-in estimators if they are unknown. \autoref{fig:lsX} plots the quantiles of standard Normal and our proposed distribution against the quantiles of test statistics from simulation, under model with external predictors. It shows that our proposed distribution gives a better fit to the distribution of test statistics, as the quantile-quantile plot matches the $45^{\circ}$ line better. More simulation results with different configurations of SNR and correlation between $X$ and $Z$ are presented in \cref{sec:Simulation}.

\begin{figure}[H]
     \centering
     \begin{subfigure}[b]{0.48\textwidth}
         \centering
         \includegraphics[width=\textwidth]{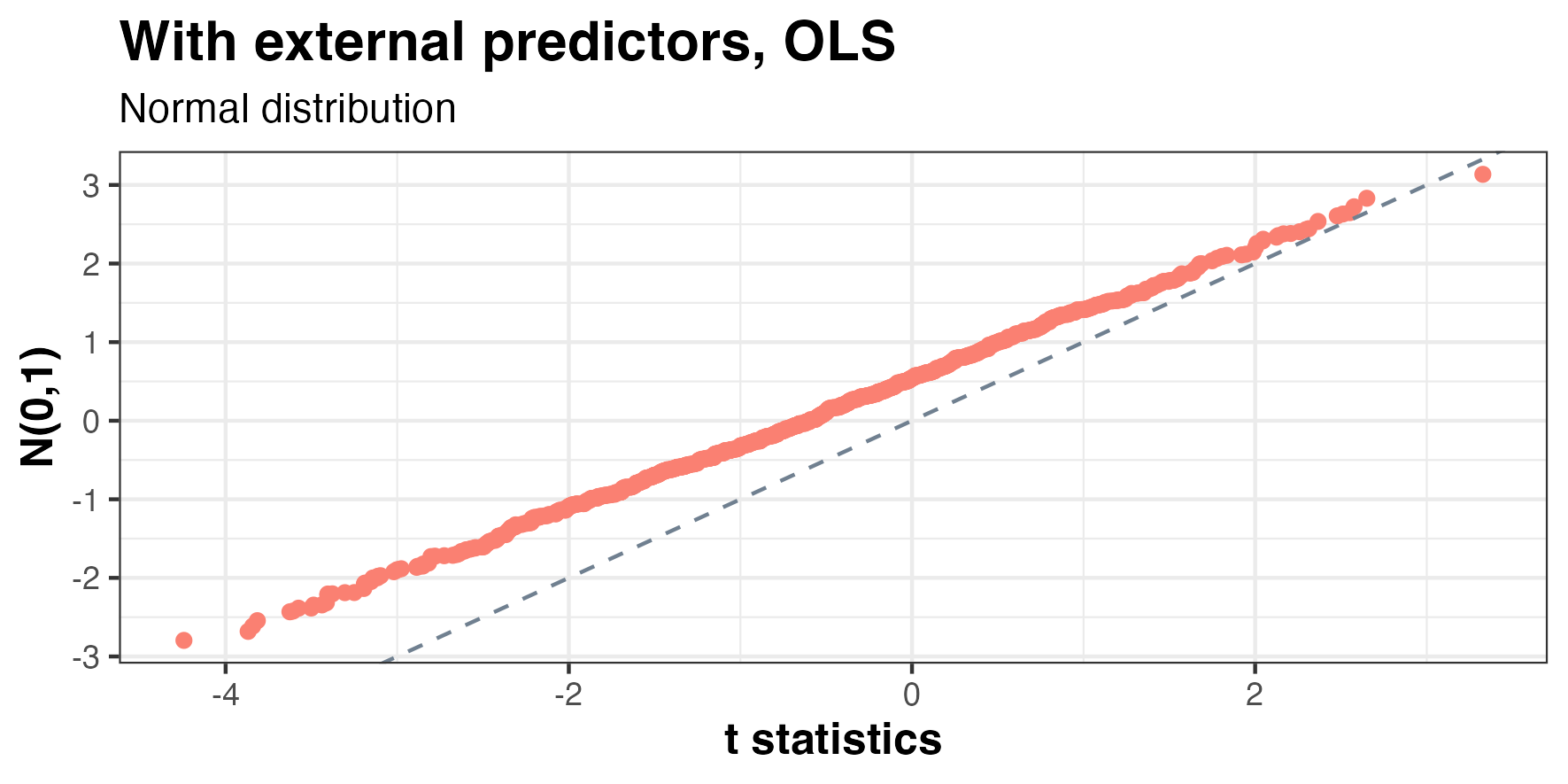}
         \caption{Normal Distribution}
     \end{subfigure}
     \begin{subfigure}[b]{0.48\textwidth}
         \centering
         \includegraphics[width=\textwidth]{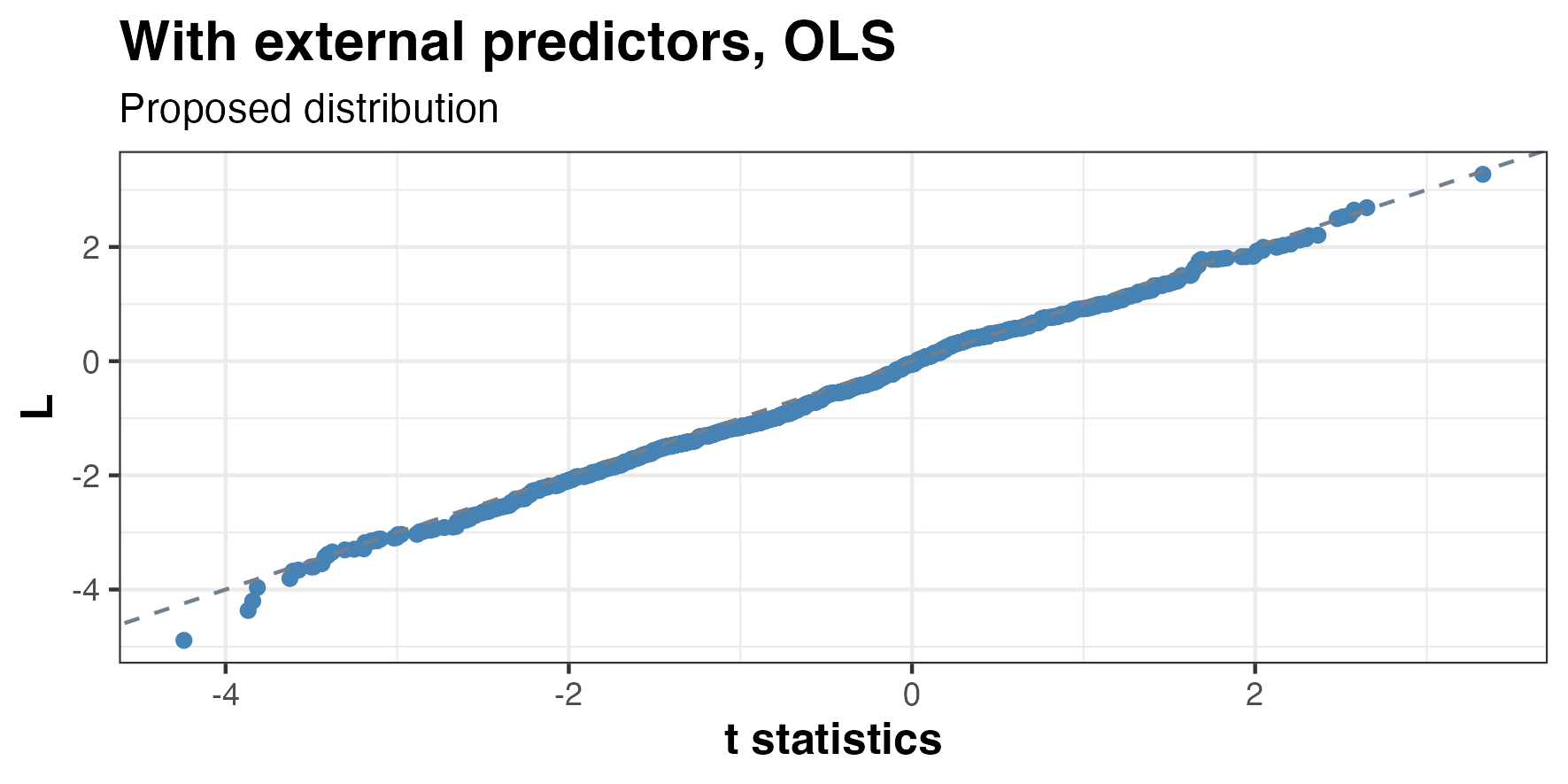}
         \caption{Proposed Distribution}
     \end{subfigure}
     \hfill
     \caption{\em{Comparison with Different Distributions under Model with External Predictors}}
    \label{fig:lsX}
\end{figure}

Comparing \eqref{eq:linear} of our analysis and the results from \citeasnoun{hofling2008study}, we have the following corollary, with proof attached in \cref{appendix:equivalence}. Note that we did not directly compare \eqref{eq:linear} and \eqref{t_withX}, i.e. when there are external predictors, because the generating model is modified to account for correlation between internal and external predictors.

\begin{corollary} \label{cor:equivalence}
    If $\Gamma = \mathbf{0}_{e \times p}$, $\Sigma = I_{e\times e}$, $\Theta = \mathbf{0}_{1 \times e}$, $\sigma^2_1 = ... \sigma^2_e = \sigma^2_x = 1$, $\beta_0 = \alpha_0 = \mathbf{0}_{e \times 1}$, then our proposed distribution will be reduced to \eqref{t_noX}.
\end{corollary}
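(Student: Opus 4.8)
The plan is to prove the reduction by brute-force substitution of the listed parameter values into the definition \eqref{eq:linear} of $L$, simplifying the two matrices $B$ and $D$ first and then recognizing the surviving random variable as a standardized chi-squared of the form in \eqref{t_noX}.

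First I would substitute $\Gamma=\mathbf{0}$, $\Theta=\mathbf{0}$, and $\Sigma=I$. Every off-diagonal block of $B$ carries a factor of $\Gamma$ or $\Theta$, so $B$ collapses to the block-diagonal matrix $\text{diag}(1,\sigma_z^2 I_p)$, and, more importantly, $D=\mathbf{0}_{(p+1)\times e}$ since both $\Theta$ and $\Gamma^T\Sigma$ vanish. The vanishing of $D$ does most of the work: it makes $P_0$ and $Q_0$ independent and annihilates every $D$-term in \eqref{eq:linear}. Combined with $\alpha_0=\mathbf{0}$, this reduces the statistic to
\begin{equation*}
L=\frac{P_0^T B^{-1}P_0-\sigma_x^2(p+1)}{\sigma_x\sqrt{P_0^T B^{-1}P_0}}.
\end{equation*}

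Next I would read off the law of the quadratic form. Because $D=\mathbf{0}$, the covariance of $P_0$ is exactly $\sigma_x^2 B$, so the whitening transform gives $B^{-1/2}P_0/\sigma_x\sim\mathcal{N}(0,I_{p+1})$ and hence $\sigma_x^{-2}P_0^T B^{-1}P_0\sim\chi^2_{p+1}$. Notably the internal-noise variance $\sigma_z^2$ is whitened away and never appears in the limit, which is why the corollary does not need to pin it down. Writing $C:=P_0^T B^{-1}P_0$ and setting $\sigma_x^2=1$ then gives $L=\frac{C-(p+1)}{\sqrt{C}}$, which is precisely the functional form of \eqref{t_noX}.

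The step I expect to need the most care is the degrees-of-freedom bookkeeping. The computation yields $\chi^2_{p+1}$ rather than $\chi^2_p$, and the extra unit is not spurious: the top-left $1$ of $B$ is the limit of $\tfrac{1}{n}\mathbf{1}^T\mathbf{1}$ for the intercept column that this paper always appends to $X$, whereas the bare no-$X$ model behind \eqref{t_noX} carries no intercept. So the clean statement of the reduction is that $L$ acquires the same standardized-chi-squared shape as \eqref{t_noX} with degrees of freedom equal to $\dim(P_0)$, and matching the index to $p$ exactly is a matter of correctly attributing (or discarding) this intercept dimension. This is the one place where I would reconcile conventions with Höfling's setup before declaring the two expressions identical.
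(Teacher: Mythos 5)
Your proposal is correct and follows essentially the same route as the paper's own proof in \cref{appendix:equivalence}: substitute the parameter values to get $D=\mathbf{0}$ and block-diagonal $B$, whiten $P_0$ to obtain $P_0^TB^{-1}P_0/\sigma_x^2\sim\chi^2_{p+1}$, and attribute the $p+1$ versus $p$ discrepancy to the appended intercept column. The only quibble is that the extra dimension comes from the intercept column adjoined to $Z$ (the top-left entry of $B$ is the limit of $\tfrac{1}{n}\mathbf{1}^T\mathbf{1}$ in $\tfrac{1}{n}\tilde{Z}^T\tilde{Z}$), not the one adjoined to $X$, but this does not affect the argument.
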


\subsection{Ridge Regression in the First Step} \label{sec:ridge}
Now we consider the case when $l_2$-penalty is applied to the first-stage of pre-validation. The motivation of doing so is to achieve coefficent shrinkage on the high-dimensional internal variable set $Z$ and reduce variance. Fortunately, an analytical result can still be derived here, since the closed form solution to ridge regression is well-known \cite**{James2013}. Let $\lambda$ be the penalty parameter of ridge regression, and $\gamma_{(i)}$ be the coefficient of leave-one-out regression, then at the first stage we are solving:
\begin{equation*}
    \min_{\gamma_{(i)} \in \mathbf{R^p}} ||y_{(i)} - Z_{(i)}\gamma_{(i)}||^2_2 + \lambda ||\gamma_{(i)}||^2_2
\end{equation*}
where $Z_{(i)}$ and $y_{(i)}$ are obtained by omitting the $i^{th}$ row of $Z$ and $y$.

Now let $z_i$ denote the $i^{th}$ row of $Z$, then the pre-validated predictor of $y_i$ from ridge  regression can be written as:
\begin{equation*}
    y_{PV,i} = z_i(Z^T_{(i)} Z_{(i)} + \lambda I)^{-1}Z^T_{(i)}y_{(i)}
\end{equation*}

As before, we apply linear regression of $y$ on $X$ and $y_{PV}$ at the second stage of pre-validation. We propose the following asymptotic distribution for the t-statistic of pre-validated predictor:

\begin{theorem}\label{thm:ridge}
    Let $X'$ be the design matrix, $X$ be the predictor matrix without a column of 1, and $e$ be the number of columns in $X$. Assume that $\sigma^2_x,\sigma^2_z > 0$. Suppose that as $n \rightarrow \infty$, $\sigma^2_x$, $\sigma^2_z$, $e$, $p$, and $\Gamma$ remain fixed, while $\lambda$, $\beta_0$ and $X$ vary with $n$ in such a way that \begin{enumerate}
    \item $\frac{\lambda}{n}$ converges to a number $\kappa \geq 0$;
    \item $\sqrt{n}\beta_0$ converges to a vector $\alpha_0 \in \mathbb{R}^e$;
    \item $\frac{1}{n}X^TX$ converges to an positive definite matrix $\Sigma_{e \times e}$;
    \item $\frac{1}{n}\mathbf{1}^TX$ converges to a vector $\Theta_{1 \times e }$, where 1 denotes the vector of all 1’s;
    \item $\max_{1\leq i \leq n}||x_i||= o(n^{1/3})$, where $x_i$ denotes the $i^{th}$ row of $X$ and $||x_i|| $is its Euclidean norm.
\end{enumerate}

Then, as $n \rightarrow \infty$,
\begin{equation*}
t=\frac{\hat{\beta}_{PV}}{\hat{\sigma}\left(\hat{\beta}_{PV}\right)} \xrightarrow[n \rightarrow \infty]{d} L,
\end{equation*}
where $L$ is a random variable having the following description. Define two matrices:
\begin{equation*}
    B := \begin{bmatrix}
        1+\kappa&&\Theta\Gamma\\
        \Gamma^T\Theta^T &&\Gamma^T\Sigma\Gamma +(\sigma^2_z+\kappa)I
    \end{bmatrix},\quad
    D:= \begin{bmatrix}
        \Theta\\
        \Gamma^T\Sigma
    \end{bmatrix}
\end{equation*}.

Let $(P_0,Q_0)$ be jointly normal random vectors of dimensions $p+1$ and $e$ respectively, with mean zero and joint covariance matrix $\sigma^2_x \begin{bmatrix}
B&&D\\
D^T&&\Sigma \end{bmatrix}$.

Then, \begin{equation} \label{eq:ridge}
    L=\frac{(P_0 + D\alpha_0)^T B^{-1}(P_0-D\Sigma^{-1}Q_0)-\sigma_x^2 (p+1) + \sigma^2_x\kappa \text{Tr}(B^{-1})}{\sigma_x\sqrt{(P_0+ D\alpha_0)^T (B^{-1}-B^{-1}D \Sigma^{-1}D^T B^{-1})(P_0 + D\alpha_0)}},
\end{equation}
where the term within square root in the denominator is strictly positive with probability one.
\end{theorem}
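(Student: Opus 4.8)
The plan is to run the same scheme as the proof of \cref{thm:linear}, tracking how the extra quantity $\lambda/n\to\kappa$ propagates; since $\kappa=0$ must recover \eqref{eq:linear}, the ridge argument is a one-parameter deformation of the least-squares one. First I would put $t$ in closed form. Writing $X'=[\mathbf 1,X]$ for the augmented design and $M=I-X'(X'^TX')^{-1}X'^T$ for the projection onto its orthogonal complement, Frisch--Waugh--Lovell applied to the second-stage regression of $y$ on $[y_{PV},X']$ gives
\begin{equation*}
\hat\beta_{PV}=\frac{y_{PV}^TMy}{y_{PV}^TMy_{PV}},\qquad \hat\sigma(\hat\beta_{PV})=\frac{\sigma_x}{\sqrt{y_{PV}^TMy_{PV}}},\qquad t=\frac{y_{PV}^TMy}{\sigma_x\sqrt{y_{PV}^TMy_{PV}}},
\end{equation*}
where the middle identity is the top-left entry of $\sigma_x^2(\tilde X^T\tilde X)^{-1}$ via block inversion. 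Because $MX'=0$ and $y=X\beta_0+\epsilon$, we have $My=M\epsilon$, which removes the signal from the right factor of the numerator and lets the weak signal $\beta_0=O(n^{-1/2})$ enter only through $y_{PV}$. The one object that differs from \cref{thm:linear} is $y_{PV}$: the ridge leave-one-out identity (Sherman--Morrison applied to $Z_{(i)}^TZ_{(i)}+\lambda I$, exactly as in the unpenalized case) yields $y_{PV}=(I-D_\lambda)^{-1}(H_\lambda-D_\lambda)y$ with $H_\lambda=Z(Z^TZ+\lambda I)^{-1}Z^T$ and $D_\lambda=\text{diag}(H_\lambda)$. So the entire proof is the Theorem 1 computation with $H\mapsto H_\lambda$.

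\textbf{Asymptotic scaffolding.} Next I would collect the deterministic limits. Using $Z=X\Gamma+E$, the law of large numbers for the $p$ fixed columns of $E$, and assumptions 1--5, one gets $\tfrac1n(Z^TZ+\lambda I)\to G:=\Gamma^T\Sigma\Gamma+(\sigma_z^2+\kappa)I$, which is precisely the lower-right block of $B$; the shift from $\sigma_z^2 I$ to $(\sigma_z^2+\kappa)I$ is the first place $\kappa$ enters. The assumption $\max_i\|x_i\|=o(n^{1/3})$ forces $\max_i H_{\lambda,ii}=o(n^{-1/3})\to0$, so $D_\lambda\to0$ uniformly and $(I-D_\lambda)^{-1}=I+D_\lambda+\cdots$ may be expanded with vanishing remainder; to leading order $y_{PV}=H_\lambda y$ plus controllable corrections. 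Substituting $y=X\beta_0+\epsilon$ and using $Z^TM=E^TM$ (again from $MX=0$), numerator and denominator reduce to bilinear and quadratic forms in $\epsilon$ whose random kernel is governed by $Z$. The Gaussian vectors of the theorem then arise as the joint weak limits $P_0=\lim\tfrac1{\sqrt n}[\mathbf 1,Z]^T\epsilon\in\mathbb R^{p+1}$ and $Q_0=\lim\tfrac1{\sqrt n}X^T\epsilon\in\mathbb R^{e}$, with covariances $\sigma_x^2\lim\tfrac1n[\mathbf 1,Z]^T[\mathbf 1,Z]$, $\sigma_x^2\Sigma$, and cross-covariance $\sigma_x^2\lim\tfrac1n[\mathbf 1,Z]^TX=\sigma_x^2 D$; the weak signal contributes the deterministic drift $D\alpha_0$ through $\tfrac1{\sqrt n}Z^TX\beta_0\to\Gamma^T\Sigma\alpha_0$ and $\tfrac1{\sqrt n}\mathbf 1^TX\beta_0\to\Theta\alpha_0$.

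\textbf{The two ridge-specific terms.} Two features distinguish \eqref{eq:ridge} from \eqref{eq:linear}, and I would isolate them. First, wherever the unpenalized proof inverts the Gram matrix of $[\mathbf 1,Z]$, the penalized proof inverts its $\kappa$-shifted version, so every occurrence of the least-squares matrix is replaced by $B=B_{\text{LS}}+\kappa I_{p+1}$ (consistent with the block $G$ above and with the top-left entry $1+\kappa$). Second, the conditional mean of the numerator's quadratic form, $\sigma_x^2\,\text{Tr}(MH_\lambda)$, is no longer the least-squares value $\sigma_x^2(p+1)$: ridge shrinkage lowers the effective degrees of freedom, and a trace computation built from $\text{Tr}(H_\lambda)=p-\lambda\,\text{Tr}\big((Z^TZ+\lambda I)^{-1}\big)$ together with the intercept bookkeeping gives $\text{Tr}(MH_\lambda)\to(p+1)-\kappa\,\text{Tr}(B^{-1})$, which is exactly the source of the extra summand $+\sigma_x^2\kappa\,\text{Tr}(B^{-1})$. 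The denominator's Schur-complement form $B^{-1}-B^{-1}D\Sigma^{-1}D^TB^{-1}$ is, as in the least-squares case, the bookkeeping of the projection $M$ removing the $X$-explained part of $y_{PV}$, and is unchanged in structure once $B$ carries the $\kappa$-shift.

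\textbf{Main obstacle.} The crux is the joint weak convergence of the scaled numerator and denominator to the stated functional of $(P_0,Q_0)$. Both are simultaneously (i) quadratic/bilinear in the Gaussian noise $\epsilon$ and (ii) built from a low-rank kernel $H_\lambda$ that is itself random through $E$, with $E\perp\epsilon$. I would handle this by conditioning on $E$ (hence on $Z$ and $H_\lambda$): given $E$, the forms are explicit functions of $\epsilon$, and since $H_\lambda$ has rank at most $p$ the quadratic form $\epsilon^TH_\lambda M\epsilon$ fluctuates at order $O(1)$ rather than concentrating, so it contributes genuine randomness to the limit; one then integrates out $E$ via its own central limit theorem for $\tfrac1{\sqrt n}E^T\epsilon$ and $\tfrac1{\sqrt n}E^TX$, assembling the single joint Gaussian $(P_0,Q_0)$ with covariance $\sigma_x^2\big[\begin{smallmatrix}B&D\\D^T&\Sigma\end{smallmatrix}\big]$. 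The delicate points are establishing this two-stage convergence \emph{jointly} rather than marginally, and controlling the remainders from $y_{PV}=H_\lambda y+\cdots$ and $(I-D_\lambda)^{-1}=I+\cdots$ uniformly in $n$; this is exactly where $\max_i\|x_i\|=o(n^{1/3})$ is used, both to verify a Lindeberg condition and to drive the $D_\lambda$-corrections to zero. The continuous mapping theorem then delivers $t\xrightarrow{d}L$, and almost-sure strict positivity of the denominator's quadratic form follows because $B^{-1}-B^{-1}D\Sigma^{-1}D^TB^{-1}$ is positive definite (a Schur complement of the positive-definite joint covariance) and $P_0+D\alpha_0$ is nondegenerate Gaussian.
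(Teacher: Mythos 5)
Your proposal follows essentially the same route as the paper's proof in \cref{appendix:ridge}: the Frisch--Waugh--Lovell reduction $t=\tilde y^TM\epsilon/(\sigma_x\sqrt{\tilde y^TM\tilde y})$ obtained by block inversion, the limits $\frac1n(\tilde Z^T\tilde Z+\lambda I)\to B$ and $\frac1n\tilde Z^TX\to D$, replacement of the leave-one-out predictor by the full-data ridge predictor plus a degrees-of-freedom correction, the conditional-on-$E$ central limit theorem for $(\frac{1}{\sqrt n}\tilde Z^T\epsilon,\frac{1}{\sqrt n}X^T\epsilon)$, and continuous mapping. Two steps, however, are justified incorrectly as written. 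The first concerns the centering constant: you attribute $-\sigma_x^2(p+1)+\sigma_x^2\kappa\,\mathrm{Tr}(B^{-1})$ to the conditional mean $\sigma_x^2\,\mathrm{Tr}(MH_\lambda)$ of the numerator's quadratic form and assert $\mathrm{Tr}(MH_\lambda)\to(p+1)-\kappa\,\mathrm{Tr}(B^{-1})$. This identity fails in general, since $\mathrm{Tr}(MH_\lambda)=\mathrm{Tr}(H_\lambda)-\mathrm{Tr}(P_{X'}H_\lambda)$ and the subtracted trace converges to $\mathrm{Tr}(B^{-1}D\Sigma^{-1}D^T)\neq 0$. The constant actually arises from comparing the leave-one-out fit with the full-data fit: the LOO smoother has zero diagonal, so passing from $\delta_i=H_i\epsilon_{(i)}$ to the full-data $\delta_i'$ costs $-\sum_iH_{\lambda,ii}\epsilon_i^2\approx-\sigma_x^2\,\mathrm{Tr}(H_\lambda)\to-\sigma_x^2(p+1)+\sigma_x^2\kappa\,\mathrm{Tr}(B^{-1})$, which is the content of the paper's lemma on $(\delta-\delta')^T\epsilon$. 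You land on the right constant, but through the wrong trace; as stated the step would not survive the $\mathrm{Tr}(P_{X'}H_\lambda)$ contribution.

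The second issue is the almost-sure positivity of the denominator. You argue that $B^{-1}-B^{-1}D\Sigma^{-1}D^TB^{-1}$ is positive definite because it is conjugate to a Schur complement of a positive-definite joint covariance. In fact the joint covariance is singular: since the design carries the intercept column, $1-\Theta\Sigma^{-1}\Theta^T=0$, and the paper computes $B-D\Sigma^{-1}D^T=\mathrm{diag}(\kappa,(\sigma_z^2+\kappa)I)$, which is only positive semidefinite and has a zero eigenvalue whenever $\kappa=0$ (in particular in the least-squares specialization). Positivity of the quadratic form must instead be obtained as in the paper, by noting that the last $p$ coordinates of $B^{-1}(P_0+D\alpha_0)$ are nonzero with probability one and that the corresponding block $(\sigma_z^2+\kappa)I$ is positive definite. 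Both repairs are local, but both are needed for the argument to close.
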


Again, in practice one should simply take $\alpha_0 = \sqrt{n}\beta_0$, $\kappa = \frac{\lambda}{n}$, $\Sigma = \frac{1}{n}X^TX$, $\Theta=\frac{1}{n}\mathbf{1}^TX$ if the parameters of the problem are known, or use plug-in estimators if they are unknown. Proof of \cref{thm:ridge} can be found in \cref{appendix:ridge}. Note that \cref{thm:linear} is just a special case where $\lambda = 0$.

As a sanity check, one can verify that \eqref{eq:ridge} reduces to \eqref{eq:linear} when $\kappa = 0$. This can happen in two cases: first, if we do not add any penalty term, then ridge regression reduces to least squares by definition; second, if $\lambda$ is fixed and we let $n \rightarrow \infty$, then the ridge estimates will converge to least squares estimates, as OLS estimates are asymptotically optimal.

\section{Bootstrap Distribution of the Test Statistic} \label{sec:bootstrap}
In the previous section, we derived analytical distributions for the t-statistic when least squares or ridge regression is applied to the first stage of pre-validation. Our derivation is based on the analytical form of the projection matrix for leave-one-out estimates. As one may expect, the nice closed-form solutions are not available in many other settings, such as Lasso regression. Therefore, we  provide a general approach that can accommodate variable selection methods applied to first-stage regression, especially when the asymptotic distribution is not tractable analytically.

We first give details of  a bootstrap method \cite{efron1993introduction} for approximating the distribution of the test statistic $t = \frac{\hat{\beta}_{PV}}{\hat{\sigma}\left(\hat{\beta}_{PV}\right)}$, where $\hat{\beta}_{PV}$ is regression coefficient of the pre-validated predictor, and then introduce some computational tricks to accelerate the process. We  show the performance of this approach on simulated data in \cref{sec:Simulation}, and apply it to GWAS analysis in \cref{sec:GWAS}.

\subsection{Bootstrap Procedure} \label{sec:nonparaboot}
First, we define notations and parameters used in the bootstrap procedure: let $n$ be sample size of the observed dataset $D = (y, X, Z)$, $B$ be the number of random bootstrap samples, and $n_{boot}$ be the sample size of each bootstrap sample $\left\{ \left(y^{*b}, X^{*b}, Z^{*b}\right)\right\}_{b=1}^B$, where each sample is obtained by resampling from the original dataset $D$.

Now, the bootstrap procedure for approximating distribution of $t$ is as follows: first, resample $n$ observations from observed dataset $D$ with replacement, and construct a vector of weights $\mathbf{w}^{*1} =(w_1^{*1},...w_n^{*1})$ counting the occurrences of each observation. For example, if $i^{th}$ observation is selected twice, then $w_i^{*1} = \frac{2}{n}$; if $j^{th}$ observation is never selected, then $w_j^{*1} = 0$. Note that for this particular bootstrap replication, only those observations with non-zero weights will be used, hence we can simply discard the observations with zero weights and use the rest as the first bootstrap sample of size $n_{boot}$. Then, run the two-stage leave-one-out pre-validation on $\left(y^{*1}, X^{*1}, Z^{*1}\right)$, with Lasso applied to the first stage to perform variable selection. At the end of second stage, we  obtain the first bootstrap replication of t-statistic, denoted by $t^{*1}$. Repeat the procedure for $B$ times and eventually obtain $\left\{t^{*b}\right\}_{b=1}^B$. Finally, the $B$ bootstrap replications form an approximation  to the distribution of $t$.

\cref{alg:boot} below gives a more explicit description of the above procedure.
\begin{algorithm}
\caption{Bootstrap for Leave-one-out Pre-validation Estimates}\label{alg:boot}
\begin{algorithmic}
\For{$b$ in $1:B$}
\State $sample \gets$ $n$ numbers sampled from $1,...n$ with replacement
\State $\mathbf{w}^{*b}\gets$ occurrence counts in $sample$ \Comment{Serve as weights in model fitting}
\State $y^{*b}\gets \{y_i:\, w_i^{*b}\neq 0\}$ \Comment{Discard rows with zero weights}
\State $X^{*b}\gets \{x_i:\, w_i^{*b}\neq 0\}$
\State $Z^{*b}\gets \{z_i:\, w_i^{*b}\neq 0\}$
\State
\For{$i$ in $1:n_{boot}$}
\State $f_{(i)}\gets$ weighted Lasso regression on $\left(y^{*b}_{(i)}, Z^{*b}_{(i)}\right)$, with weights $\{w_i:\, w_i^{*b}\neq 0\}$ \\ \Comment{Leave out $i^{th}$ observation}
\State $y^{*b}_{PV,i}\gets f_{(i)}\left(z^{*b}_i\right)$ \Comment{Predict $i^{th}$ response}
\EndFor
\State
\State $g^{*b}\gets$ weighted linear regression on $\left(y^{*b}, X^{*b}, y^{*b}_{PV}\right)$ \Comment{Use pre-validated estimates}
\State $t^{*b}\gets \frac{\hat{\beta}^{*b}_{PV}}{\hat{\sigma}\left(\hat{\beta}^{*b}_{PV}\right)}$ from model $g^{*b}$ \Comment{Obtain bootstrap replications}
\EndFor
\end{algorithmic}
\end{algorithm}

\subsection{Parameter Pre-training and the ALO Estimator}
In \cref{alg:boot}, notice that we would have to run $n_{boot}$ times of \texttt{cv.glmnet()} within each bootstrap sample. This can be computationally demanding when either of $n$ or $B$ is large. Therefore, we now introduce a more efficient way to compute the leave-one-out pre-validation estimates when the Lasso regression is used in the first stage. The method involves two steps, detailed in the following subsections.

\subsubsection{Pre-training Lasso Penalty Parameter} \label{sec:pretrain}
First, instead of using \texttt{cv.glmnet()} \cite{Friedman2010-rt} for each bootstrap sample, which computes a sequence of Lasso estimates according to different penalty parameters $\lambda$ in a parallel manner, pre-training the penalty parameter with an observed dataset and passing on to \texttt{glmnet()} in each bootstrap sample would accelerate the procedure. In a simulation setting, it can be further reduced to only running \texttt{cv.glmnet()} once with one simulated sample, and pass on to all the following simulations and bootstrap runs.

In addition to the pre-training, we also want to give some recommendations on how to pick the penalty parameter. Rather than the commonly used $lambda.min$ or $lambda.1se$, we would encourage a more conservative choice of $\lambda$. The rationale behind this is that \texttt{cv.glmnet()} would allow for a diverging $\lambda$, especially when the actual reduction of regression MSE brought by the growing $\lambda$ is trivial. In other words, the unbounded growth of $\lambda$ would not benefit the model performance anymore.

Now in the asymptotic regime, this could cause serious issues, since we should expect linear model to have the optimal performance, i.e. $\lambda \rightarrow 0$, as $n \rightarrow \infty$. Therefore, our recommendation would be to choose the smallest possible $\lambda$ such that the increase of regression MSE is within an acceptable margin of the MSE under $lambda.min$. Note that here we define "acceptable" loosely. In fact, it could be 1 standard error, 10\% of the minimal MSE, or any other reasonable rule by choice.

\subsubsection{Approximate Leave-One-Out (ALO) Estimator for Lasso Estimates}\label{sec:approx}
Second, the Lasso estimates for $y^{*b}_{PV}$ from \texttt{glmnet()} would be costly to compute, when either sample size $n$ or number of bootstrap iterations $B$ is large. Instead, we could use the \textit{Approximate Leave-One-Out} (ALO) estimator for Lasso estimates.

\citeasnoun**{auddy2024approximate} proved a theoretical error bound for the ALO estimator using regression with $l_1$ regularizers. One can show that under Gaussian model, the ALO estimator has the following form:
\begin{equation}\label{eq:ALO_est}
\begin{aligned}
    y^{*b}_{PV,i} &\approx  \frac{\hat{y}^{*b}_i - H_{ii}\, y_i}{1 - H_{ii}}\\
     H &= Z_{\mathcal{S}} (Z_{\mathcal{S}}^T Z_{\mathcal{S}})^{-1} Z_{\mathcal{S}}^T
\end{aligned}
\end{equation}
where $y^{*b}_{PV,i}$ is the leave-one-out Lasso estimate, $\hat{y}^{*b}_i$ is the Lasso estimate without leave-one-out cross fitting and $\mathcal{S} = \{i:\hat{\beta}_i\neq 0\}$ is the active set of regular Lasso estimates, both only need to be computed once for each bootstrap sample, and $H$ is the approximate Lasso operator. By using the ALO estimator, we could avoid running $n_{boot}$ leave-one-out regressions for each bootstrap sample. 

We show performance of the ALO estimator by simulation. \autoref{fig:compare} compares the pre-validated Lasso estimates approximated by the ALO estimator and those obtained directly from running $\texttt{glmnet()}$. It shows that the two quantities are on an exact $45^{\circ}$ line, which implies that the ALO estimator produces good approximations.

\begin{figure}[H]
    \centering
    \includegraphics[width=0.5\textwidth]{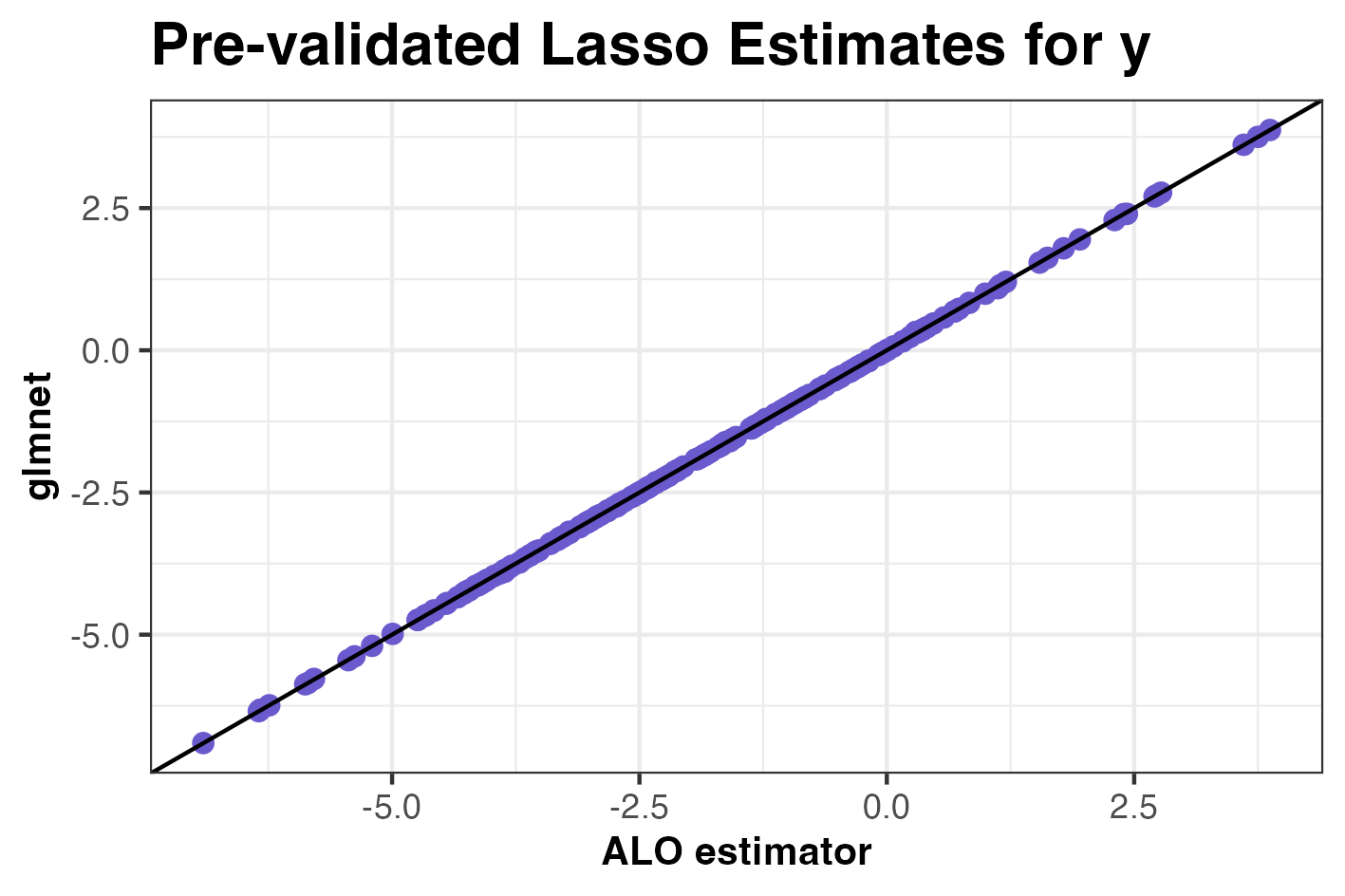}
    \caption{\em{Comparison of Pre-validation Lasso Estimates\\ Obtained by Different Methods}}
    \label{fig:compare}
\end{figure}

\subsection{Hypotheses Testing and the Parametric Bootstrap}
Now that we can apply the bootstrap procedure introduced in \cref{sec:nonparaboot} to obtain an approximate distribution of the test statistic for $H_0: \beta_{PV}=0$. In particular, we would like to investigate the false discovery rate, i.e. probability of rejecting null hypothesis when the null is true, and the power, i.e. probability of rejecting null hypothesis when the alternative is true. We consider the following alternative model: 
\begin{equation} \label{eq:altmodel}
    \begin{aligned}
    y&= X\beta_0 + Z\phi + \epsilon\\
    Z&= X\Gamma + E,
    \end{aligned}
\end{equation}
where $\phi \in \mathbb{R}^p$, $\phi \neq \mathbf{0}$. The alternative model implies that conditioning on external variables $X$, the internal variables $Z$ are still predictive of response $y$. Notice that we did not impose any further restriction on the structure of $(X,Z)$ here, for instance independency.

One might have the concern whether strong multi-collinearity would make the test invalid. Here, we show heuristically that it would still be an appropriate alternative model, even when $X$ and $Z$ are linearly correlated. We can write \eqref{eq:altmodel} as:
\begin{equation*}
    \begin{aligned}
        y&= X\beta_0 + Z\phi + \epsilon\\
        &= X\beta_0 + (X\Gamma + E)\phi + \epsilon \\
        &= X(\beta_0+\Gamma\phi)+E\phi+ \epsilon \\
        &:= X\Tilde{\beta_0}+\Tilde{Z}\phi+ \epsilon 
    \end{aligned}
\end{equation*}
then we have the newly defined $\Tilde{Z} = E$ independent of $X$. This shows that the test always focuses on the information of $Z$ that $X$ does not share. In fact, the structure of $(X,Z)$ would not impact the procedure.

However, another main concern here is whether conducting hypothesis testing based on the bootstrap distribution would have any power. Intuitively, since the bootstrap procedure aims to replicate the true alternative distribution when the data come from the alternative, then power of the test could be a serious issue. \autoref{fig:naive_boot} confirms that this is a valid concern: the bootstrap distribution generated from an alternative dataset is very close to the true alternative distribution. This could lead to bad power, since it would be hard to reject any hypothesis when data is from a true alternative.

\begin{figure}[H]
    \centering
    \includegraphics[width=1\textwidth]{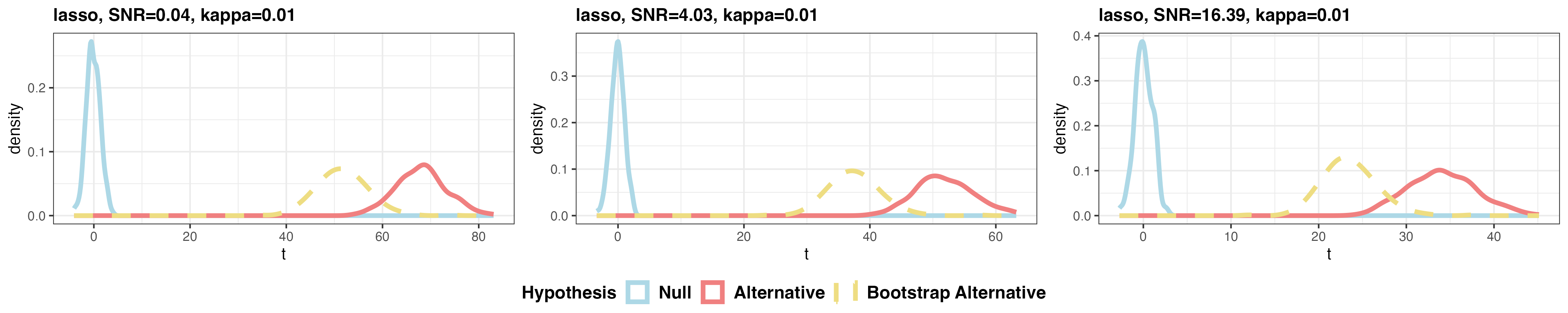}
    \caption{\em{Comparison of True Null, True Alternative, and Bootstrap Distribution Generated from Alternative Data:} \small the bootstrap distribution generated from an alternative dataset is very close to the true alternative distribution, which could lead to bad power.}
    \label{fig:naive_boot}
\end{figure}

To fix this problem, we propose a parametric bootstrap --- a variation of \cref{alg:boot} that can generate a bootstrap distribution replicating the true null, when the data comes from an alternative. The procedure is as follows: first apply linear regression of $y$ on $X$, obtain the regression coefficient $\hat{\beta}$ and compute the residual vector $\xi = y-X\hat{\beta}$; then sample from $\xi$ with replacement to obtain a bootstrap sample $y^{*b} = X\hat{\beta} + \xi^{*b}$, while keep $(X,Z)$ fixed; lastly proceed the pre-validation procedure with $(y^{*b}, X, Z)$. A detailed pseudo-code can be found in \cref{alg:paraboot}.

\begin{algorithm}
\caption{Parametric Bootstrap for Leave-one-out Pre-validation Estimates}\label{alg:paraboot}
\begin{algorithmic}
\For{$b$ in $1:B$}
\State $\hat{\beta}\gets$ regression coefficient of $y$ on $X$ \Comment{Using observed dataset}
\State $\xi \gets y-X\hat{\beta}$ \Comment{Regression residuals}
\State $\xi^{*b}_1, ...\xi^{*b}_n \gets$ sampling $n$ observations from $\xi$ with replacement
\State $y^{*b}_i\gets x_i^T\hat{\beta} + \xi^{*b}_i$
\Comment{Keep $(X,Z)$ fixed}
\State
\For{$i$ in $1:n_{boot}$}
\State $f_{(i)}\gets$ Lasso penalized linear regression on $\left(y^{*b}_{(i)}, Z_{(i)}\right)$ \Comment{Leave out $i^{th}$ observation}
\State $y^{*b}_{PV,i}\gets f_{(i)}\left(z_i\right)$ \Comment{Predict $i^{th}$ response}
\EndFor
\State
\State $g^{*b}\gets$ linear regression on $\left(y^{*b}, X, y^{*b}_{PV}\right)$ \Comment{Use pre-validated estimates at $2^{nd}$ stage}
\State $t^{*b}\gets \frac{\hat{\beta}^{*b}_{PV}}{\hat{\sigma}\left(\hat{\beta}^{*b}_{PV}\right)}$ from model $g^{*b}$ \Comment{Obtain bootstrap replications}
\EndFor
\end{algorithmic}
\end{algorithm}

Note that we no longer need to construct a weight vector based on occurrence counts in parametric bootstrap. \cref{alg:paraboot} guarantees type-I error control, i.e. controls false discovery rate, because it still replicates the null when the data comes from a null distribution.

\cref{alg:paraboot} also provides a valid bootstrap confidence interval, following the theory for bootstrap (see e.g. \citeasnoun{efron1993introduction}). Simulation results computing false discovery rate and power are given in \cref{sec:Simulation}.

\subsection{Example: Breast Cancer Study}

In this section, we apply the proposed bootstrap procedure to inference of microarray features in breast cancer dataset \cite**{breastCancerNKI}. The gene expression data are from a breast cancer study published by \citeasnoun**{van2002gene} and \citeasnoun**{doi:10.1056/NEJMoa021967}. The dataset contains 295 patients with stage I or II breast cancer and were younger than 53 years old; 151 had lymph-node-negaitve disease, and 144 had lymph-node-positive disease.

Both studies used DNA microarray analysis and applied supervised classification to identify a gene expression signature strongly predictive of a short interval to distant metastases (e.g. `poor prognosis’ signature) in patients, without tumour cells in local lymph nodes at diagnosis (e.g. lymph node negative). They showed the gene-expression profile is a more powerful predictor of the disease outcomes in young patients with breast cancer than the currently used clinical parameters. 

Now, we would like to compare two procedures: pre-validation and fitting the first-stage model with all data instead (essentially data-reuse). We compare the z-score, p-value and mis-classification rate on a held-out test set. Furthermore, we would like to compare p-value computed by \texttt{glm()} function with the one from bootstrap procedure. To reduce computational burden, we first pick 1000 out of 24481 gene expression with highest variance to fit the first-stage model. Since the response is binary (positive or negative), we use the logit-transformed fitted probability from first stage as the microarray predictor, then fit a logistic regression at the second stage. We are interested in comparing the microarray predictor to a set of clinical predictors, including tumor size, patient age, estrogen receptor (ER) status and tumor grade. \autoref{fig:modelreuse} and \autoref{fig:modelpv} summarize the results.

\begin{table}[H]
\centering
    \includegraphics[width=0.55\textwidth]{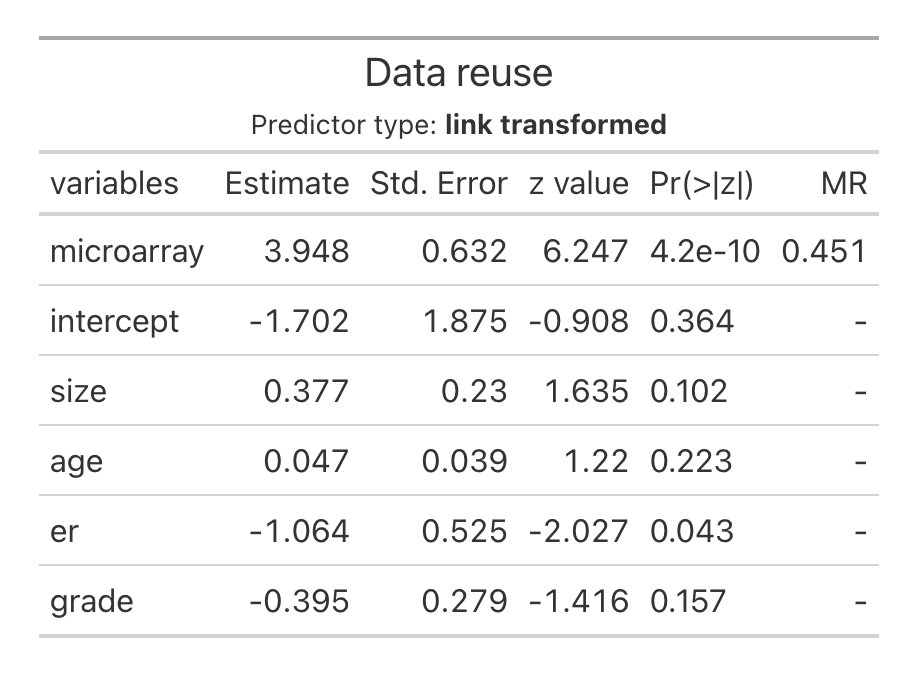}
    \caption{\em{Model Summary for Data Reuse}}
    \label{fig:modelreuse}
\end{table}
\begin{table}[H]
\centering
   \includegraphics[width=1\textwidth]{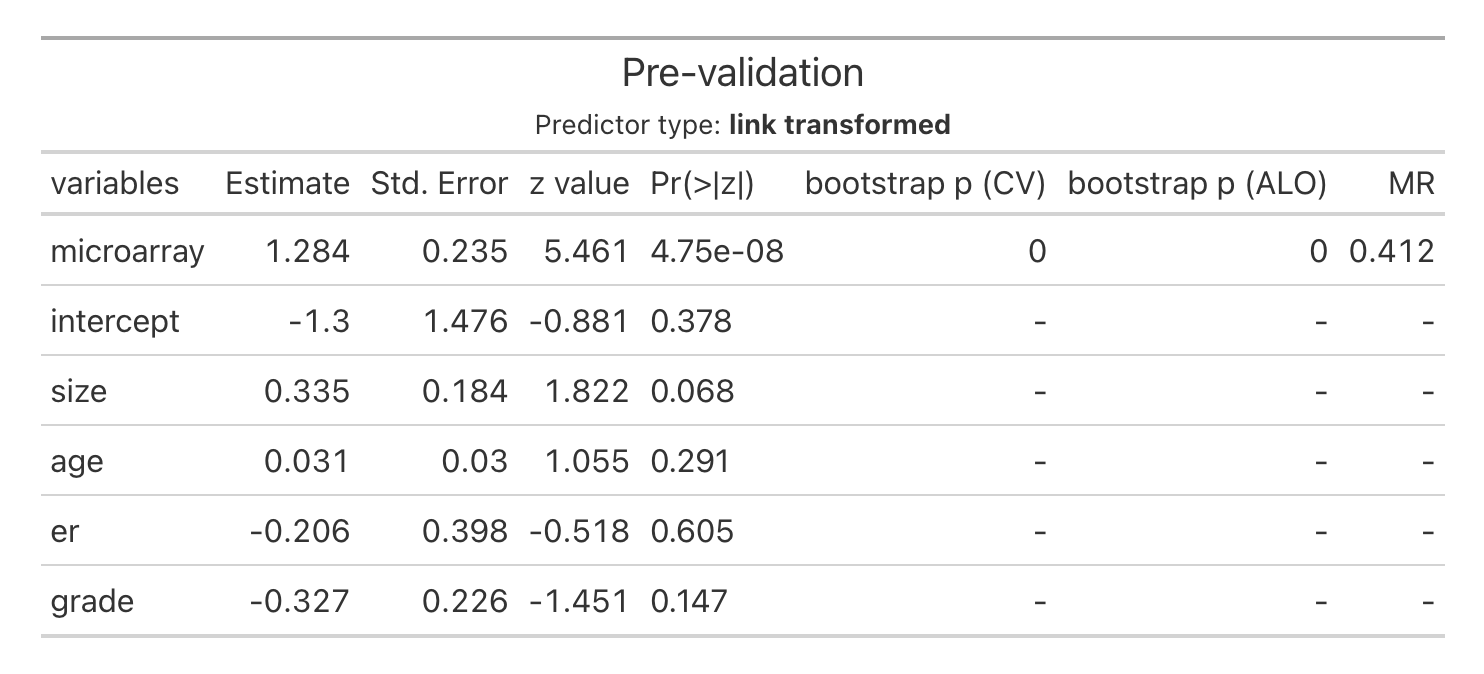}
   \caption{\em{Model Summary for Pre-validation}}
   \label{fig:modelpv}
\end{table}

We can observe that fitting the second-stage logistic regression on the pre-validated microarray predictor gives a lower z-score and higher p-value. In other words, the microarray predictor under pre-validation looks less significant, because it has never seen the true response for each individual in model fitting. Second, we can see that the misclassification rate on a held-out test set is lower under pre-validation, implying it is a better model. Note that we only utilized 1000 gene expression variables in an unsupervised way to fit the model in the first stage, which can lead to a high mis-classification rate in general. Lastly, p-value directly from \texttt{glm()}, p-value computed by \texttt{cv.glmnet()}-based bootstrap procedure, and p-value computed by approximate leave-one-out estimate (ALO)-based bootstrap procedure are all about the same. The derivation of ALO for the logistic Lasso problem can be found in \cref{appendix:ALOlogistic}.

\section{Simulation Results} \label{sec:Simulation}
\subsection{Results for Analytical Distribution}
We first show simulations based on our analytical results in \cref{sec:analytic} for least squares and ridge regression. We also show results using ALO estimator mentioned in \cref{sec:approx} to fast compute the Lasso estimates.

We run simulations with sample size $n = 100$, number of internal variables $p = 30$ and number of external variables $e = 5$ (intercept included), for $n_{sim} = 300$ times. The parameters we tune are: signal-to-noise ratios $\frac{Var(X\beta)}{\sigma^2_x}$, and correlation $\Gamma$ between internal and external variables.

\subsubsection{Least Squares Regression}
\autoref{fig:compare_lin} compares the simulated distribution of the t-statistic with Normal and our analytical distribution respectively. Each row represents a different correlation $\Gamma$ between internal and external variables, while each column represents a different signal-to-noise ratio. For each plot, we keep the same configuration of SNR ($\beta_0$, $\sigma^2_x$ in null model) and $\Gamma$, to compare the proposed distribution with the benchmarks directly.

We can observe the following:

\begin{enumerate}
    \item Even when the simplest least squares regression is applied to the first stage of pre-validation, there is a significant departure of the actual distribution of t-statistic from Normal distribution, especially when signal-to-noise ratios are relatively low, or when correlations between internal and external predictors are relatively high.
    \item Our proposed analytical asymptotic distribution provides a better fit to the actual distribution of t-statistic in terms of bias, variance and tail mis-coverage rate, i.e. type-I error $\alpha$, which is set to be 0.05 as a benchmark. We can see that our proposed distribution has a more precise control of type-I error, as the tail mis-coverage rate is closer to the nominal level.
\end{enumerate}

\begin{figure}[H]
    \centering
    \includegraphics[width=0.95\textwidth]{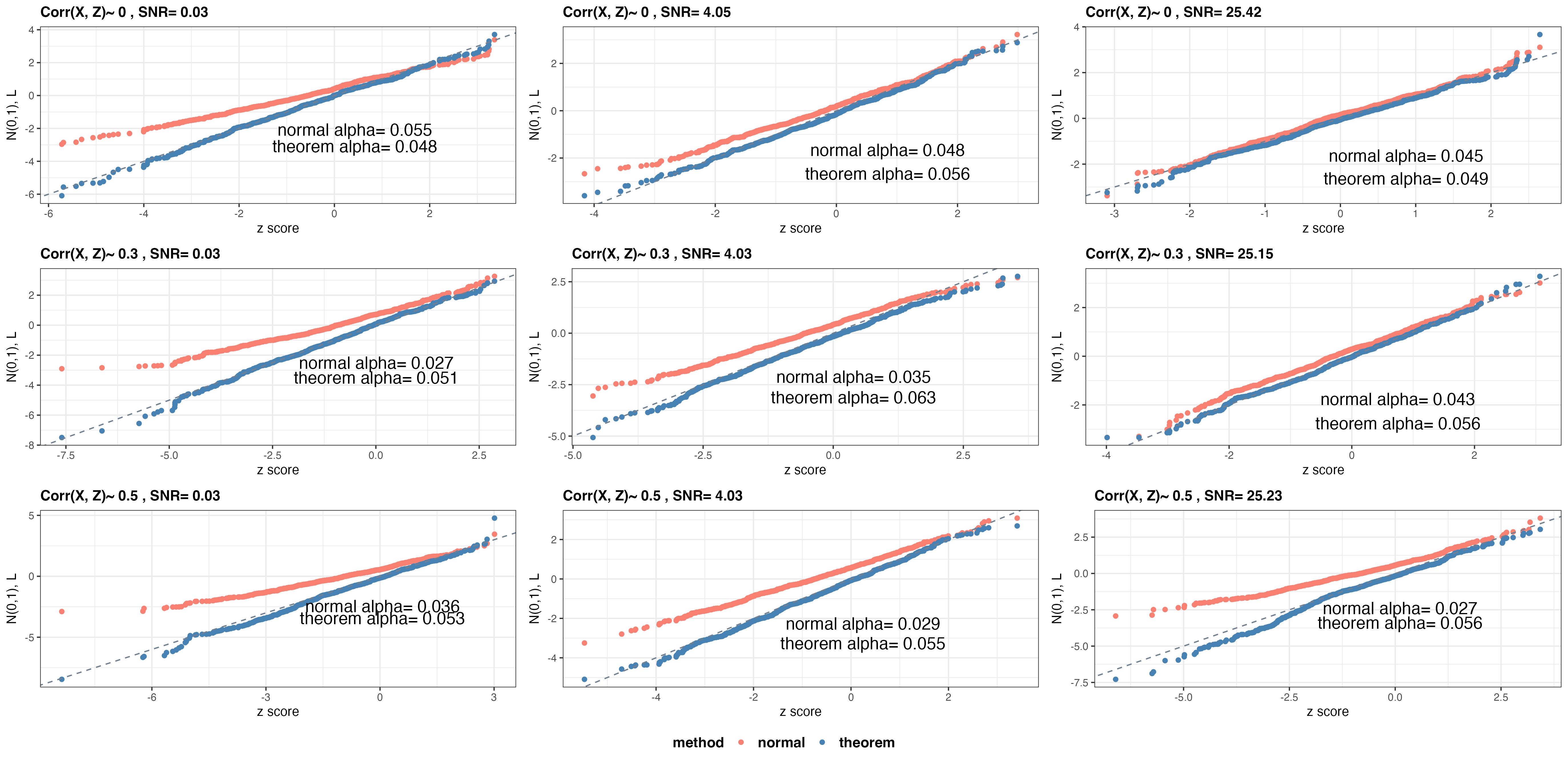}
    \caption{\em{Comparison of Normal and Analytical Distribution using OLS Regression at the first step:}
    \small each row represents a different correlation $\Gamma$ between internal and external predictors, while each column represents a different signal-to-noise ratio. Our proposed analytical distribution outperforms standard Normal especially when the signal-to-noise ratio is low, or when the correlation is high between internal and external predictors. It also has a more precise control of type-I error, as the tail mis-coverage rate is closer to the nominal level.}
    \label{fig:compare_lin}
\end{figure}

\subsubsection{Ridge Regression}
As mentioned above, our analytical distribution requires a fixed $\kappa$ as the input penalty parameter of $\texttt{glmnet()}$, rather than letting $\texttt{cv.glmnet()}$ pick a different $\kappa$ each time. We also recommend pre-training and picking a conservative $\kappa$, as discussed in \cref{sec:pretrain}.

\autoref{fig:compare_ridge} shows comparison of the distribution of t-statistic with Normal and our proposed analytical asymptotic distribution. Again, we can observe an improvement of accuracy in terms of bias, variance, and tail mis-coverage rate, i.e. type-I error $\alpha$, which is set to be 0.05 as a benchmark.
\begin{figure}[H]
    \centering
    \includegraphics[width=0.95\textwidth]{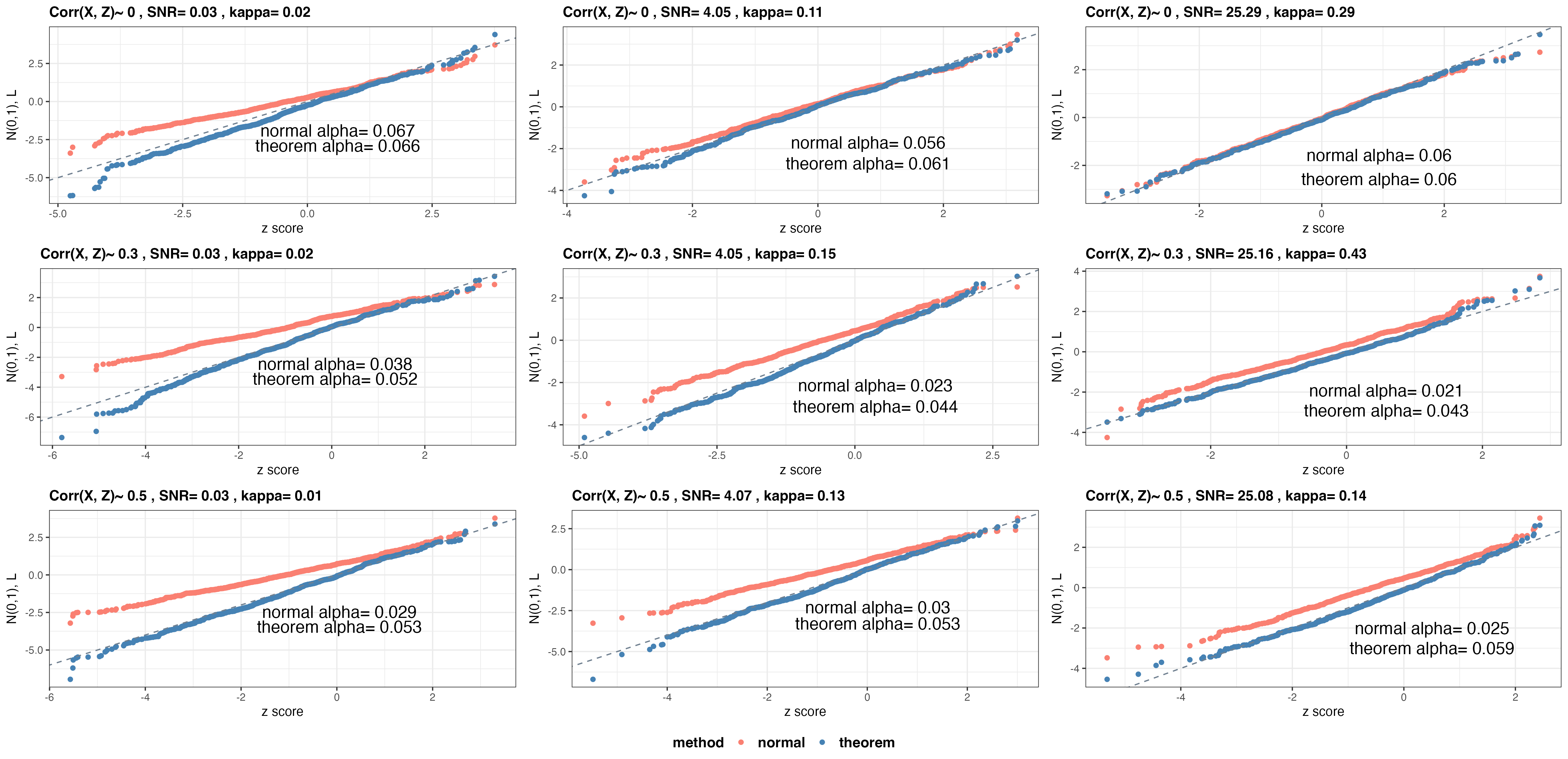}
    \caption{\em{Comparison of Normal and Analytical Distribution using Ridge Regression at the first step:}
    \small Penalty parameter $\kappa$ is chosen via pre-training. As before, our proposed analytical distribution outperforms standard Normal in terms of bias, variance, and tail mis-coverage control, especially when signal-to-noise ratio is low, or when the correlation between two sets of predictors is high.}
    \label{fig:compare_ridge}
\end{figure}

\subsection{Results for Bootstrap Distribution}
In \cref{sec:bootstrap}, we have proposed a general approach to deal with the situation where an analytical distribution of the t-statistic is not tractable. Furthermore, to address the lack of power, we have proposed a modified parametric bootstrap procedure in \cref{alg:paraboot}. \autoref{fig:LOOboot} shows the distributions of true null, true alternative, bootstrap null from null data, and bootstrap null from alternative data, when the  Lasso regression is used in the first stage of pre-validation. 

\begin{figure}[t!]
    \centering
    \includegraphics[width=1\textwidth]{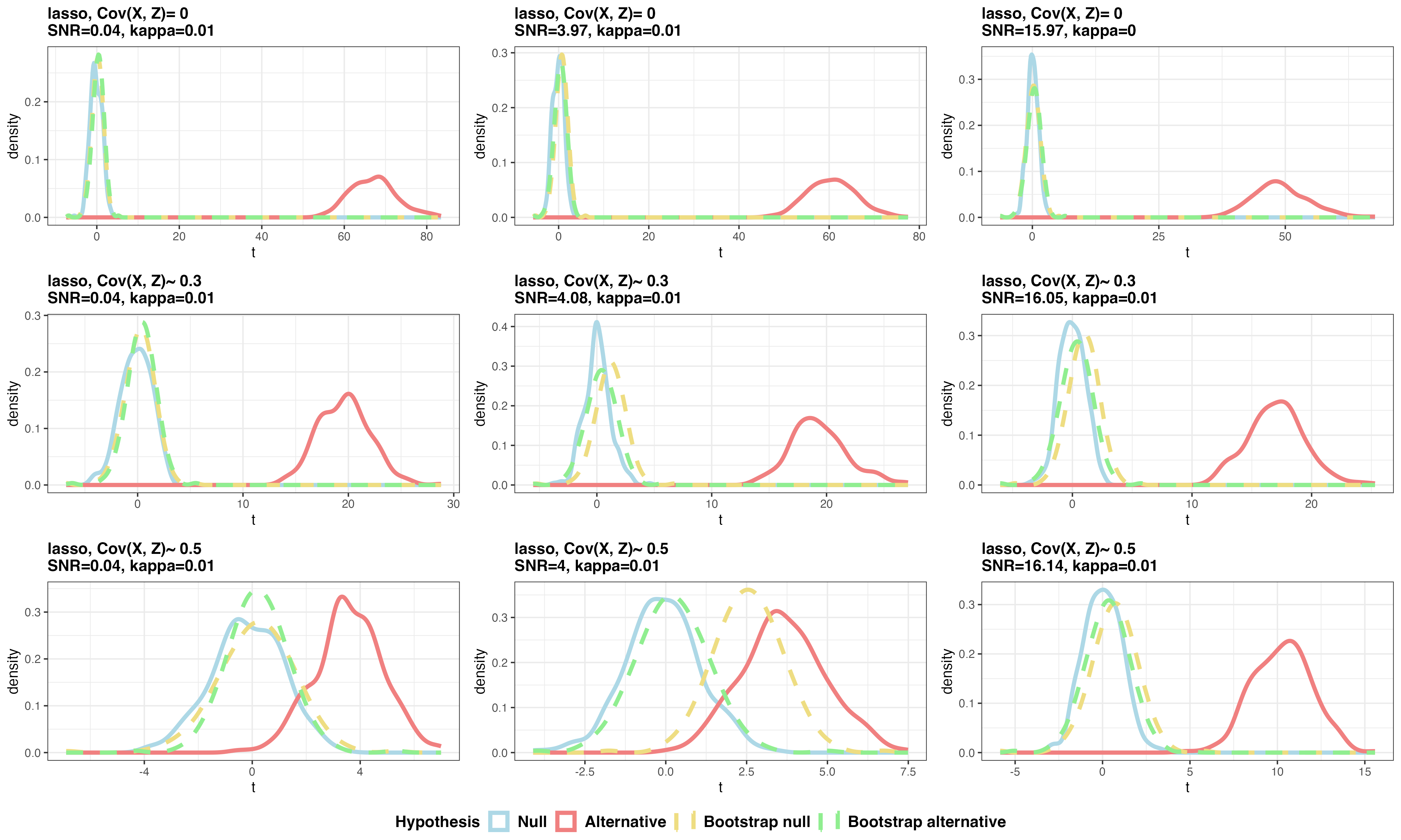}
    \caption{\em{Distributions of True Null, True Alternative, Bootstrap Null from Null Data,\\ and Bootstrap Null from Alternative Data, with Lasso Penalty Applied in First Stage}}
    \label{fig:LOOboot}
\end{figure}

Based on the figure, we can observe that:
\begin{enumerate}
    \item Performance of the bootstrap null approximation is quite satisfactory, for all different configurations of SNR and $\Gamma$, no matter if the data is from the null or alternative distribution.
    \item The bootstrap distributions allow us to conduct statistical inference. For instance, we can estimate a p-value using $\hat{\mathbb{P}}_0\left[|t_{obs}-t^{*b}|>\alpha\right] = \frac{1}{B}\sum_{b=1}^B \mathbf{1}\{|t_{obs}-t^{*b}|>\alpha\}$ while controlling type-I error, where $t^{*b}$ is a bootstrap replication of t-statistic and $\alpha$ is the pre-specified significance level.
\end{enumerate}

\autoref{fig:power} displays type-I error and power of testing the hypothesis $H_0:\beta_{PV} = 0$ under different configurations of SNR and $\Gamma$, where significance level $\alpha = 0.05$:

\begin{table}[H]
    \centering
    \includegraphics[scale=0.25]{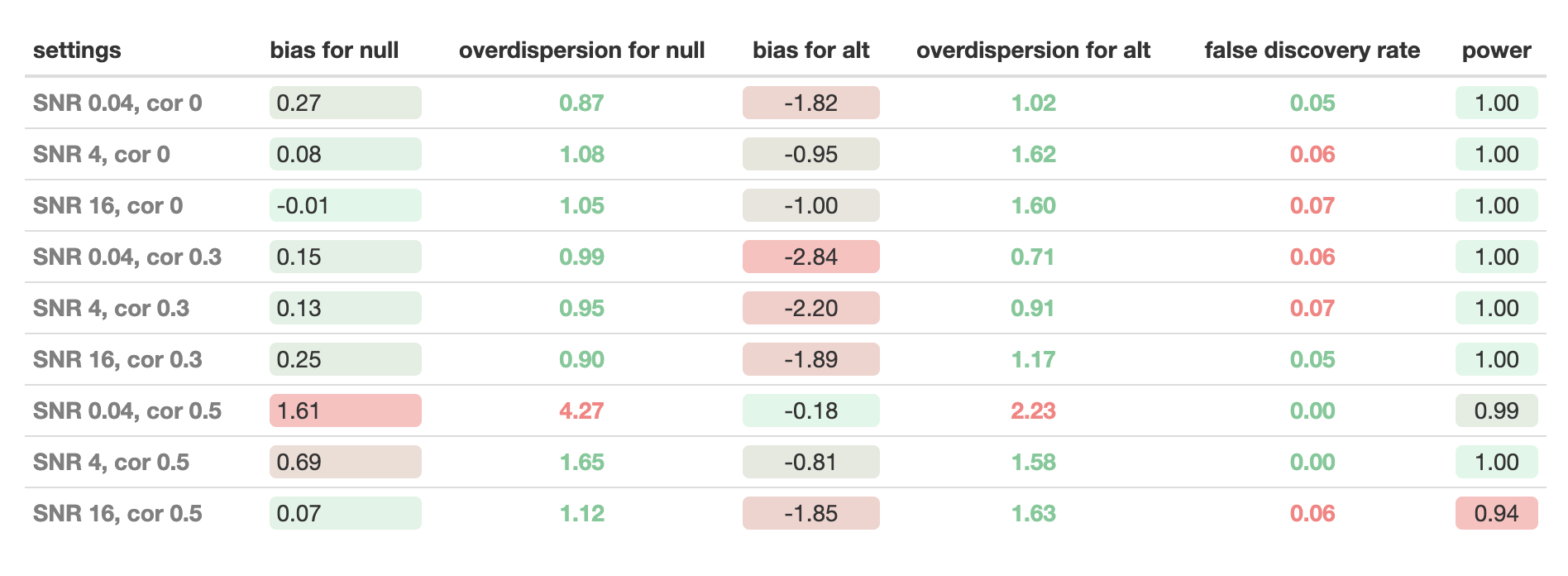}
    \caption{\em{Type-I Error and Power of Tests based on Parametric Bootstrap Procedure:}
    \small For \textit{bias}, the color shade gives a spectrum of bias values, with green indicating small bias, red indicating large bias; For \textit{overdispersion}, the text color gives a measurement of variance ratio, with red indicating large overdispersion ($>2$);  For \textit{FDR}, the text color indicates whether type-I error is controlled under nominal level ($\alpha=0.05$); For \textit{power}, the color shade represents the rejection rate when alternative is true.}
    \label{fig:power}
\end{table}

\section{Application to Genome-Wide Association Studies (GWAS)} \label{sec:GWAS}

\subsection{Prediction Model}
Heritability, the percentage of variability that is due to genetic causes, is of wide interest to scientists. \citeasnoun**{yang2010common} estimated the genetic variance of human height explained by genome-wide SNPs using linear mixed models (LMM). \citeasnoun**{weissbrod2019maximum} studied heritability estimation when samples are taken from a well-selected subpopulation. In this section, we apply pre-validation and bootstrap-based hypothesis testing procedure in GWAS with synthetic data. Our goal is to assess the relative contribution of genetics, as compared to other known factors.

We adopt the linear mixed model approach, and first introduce the following notation:
\begin{itemize}
    \item $y \in \mathbb{R}^n$ - continuous phenotype (e.g. height), with $n = 300$ in our study
    \item $X \in \mathbb{R}^{n \times e}$ - matrix of fixed effect covariates (e.g. age, sex, nutrition, or known SNPs with big effects, $e=5$), with coefficients $\beta \in \mathbb{R}^e$
    \item $Z \in \mathbb{R}^{n \times p}$ - matrix of random effect covariates (e.g. SNPs of three levels with small effects, $p=5000$), with random coefficients $b \in \mathbb{R}^p$, assume $b_j \sim N(0,\sigma^2_b)$ are i.i.d.
    \item $\epsilon \sim N(0,\sigma^2_\epsilon I_n) $ - vector of i.i.d. errors
\end{itemize}

In the linear mixed model, we assume that all the fixed effect covariates have large linear influences on the response, while all the gene expression variables have small linear influences:
\begin{equation}\label{eq:LMM}
\begin{aligned}
    y&= X\beta+ Zb+ \epsilon\\
    y|Z,X &\sim \mathcal{N}(X\beta,G\sigma^2_g + \sigma^2_\epsilon I_n)
\end{aligned}
\end{equation}
where $G = \frac{ZZ^T}{p} \in  \mathbb{R}^{n \times n}, \sigma^2_g = p\sigma^2_b \in  \mathbb{R}$.

We generate SNPs according to minor allele frequency (MAF), where MAF $\overset{\text{i.i.d.}}{\sim} Unif(0.15,0.85)$,  
\begin{equation*}
    Z_j = \begin{cases}
        0,&w.p.\quad(1-\text{MAF}_j)^2\\
        1,&w.p.\quad\text{MAF}_j(1-\text{MAF}_j)\\
        2,&w.p.{\quad\text{MAF}_j}^2
    \end{cases}
\end{equation*}

Then except for the first column being all 1, the fixed effect covariates and associated coefficients are generated according to:
\begin{equation*}
\begin{aligned}
    X = Z\Gamma+\eta,& \quad \eta\overset{\text{i.i.d.}}{\sim} N(0,1)\\
    \beta\overset{\text{i.i.d.}}{\sim}& N(0,0.3)
\end{aligned}
\end{equation*}

Lastly, $\Gamma$ is generated with a sparse structure according to a spike and slab model: 
\begin{equation*}
   \Gamma_{ij}\overset{\text{i.i.d.}}{\sim} \pi_\gamma \cdot N(0,0.1)+(1-\pi_\gamma)\cdot0,\quad \pi_\gamma \sim Bern(0.1)
\end{equation*}

Now, we are ready to apply the pre-validation procedure and  bootstrap-based hypothesis testing. \autoref{fig:GWASreuse} and \autoref{fig:GWASpv} summarize the result.
\begin{table}[H]
\centering
    \includegraphics[width=0.6\textwidth]{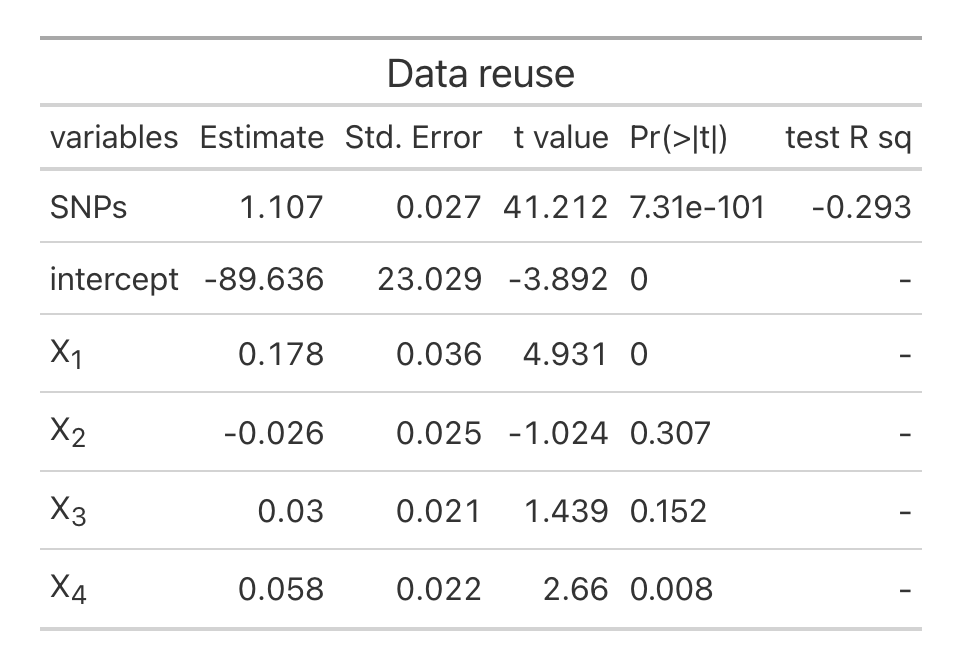}
    \caption{\em{Summary of GWAS Analysis and Inference from Data Reuse:}
    \small first-stage predictions fit with all observations, rather than leave-one-out, result in a significantly low p-value in second-stage inference; they also lead to bad model performance when generalizing to a set of test data, as the negative $R^2$ suggests serious overfitting and poor generalization.}
    \label{fig:GWASreuse}
\end{table}
\begin{table}[H]
\centering
   \includegraphics[width=0.9\textwidth]{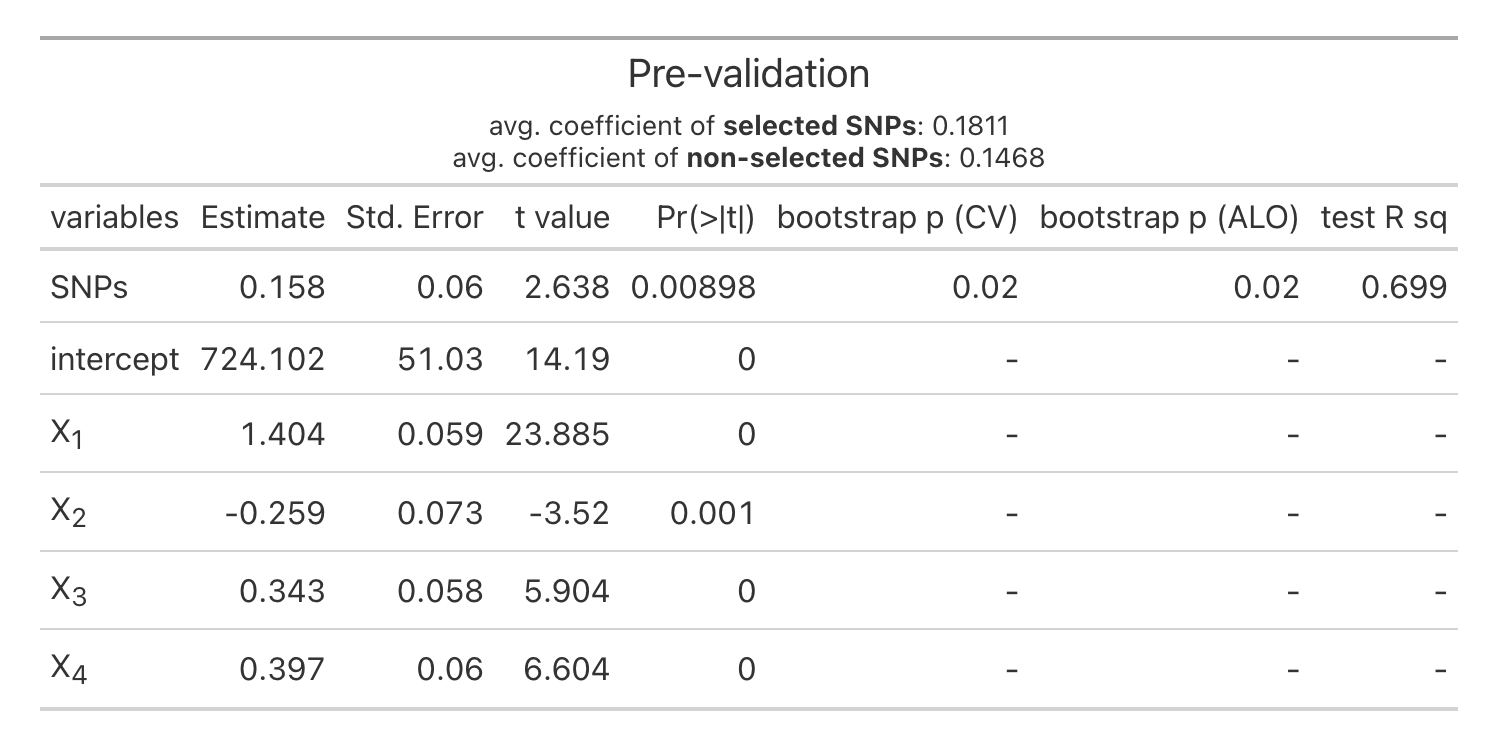}
   \caption{\em{Summary of GWAS Analysis and Inference from Pre-validation:}
   \small compared with the data-reuse procedure, pre-validation greatly reduces the significance of SNPs; our proposed bootstrap hypothesis test gives a more conservative estimate of p-value than the built-in R function, which uses standard Normal as the null; pre-validation also gives a better prediction model, suggested by the improvement of $R^2$ on a held-out test set.}
   \label{fig:GWASpv}
\end{table}
We can observe that pre-validation greatly reduces the significance of microarray predictor (named ``SNPs'' in the tables). Second, both \texttt{cv.glmnet()}-based and ALO-based bootstrap procedure give a more conservative estimate of p-value than \texttt{glm()} function, which uses Normal distribution as the null. Third, pre-validation significantly improved $R^2$ of the model on a held-out test set by 0.992, which strongly suggests it's a better model. Lastly, the average magnitude of coefficient $b$ from data generation for the selected set of SNPs is higher than for the non-selected set. This suggests that the Lasso regression is functioning properly and picking covariates with stronger signal.

\subsection{Estimation of Heritability }
Besides building a better prediction model, pre-validation also allows us to directly estimate heritability, defined as $\sigma^2_g = p\sigma^2_b \in  \mathbb{R}$. Here, we normalize $Var(y) = 1$, hence $\sigma^2_g$ gives a direct measurement of the proportion of variance explained by the SNPs.

As a benchmark for comparison, we first give a quick review on the classical restricted maximum likelihood (REML) method, proposed by \citeasnoun{patterson1971recovery}. Recall the formulation of linear mixed model \eqref{eq:LMM}, where 
\begin{equation}\label{eq:LMM_reg}
    Var(y|Z,X) = G\sigma^2_g + \sigma^2_\epsilon I_n
\end{equation} with $G = \frac{ZZ^T}{p}$ being the genetic relationship matrix between pairs of individuals at causal loci. If we know a priori the genotypes at the causal variants, we could fit model \eqref{eq:LMM_reg} and estimate the genetic variance $\sigma^2_g$. In the simulated data, we can assume all SNPs are causal variants. Hence, the REML procedure can be described as follows:

\begin{enumerate}[1)]
    \item Regress $y$ on $X$, get residuals $\Tilde{y} = y-\hat{y}$;
    \item Use observed genetic matrix $Z$ to compute $G = \frac{ZZ^T}{p}$;
    \item Compute $\Delta \Tilde{y}^2_{jk} = (\Tilde{y}_j - \Tilde{y}_k)^2$ for each pair of subjects, and regress on $G_{jk}$. Use the regression slope as an REML estimator of $-2\sigma_g^2$.
\end{enumerate}

On the other hand, pre-validation or data-reuse based estimator is constructed as follows:

\begin{enumerate}[1)]
    \item Regress $y$ on $X$, get residuals $\Tilde{y} = y-\hat{y}$;
    \item Regress $\Tilde{y}$ on $Z$ with relaxed Lasso regulation \cite{meinshausen2007relaxed}, get LOO fit $\hat{y}_{pv}$ or regular fit $\hat{y}_{reuse}$;
    \item Let $\frac{Var(\hat{y}_{pv})}{Var(\Tilde{y})}$ be a pre-validation estimator, or $\frac{Var(\hat{y}_{reuse})}{Var(\Tilde{y})}$ be a data-reuse estimator of $\sigma_g^2$.
\end{enumerate}

We recommend applying relaxed Lasso here, because coefficient shrinkage will result in underestimation of heritability, especially when the individual SNPs only have a small linear impact. We compare the classical REML approach with both pre-validation and data-reuse based estimators. \autoref{fig:table_heritability} and \autoref{fig:boxplot_heritability} show that the pre-validation based estimator is mostly on par with the REML estimator: 3\% higher bias, but improves standard deviation by 20\%, while the data-reuse based estimator has lowest standard deviation but highest bias.

\begin{table}[H]
    \centering
    \includegraphics[width=0.45\textwidth]{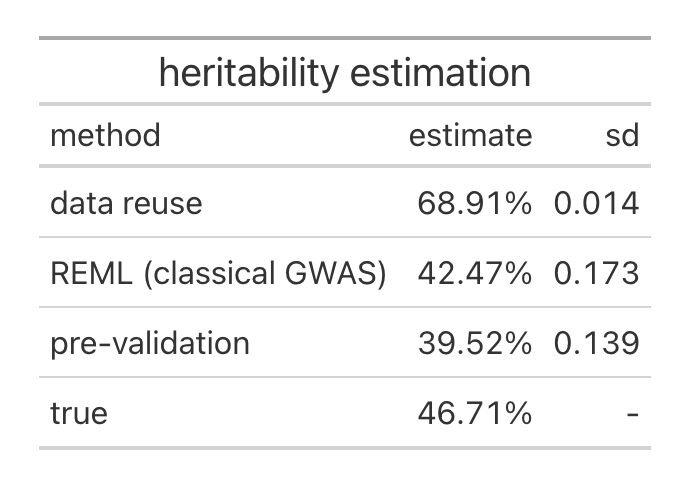}
    \caption{\em{Comparing Bias and Standard Deviation of Different Estimators for Heritability:}
    \small the pre-validation based estimator has 3\% higher bias than the REML estimator, but improves standard deviation by 20\%, while the data-reuse based estimator has the lowest standard deviation but a 30\% higher bias than the REML estimator.}
    \label{fig:table_heritability}
\end{table}

\begin{figure}[H]
    \centering
    \includegraphics[width=0.5\textwidth]{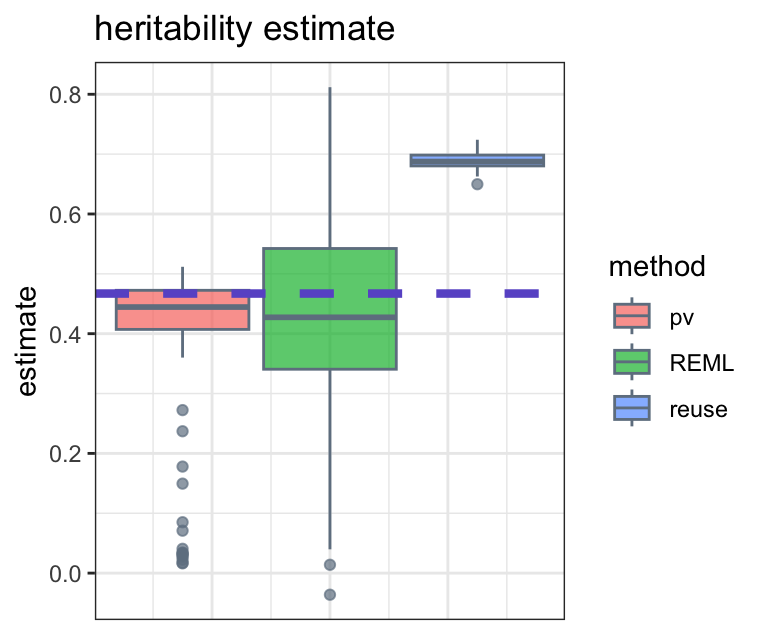}
    \caption{\em{Comparison Plot of Different Estimators for Heritability:}
    \small the pre-validation based estimator is mostly on par with the REML estimator in terms of bias, and outperforms in terms of variation; the data-reuse based estimator has unsatisfactory performance in both bias and variation.}
    \label{fig:boxplot_heritability}
\end{figure}

\section{Estimation of Mean Squared Error by Pre-validation}
\label{sec:error}
In this section, we investigate different estimates of  mean-squared error.  Along with the prediction and inference properties as discussed above, here we provide empirical evidence of how the pre-validation procedure also gives better error estimates than data-reuse. We compare both in-sample and out-of-sample test error, with the error estimates obtained from training set.

For the underlying data generating model, we consider two cases that differ with respect to whether the external predictors $X$ contribute to response $y$. The two models are as follows:

\begin{align}
    \text{Without externals: }
        y&= Z\beta_0 +\epsilon\\
    \text{With externals: }
        y&= X\beta_0 + Z\beta_1+ \epsilon\\
        Z&= X\Gamma + E \nonumber
\end{align}
where entries of $E \overset{\text{i.i.d.}}{\sim} N(0,\sigma^2_z)$, and entries
of $\epsilon \overset{\text{i.i.d.}}{\sim} N(0,\sigma^2_x)$, with $E$ and $\epsilon$ mutually independent. For in-sample test error, we fix $\mathbb{E}[y]$ and sample a new set of independent errors $\epsilon'$, then set $\mathbb{E}[y] + \epsilon'$ to be the new in-sample test responses. For out-of-sample test error, we draw new sets of predictors and independent errors, then generate new out-of-sample test responses according to the same rule.

Lastly, we define the following statistics of interest: 
\begin{itemize}
    \item $\Tilde{y}_{LOO}^{(1)}$ and $\Tilde{y}_{nonLOO}^{(1)}$: the predictor obtained from first-stage regression of $y$ on $Z$, based on either pre-validation or data-reuse procedure;
    \item  $\hat{y}_{LOO}^{(2)}$ and $\hat{y}_{nonLOO}^{(2)}$ (optional): the predictor obtained from second-stage regression of $y$ on $X$ and $\Tilde{y}^{(1)}$. This step is not necessary if no external predictors are included in the covariate set, because we do not need to regress y on the predictions of y again;
    \item $\frac{1}{n} \sum_{i=1}^n \left(y_i-\hat{y}_{i}^{(2)}\right)^2$: the error estimate computed based only on training data. If there is no external predictor, replace $\hat{y}_{i}^{(2)}$ with $\Tilde{y}_{i}^{(1)}$;
    \item $\frac{1}{n} \sum_{i=1}^n \left(y_{in,i}-\hat{y}_{i}^{(2)}\right)^2$: in-sample test error computed with the same prediction $\hat{y}_{i}^{(2)}$ as above. If there is no external predictor, replace $\hat{y}_{i}^{(2)}$ with $\Tilde{y}_{i}^{(1)}$;
    \item $\frac{1}{n} \sum_{i=1}^n \left(y_{out,i}-\hat{y}_{out,i}^{(2)}\right)^2$: out-of-sample test error computed with different prediction $\hat{y}_{out,i}^{(2)}$, as now covariates are drawn differently. We do not need to compute out-of-sample test error again for the pre-validation procedure when there is no external predictor, as the difference of first-stage models is negligible when $n$ is large.
\end{itemize}
 
Now \autoref{fig:err_noX} and \autoref{fig:err_withX} show the comparison of error estimates under different generating models. When there is no external predictor, the data-reuse error estimate tends to underestimate in-sample test error by $38.2\%$, and underestimate out-of-sample test error by $79\%$. On the other hand, the pre-validation error estimate is empirically unbiased for in-sample test error, and underestimates out-of-sample test error by only $1.3\%$.

When there are external predictors, the data-reuse error estimate underestimates in-sample test error by $60.2\%$, and even worse on out-of-sample test error. On the other hand, the pre-validation error estimate underestimates in-sample test error by $2\%$, and out-of-sample test error by $14.5\%$. Overall, we can conclude that the pre-validation procedure also provides a much better error estimate of the prediction model, compared with data-reuse.

\begin{table}[H]
    \centering
    \includegraphics[width=1\textwidth]{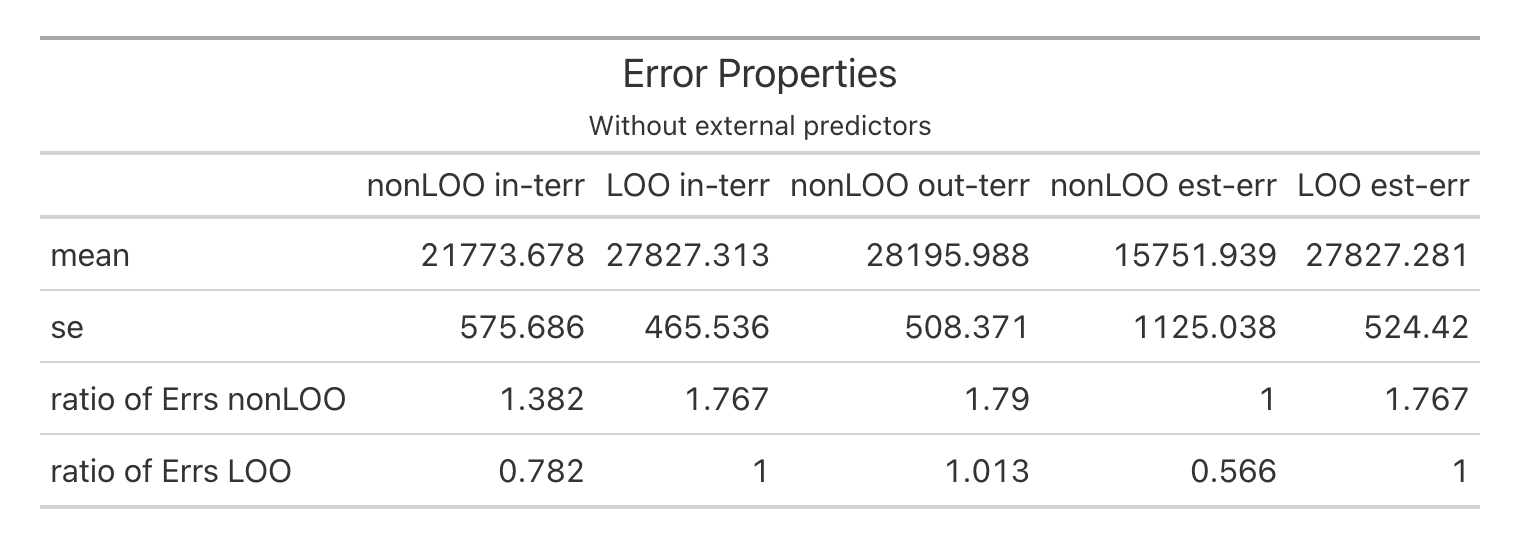}
    \caption{\em{Comparing LOO and non-LOO Error Estimates under no External Predictors}}
    \label{fig:err_noX}
\end{table}

\begin{table}[H]
    \centering
    \includegraphics[width=1\textwidth]{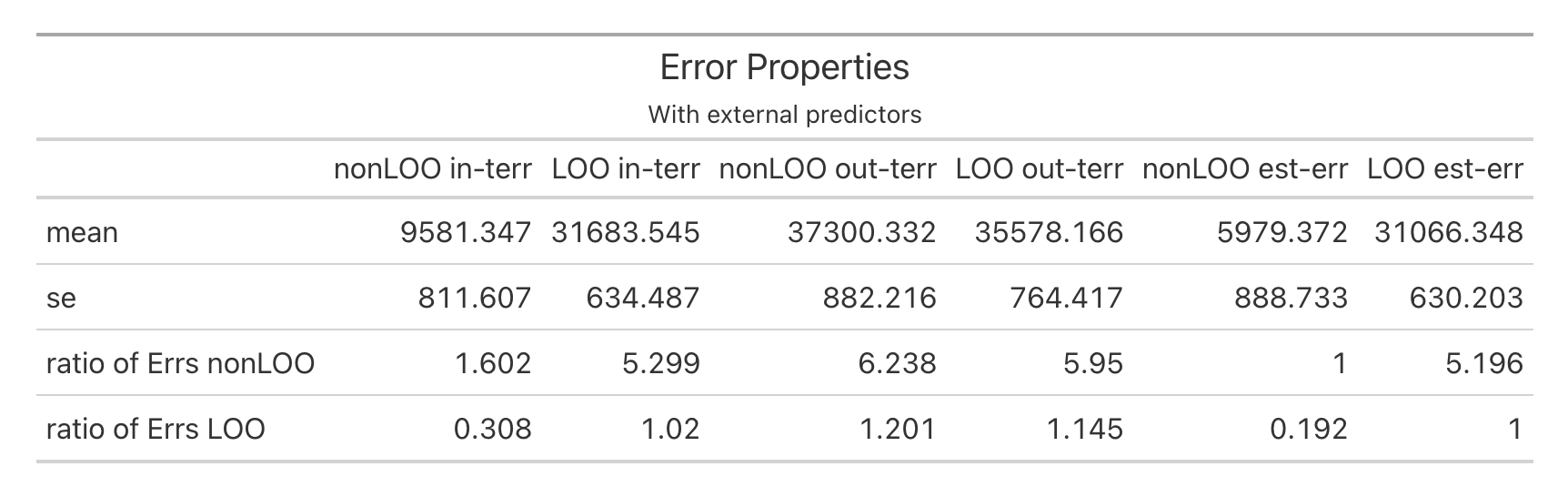}
    \caption{\em{Comparing LOO and non-LOO Error Estimates under External Predictors}}
    \label{fig:err_withX}
\end{table}

\section{Discussion}\label{sec:discussion}
In this paper, we revisited the pre-validation procedure for comparing external features with internal features that are derived from the data. We started with a more general formulation of the generating model, which does not impose independence assumption on the two covariate sets. We proposed not only an analytical distribution of the test statistics when least squares or ridge regression is applied to the first stage of pre-validation, but also a generic bootstrap distribution when the analysis is not tractable. We showed advantages of the pre-validation procedure in prediction, inference and error properties by simulation and various applications.

There are several directions  future research. First, the parametric bootstrap based hypothesis testing is not restricted to any parametric learning model. We are interested in developing an analogous approach when other non-parametric methods are applied to the first stage of pre-validation, such as the random forest. Second, we would hope that the proposed analytical distribution and bootstrap based procedure can transfer readily to the inference of external predictors. However, more careful investigations are expected. Last, even though there is strong empirical evidence in favor of the pre-validation procedure in terms of the error properties, further theoretical work remains to be done upon the topic.

\appendix
\section{Proof of \Cref{cor:equivalence}}\label{appendix:equivalence}
Assume $\Gamma = \mathbf{0}_{e \times p}$, $\Sigma = I_{e\times e}$, $\Theta = \mathbf{0}_{1 \times e}$, $\sigma^2_1 = ... \sigma^2_e = \sigma^2_x = 1$, $\beta_0 = \alpha_0 = \mathbf{0}_{e \times 1}$. Then,
\begin{gather*}
    B = \begin{bmatrix}
        1&&\mathbf{0}_{1\times p}\\
        \mathbf{0}_{p \times 1} && \sigma^2_zI_{p\times p}
    \end{bmatrix},\quad
    D= \mathbf{0}_{(p+1) \times e}\\
    (P_0,\, Q_0) \text{ has joint covariance matrix } \sigma^2_x \begin{bmatrix}
    B&&D\\
    D^T&&\Sigma \end{bmatrix} = \sigma^2_x \begin{bmatrix}
    1& & & \\[-0.2cm]
    & \sigma^2_z &\\[-0.2cm]
    & & \ddots &\\[-0.2cm]
    & & & \sigma^2_z &\\[-0.2cm]
    & & & &\Sigma \end{bmatrix}
\end{gather*}

Hence, \eqref{eq:linear} is reduced to:
\begin{equation*}
\begin{aligned}
    L&=\frac{(P_0 + D\alpha_0)^T B^{-1}(P_0-D\Sigma^{-1}Q_0)-\sigma_x^2 (p+1)}{\sigma_x\sqrt{(P_0+ D\alpha_0)^T (B^{-1}-B^{-1}D \Sigma^{-1}D^T B^{-1})(P_0 + D\alpha_0)}}\\
    &=\frac{P_0^T B^{-1}P_0-\sigma_x^2 (p+1)}{\sigma_x\sqrt{P_0^T B^{-1}P_0}}\\
    &=\frac{C-(p+1)}{\sqrt{C}}
\end{aligned}
\end{equation*}
where $P_0 \sim \mathcal{N}(0, \sigma^2_xB)$, $C \sim \chi_{p+1}^2$. The last equality follows from the fact that $B^{-1/2}P_0/\sigma_x \sim \mathcal{N}\left(0, I_{(p+1)\times(p+1)}\right)$, hence $P_0^T B^{-1}P_0/\sigma^2_x \sim \chi_{p+1}^2$.

This agrees with \eqref{t_noX}, where $t\xrightarrow{d} \frac{C-p}{\sqrt{C}}$, $C\sim \chi_p^2$, expect for we have $p+1$ in place of their $p$, since we have added an intercept to $Z$. \qed

\section{Proof of \Cref{thm:ridge}}\label{appendix:ridge}
\newcommand{\vol}{\mathrm{Vol}}
\newcommand{\I}{\mathrm{i}}
\newcommand{\mg}{\mathcal{G}}
\newcommand{\ep}{\epsilon}
\newcommand{\abs}[1]{| #1|}
\newcommand{\dt}[1]{{#1}_t}
\newcommand{\me}{\mathcal{E}}
\newcommand{\mk}{\mathcal{K}}
\newcommand{\mm}{\mathcal{M}}
\newcommand{\mr}{\mathcal{R}}
\newcommand{\MU}{\mathcal{U}}
\newcommand{\mc}{\mathcal{C}}
\newcommand{\ms}{\mathcal{S}}
\newcommand{\mw}{\mathcal{W}}
\newcommand{\tf}{\tilde{f}}
\newcommand{\tF}{\tilde{F}}
\newcommand{\tl}{\tilde{L}}
\newcommand{\hf}{\hat{f}}
\newcommand{\hq}{\widehat{Q}}

\newtheorem{thm}{Theorem}
\newtheorem{lmm}{Lemma}
\newtheorem{cor}{Corollary}
\newtheorem{prop}{Proposition}
\newtheorem{defn}{Definition}
\newtheorem{problem}{Problem}
\theoremstyle{definition}
\newtheorem{remark}{Remark}
\newtheorem{ex}{Example}

\newcommand{\avg}[1]{\bigl\langle #1 \bigr\rangle}
\newcommand{\bbb}{\mathbf{B}}
\newcommand{\bbg}{\mathbf{g}}
\newcommand{\bbr}{\mathbf{R}}
\newcommand{\bbw}{\mathbf{W}}
\newcommand{\bbx}{\mathbf{X}}
\newcommand{\bbxa}{\mathbf{X}^*}
\newcommand{\bbxb}{\mathbf{X}^{**}}
\newcommand{\bbxp}{\mathbf{X}^\prime}
\newcommand{\bbxpp}{\mathbf{X}^{\prime\prime}}
\newcommand{\bbxt}{\tilde{\mathbf{X}}}
\newcommand{\bby}{\mathbf{Y}}
\newcommand{\bbz}{\mathbf{Z}}
\newcommand{\bbzt}{\tilde{\mathbf{Z}}}
\newcommand{\bigavg}[1]{\biggl\langle #1 \biggr\rangle}
\newcommand{\bp}{b^\prime}
\newcommand{\bx}{\mathbf{x}}
\newcommand{\by}{\mathbf{y}}
\newcommand{\cc}{\mathbb{C}}
\newcommand{\cov}{\mathrm{Cov}}
\newcommand{\ddt}{\frac{d}{dt}}
\newcommand{\ee}{\mathbb{E}}
\newcommand{\fp}{f^\prime}
\newcommand{\fpp}{f^{\prime\prime}}
\newcommand{\fppp}{f^{\prime\prime\prime}}
\newcommand{\ii}{\mathbb{I}}
\newcommand{\gp}{g^\prime}
\newcommand{\gpp}{g^{\prime\prime}}
\newcommand{\gppp}{g^{\prime\prime\prime}}
\newcommand{\hess}{\operatorname{Hess}}
\newcommand{\ma}{\mathcal{A}}
\newcommand{\md}{\mathcal{D}}
\newcommand{\mf}{\mathcal{F}}
\newcommand{\mi}{\mathcal{I}}
\newcommand{\ml}{\mathcal{L}}
\newcommand{\cp}{\mathcal{P}}
\newcommand{\cq}{\mathcal{Q}}
\newcommand{\cs}{\mathcal{S}}
\newcommand{\mh}{\mathcal{H}}
\newcommand{\mx}{\mathcal{X}}
\newcommand{\mxp}{\mathcal{X}^\prime}
\newcommand{\mxpp}{\mathcal{X}^{\prime\prime}}
\newcommand{\my}{\mathcal{Y}}
\newcommand{\myp}{\mathcal{Y}^\prime}
\newcommand{\mypp}{\mathcal{Y}^{\prime\prime}}
\newcommand{\pp}{\mathbb{P}}
\newcommand{\mv}{\mathcal{V}}
\newcommand{\ppr}{p^\prime}
\newcommand{\pppr}{p^{\prime\prime}}
\newcommand{\rr}{\mathbb{R}}
\newcommand{\smallavg}[1]{\langle #1 \rangle}
\newcommand{\sss}{\sigma^\prime}
\newcommand{\st}{\sqrt{t}}
\newcommand{\sst}{\sqrt{1-t}}
\newcommand{\tr}{\operatorname{Tr}}
\newcommand{\uu}{\mathcal{U}}
\newcommand{\var}{\mathrm{Var}}
\newcommand{\ve}{\varepsilon}
\newcommand{\vp}{\varphi^\prime}
\newcommand{\vpp}{\varphi^{\prime\prime}}
\newcommand{\ww}{W^\prime}
\newcommand{\xp}{X^\prime}
\newcommand{\xpp}{X^{\prime\prime}}
\newcommand{\xt}{\tilde{X}}
\newcommand{\xx}{\mathcal{X}}
\newcommand{\yp}{Y^\prime}
\newcommand{\ypp}{Y^{\prime\prime}}
\newcommand{\zt}{\tilde{Z}}
\newcommand{\zz}{\mathbb{Z}}
\newcommand{\mn}{\mathcal{N}}
\newcommand{\upi}{\underline{\pi}}
\newcommand{\fb}{\mathfrak{B}}

\newcommand{\sign}{\operatorname{sign}}
 
\newcommand{\fpar}[2]{\frac{\partial #1}{\partial #2}}
\newcommand{\spar}[2]{\frac{\partial^2 #1}{\partial #2^2}}
\newcommand{\mpar}[3]{\frac{\partial^2 #1}{\partial #2 \partial #3}}
\newcommand{\tpar}[2]{\frac{\partial^3 #1}{\partial #2^3}}

\newcommand{\tT}{\tilde{T}}

\newcommand{\est}{e^{-t}}
\newcommand{\esst}{\sqrt{1-e^{-2t}}}
\newcommand{\bbu}{\mathbf{u}}
\newcommand{\bbv}{\mathbf{v}}
\newcommand{\bos}{{\boldsymbol \sigma}}
\newcommand{\bz}{\mathbf{z}}
\newcommand{\bg}{\mathbf{g}}
\newcommand{\av}[1]{\langle #1 \rangle}
\newcommand{\HP}{\hat{P}}
\newcommand{\hp}{\hat{p}}
\newcommand{\rank}{\operatorname{rank}}
\newcommand{\dist}{\operatorname{dist}}
\newcommand{\area}{\operatorname{area}}


\newcommand{\fs}{\mathbb{S}}
\newcommand{\fd}{\mathbb{D}}
\newcommand{\ftw}{\mathbb{T}}
\newcommand{\fst}{\mathbb{M}}
\newcommand{\mfn}{\mathfrak{N}}
\newcommand{\mfc}{\mathfrak{C}}
\newcommand{\mb}{\mathcal{B}}

\newcommand{\ts}{\tilde{s}}
\newcommand{\tp}{\tilde{\P}}

\newcommand{\hx}{\hat{x}}
\newcommand{\HX}{\hat{X}}
\newcommand{\tx}{\tilde{X}}
\newcommand{\ty}{\tilde{Y}}

\renewcommand{\Re}{\operatorname{Re}}

\newcommand{\dd}{\mathrm{d}}
\newcommand{\eq}[1]{\begin{align*} #1 \end{align*}}
\newcommand{\eeq}[1]{\begin{align} \begin{split} #1 \end{split} \end{align}}

\newcommand{\tv}{d_{\textup{TV}}}
\newcommand{\gammap}{\Gamma^{\textup{p}}}
\newcommand{\phip}{\Phi^{\textup{p}}}
\newcommand{\mfp}{\mathcal{F}^{\textup{p}}}
\newcommand{\muap}{\mu_a^{\textup{p}}}
\newcommand{\psip}{\Psi^{\textup{p}}}
\newcommand{\mgp}{\mathcal{G}^{\textup{p}}}

\newcommand{\ham}{\operatorname{H}}

\newtheorem{claim}[thm]{Claim}
\newtheorem{prob}[thm]{Problem}
\newtheorem{rmk}[thm]{Remark}
\newcommand{\N}{\mathbb{N}}
\newcommand{\Z}{\mathbb{Z}}
\newcommand{\Q}{\mathbb{Q}}
\newcommand{\R}{\mathbb{R}}
\newcommand{\e}{\mathrm{e}}
\newcommand{\1}{\mathbbm{1}}
\newcommand{\C}{\mathbb{C}}
\newcommand{\F}{\mathbb{F}}
\newcommand{\E}{\mathbb{E}}
\newcommand{\X}{\mathcal{X}}
\newcommand{\Xx}{\mathbb{X}}
\newcommand{\J}{\mathbb{J}}
\newcommand{\Eff}{\mathcal{F}}
\renewcommand{\bar}{\overline}
\renewcommand{\P}{\mathbb{P}}
\newcommand{\ul}{\underline}
\newcommand{\la}{\langle}
\newcommand{\ra}{\rangle}
\renewcommand{\d}{\mathrm{d}}
\renewcommand{\S}{\mathbb{S}}
\newcommand{\Ess}{\mathcal{S}}
\renewcommand{\tilde}{\widetilde}
\renewcommand{\hat}{\widehat}
\newcommand{\un}{\underline{n}}

\newcommand{\mz}{\mathcal{Z}}

\newcommand{\sij}{\smallavg{\sigma_i\sigma_j}}
\newcommand{\sjk}{\smallavg{\sigma_j\sigma_k}}
\newcommand{\skl}{\smallavg{\sigma_k\sigma_l}}
\newcommand{\sijkl}{\smallavg{\sigma_i\sigma_j\sigma_k \sigma_l}}
\newcommand{\cijkl}{\sijkl - \sij\skl}
\newcommand{\si}{\smallavg{\sigma_i}}
\newcommand{\sj}{\smallavg{\sigma_j}}
\newcommand{\sijk}{\smallavg{\sigma_i\sigma_j\sigma_k}}
\newcommand{\cijk}{\sijk - \si\sjk}
\newcommand{\sk}{\smallavg{\sigma_k}}

\newcommand{\cij}{\smallavg{\sigma_i\sigma_j} - \si\sj}
\newcommand{\cik}{\smallavg{\sigma_i\sigma_k} - \si\sk}
\newcommand{\tZ}{\tilde{Z}}
\newcommand{\tX}{\tilde{X}}



\setcounter{tocdepth}{2}






Throughout this section, we will work under the hypotheses of Theorem \cref{thm:ridge}. 
We begin the proof of Theorem \cref{thm:ridge} with the following lemma, which gives convenient expressions  for $\hat{\beta}_{PV}$ and $\hat{\sigma}(\hat{\beta}_{PV})$.
\begin{lmm}\label{mainlmm}
Define
\begin{align}\label{uvw}
u := \frac{1}{\sqrt{n}} \tilde{y}^T\epsilon, \ \ V := \frac{1}{n}X^T \tilde{y}, \ \ W :=\frac{1}{\sqrt{n}} X^T \epsilon,
\end{align}
and 
\begin{align}\label{rdef}
r := \frac{1}{\frac{1}{n}\tilde{y}^T\tilde{y} - V^T(\frac{1}{n}X^TX)^{-1}V}.
\end{align}
Then we have
\[
\sqrt{n}\hat{\beta}_{PV} = r (u - V^T ({\textstyle \frac{1}{n}}X^TX)^{-1} W).
\]
and $\hat{\sigma}(\hat{\beta}_{PV}) = \sigma \sqrt{r/n}$.
\end{lmm}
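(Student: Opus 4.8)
The plan is to reduce both identities to the standard algebra of partitioned (two-block) least squares applied to the second-stage regression of $y$ on the augmented design $[\,\tilde{y}\mid X\,]$, where $\tilde{y}=y_{PV}$ sits in the first column. Write $M_X := I - X(X^TX)^{-1}X^T$ for the residual-maker projection onto the orthogonal complement of the column space of $X$. By the Frisch--Waugh--Lovell theorem, the coefficient on $\tilde{y}$ in this regression is
\[
\hat{\beta}_{PV} = \frac{\tilde{y}^T M_X\, y}{\tilde{y}^T M_X\, \tilde{y}},
\]
so the whole task is to rewrite numerator and denominator in terms of $u$, $V$, $W$, and $r$.

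For the numerator, I would invoke the null model $y = X\beta_0 + \epsilon$, which gives $M_X y = M_X\epsilon$ because $M_X X = 0$; thus $\tilde{y}^T M_X y = \tilde{y}^T\epsilon - \tilde{y}^T X (X^TX)^{-1} X^T \epsilon$. Multiplying by $1/\sqrt{n}$ and regrouping the normalizations as $(\tfrac1n\tilde{y}^TX)(\tfrac1n X^TX)^{-1}(\tfrac1{\sqrt n}X^T\epsilon)$ identifies this with $u - V^T(\tfrac1n X^TX)^{-1}W$. The identical bookkeeping on the denominator gives $\tfrac1n\tilde{y}^T M_X\tilde{y} = \tfrac1n\tilde{y}^T\tilde{y} - V^T(\tfrac1n X^TX)^{-1}V = 1/r$, so $\tilde{y}^T M_X\tilde{y} = n/r$. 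Combining, $\hat{\beta}_{PV} = \sqrt{n}\,\bigl(u - V^T(\tfrac1n X^TX)^{-1}W\bigr)\big/(n/r)$, which rearranges to the claimed $\sqrt{n}\,\hat{\beta}_{PV} = r\,\bigl(u - V^T(\tfrac1n X^TX)^{-1}W\bigr)$.

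For the standard error, recall $\hat{\sigma}(\hat{\beta}_{PV})$ is the square root of the top-left entry of $\sigma^2\bigl([\,\tilde{y}\mid X\,]^T[\,\tilde{y}\mid X\,]\bigr)^{-1}$, with $\sigma=\sigma_x$. By the Schur-complement formula, the top-left entry of the inverse of the block Gram matrix $\begin{bmatrix}\tilde{y}^T\tilde{y} & \tilde{y}^TX \\ X^T\tilde{y} & X^TX\end{bmatrix}$ is the reciprocal of the Schur complement $\tilde{y}^T\tilde{y} - \tilde{y}^TX(X^TX)^{-1}X^T\tilde{y} = \tilde{y}^T M_X\tilde{y} = n/r$. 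Hence that entry equals $r/n$, and $\hat{\sigma}(\hat{\beta}_{PV}) = \sqrt{\sigma^2\, r/n} = \sigma\sqrt{r/n}$.

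The computations are routine linear algebra, so there is no genuine analytic obstacle; the only points demanding care are the bookkeeping of the $\sqrt{n}$ and $1/n$ normalizations, so that each grouped factor lands on the correctly scaled quantity ($u$, $V$, $W$, or $\tfrac1n X^TX$), and the use of the null model to replace $M_X y$ by $M_X\epsilon$. This last substitution is the only place the distributional assumption enters, and it is precisely what renders the numerator a clean bilinear form in $\tilde{y}$ and $\epsilon$, setting up the subsequent joint-limit analysis of $(u, V, W, \tfrac1n\tilde{y}^T\tilde{y})$.
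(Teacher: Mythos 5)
Your proof is correct and follows essentially the same route as the paper: the paper derives the identity $\hat{\beta}_{PV} = \frac{r}{n}\tilde{y}^T(I-X(X^TX)^{-1}X^T)\epsilon$ by writing out the block inverse of $\tilde{X}^T\tilde{X}$ explicitly, which is exactly the Frisch--Waugh--Lovell/Schur-complement computation you invoke, and it uses the same substitution $M_X y = M_X\epsilon$ from the null model. The normalization bookkeeping in your argument checks out, so nothing further is needed.
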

\begin{proof}
Note that
\begin{align}\label{eq1}
\tilde{X}^T y &= 
\begin{bmatrix}
\tilde{y}^T\\
X^T
\end{bmatrix} y
= 
\begin{bmatrix}
\tilde{y}^Ty\\
X^Ty
\end{bmatrix}= 
\begin{bmatrix}
\tilde{y}^T(X\beta_0+\epsilon)\\
X^T(X\beta_0 + \epsilon)
\end{bmatrix},
\end{align}
and 
\begin{align*}
\tilde{X}^T\tilde{X} &= 
\begin{bmatrix}
\tilde{y}^T\\
X^T
\end{bmatrix}
\begin{bmatrix}
\tilde{y} & X
\end{bmatrix}
= 
\begin{bmatrix}
\tilde{y}^T\tilde{y} & \tilde{y}^T X\\
X^T \tilde{y} & X^T X
\end{bmatrix}.
\end{align*}
Thus, defining $r$ as in the statement of the lemma and $A := (X^TX)^{-1}$, and using the well-known formula for the inverse of a block matrix, 
we get
\begin{align}\label{eq2}
(\tilde{X}^T\tilde{X})^{-1} &= \frac{r}{n}
\begin{bmatrix}
1 & - \tilde{y}^T XA\\
-AX^T \tilde{y} & A + AX^T\tilde{y}\tilde{y}^T XA
\end{bmatrix}.
\end{align}
From this identity, it is clear that $\hat{\sigma}(\hat{\beta}_{PV}) = \sigma \sqrt{r/n}$. Combining \eqref{eq1} and \eqref{eq2}, and letting $e_1 := (1,0,\ldots,0)^T$, we get
\begin{align*}
\hat{\beta}_{PV} &=  \frac{r}{n} e_1^T
\begin{bmatrix}
1 & - \tilde{y}^T XA\\
-AX^T \tilde{y} & A + AX^T\tilde{y}\tilde{y}^T XA
\end{bmatrix}
\begin{bmatrix}
\tilde{y}^T(X\beta_0+\epsilon)\\
X^T(X\beta_0 + \epsilon)
\end{bmatrix}\\
&= 
\frac{r}{n} \tilde{y}^T(I-XAX^T)(X\beta_0+\epsilon)\\
&= \frac{r}{n}\tilde{y}^T (I-XAX^T)\epsilon,
\end{align*}
where the last step holds because $(I - XAX^T)X = X - X(X^TX)^{-1}X^TX = 0$. It is easy to see that this is the same as the claimed identity.
\end{proof}

Thus, to understand the limiting behavior of $\sqrt{n}\hat{\beta}_{PV}$, we have to understand the limiting behaviors of $r$, $u$, $V$, and $W$ (and we already know that $(\frac{1}{n}X^TX)^{-1}\to\Sigma^{-1}$). We will now carry out these tasks. The first step is the following lemma. The matrices $B$ and $D$ defined below will be important later.

\begin{lmm}\label{zlmm}
As $n\to \infty$,
\begin{align*}
&\frac{1}{n}(Z^T Z+\lambda I) \stackrel{P}{\to} \Gamma^T \Sigma \Gamma +( \sigma_z^2 +\kappa) I, \ \ \ \frac{1}{n}Z^T X \stackrel{P}{\to} \Gamma^T\Sigma. 
\end{align*}
As a consequence,
\begin{align*}
&\frac{1}{n}(\tZ^T \tZ +\lambda I) \stackrel{P}{\to} \begin{bmatrix}
1+\kappa  & \Theta \Gamma\\
\Gamma^T \Theta^T & \Gamma^T \Sigma \Gamma + (\sigma_z^2+\kappa) I
\end{bmatrix} =: B, \\
&\frac{1}{n}\tZ^T X \stackrel{P}{\to}
\begin{bmatrix}
\Theta \\
\Gamma^T \Sigma 
\end{bmatrix} =: D.
\end{align*}
Lastly, the matrix $B$ defined above is nonsingular. 
\end{lmm}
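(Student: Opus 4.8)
The plan is to prove the two displayed convergences first, assemble the block limits, and leave the nonsingularity of $B$ for last. Writing $Z=X\Gamma+E$ and treating $X$ as the nonrandom design, I would expand
\[
\frac{1}{n}Z^TZ=\Gamma^T\Big(\tfrac{1}{n}X^TX\Big)\Gamma+\Gamma^T\Big(\tfrac{1}{n}X^TE\Big)+\Big(\tfrac{1}{n}E^TX\Big)\Gamma+\tfrac{1}{n}E^TE
\]
and handle the four pieces in turn. The first converges (deterministically, hence in probability) to $\Gamma^T\Sigma\Gamma$ by hypothesis (3). For the cross terms, each entry of $\tfrac{1}{n}X^TE$ is a mean-zero Gaussian with variance $\tfrac{\sigma_z^2}{n}\big(\tfrac{1}{n}X^TX\big)_{jj}\to 0$, so $\tfrac{1}{n}X^TE\stackrel{P}{\to}0$ by Chebyshev. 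The last term satisfies $\tfrac{1}{n}E^TE\stackrel{P}{\to}\sigma_z^2 I$ by the weak law of large numbers applied to the finitely many entries (diagonal entries $\to\sigma_z^2$, off-diagonal $\to 0$). Adding $\tfrac{\lambda}{n}I\to\kappa I$ from hypothesis (1) gives the first limit, and the shorter computation $\tfrac{1}{n}Z^TX=\Gamma^T\big(\tfrac{1}{n}X^TX\big)+\tfrac{1}{n}E^TX\stackrel{P}{\to}\Gamma^T\Sigma$ gives the second.

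For the consequence, I would observe that $\tZ$ is $Z$ with a leading column of ones adjoined, so the corners of $\tfrac{1}{n}(\tZ^T\tZ+\lambda I)$ are $\tfrac{1}{n}(\mathbf{1}^T\mathbf{1}+\lambda)=1+\tfrac{\lambda}{n}\to 1+\kappa$ in the top-left, $\tfrac{1}{n}\mathbf{1}^TZ$ off-diagonal, and $\tfrac{1}{n}(Z^TZ+\lambda I)$ in the bottom-right block. Since $\tfrac{1}{n}\mathbf{1}^TZ=\big(\tfrac{1}{n}\mathbf{1}^TX\big)\Gamma+\tfrac{1}{n}\mathbf{1}^TE\stackrel{P}{\to}\Theta\Gamma$, using hypothesis (4) together with $\tfrac{1}{n}\mathbf{1}^TE\stackrel{P}{\to}0$ (law of large numbers), the assembled blocks converge to $B$. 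Likewise $\tfrac{1}{n}\tZ^TX$ stacks $\tfrac{1}{n}\mathbf{1}^TX\to\Theta$ over $\tfrac{1}{n}Z^TX\to\Gamma^T\Sigma$, giving $D$.

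The only genuinely non-routine step is the nonsingularity of $B$, which I would establish by showing $B\succ 0$. For $a\in\R$ and $v\in\R^p$, set $w:=\Gamma v\in\R^e$; then
\[
\begin{pmatrix}a\\v\end{pmatrix}^{\!T}\!B\begin{pmatrix}a\\v\end{pmatrix}=\big[a^2+2a\Theta w+w^T\Sigma w\big]+\kappa a^2+(\sigma_z^2+\kappa)\lVert v\rVert^2.
\]
The bracket equals $(a,w^T)\left[\begin{smallmatrix}1&\Theta\\\Theta^T&\Sigma\end{smallmatrix}\right](a,w^T)^T$, and the matrix there is the limit of the positive semidefinite matrices $\tfrac{1}{n}[\mathbf{1}\ X]^T[\mathbf{1}\ X]$, hence is itself positive semidefinite, so the bracket is $\ge 0$. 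If $v\ne 0$ the form is then at least $\sigma_z^2\lVert v\rVert^2>0$, while if $v=0$ and $a\ne 0$ it equals $(1+\kappa)a^2>0$. Thus the form is strictly positive for every nonzero $(a,v)$, so $B$ is positive definite and in particular nonsingular. The crux is precisely the assumption $\sigma_z^2>0$: the irreducible noise in $Z=X\Gamma+E$ makes the bottom-right block strictly positive in the $v$-direction, so invertibility holds even though the limiting design Gram matrix $\left[\begin{smallmatrix}1&\Theta\\\Theta^T&\Sigma\end{smallmatrix}\right]$ may only be semidefinite.
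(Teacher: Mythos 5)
Your proof is correct. The convergence arguments are essentially the paper's: the same four-term expansion of $\tfrac{1}{n}Z^TZ$, Gaussian variance bounds for the cross terms (you do it entrywise where the paper uses bilinear forms $a^TE^TX\Gamma b$, but this is cosmetic), the weak law for $\tfrac{1}{n}E^TE$, and the same block assembly for $\tilde{Z}$. Where you genuinely diverge is the nonsingularity of $B$. The paper reduces $\det(B)$ to the Schur complement $1-\Theta\Gamma(\Gamma^T\Sigma\Gamma+\sigma_z^2I)^{-1}\Gamma^T\Theta^T$, rewrites it as $\lim_n \tfrac{1}{n}\mathbf{1}^T\bigl(I-\tfrac{1}{n}X\Gamma(\Gamma^T\Sigma\Gamma+\sigma_z^2I)^{-1}\Gamma^TX^T\bigr)\mathbf{1}$, and argues via the singular values of the inner matrix that all eigenvalues of $I-\tfrac{1}{n}X\Gamma(\cdots)\Gamma^TX^T$ are eventually bounded below by some $\delta>0$, giving $\det(B)>0$. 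You instead prove the stronger statement $B\succ 0$ directly: the substitution $w=\Gamma v$ splits the quadratic form into a piece controlled by the limiting Gram matrix $\bigl[\begin{smallmatrix}1&\Theta\\\Theta^T&\Sigma\end{smallmatrix}\bigr]$ (positive semidefinite as a limit of Gram matrices, which is exactly the same structural fact the paper exploits, namely that $\mathbf{1}$ lies in the column space of the augmented design) plus the strictly positive remainder $\kappa a^2+(\sigma_z^2+\kappa)\lVert v\rVert^2$. Your route is shorter, avoids the spectral-decomposition detour and the determinant formula (whose statement in the paper silently drops the positive factor $\det(\Gamma^T\Sigma\Gamma+(\sigma_z^2+\kappa)I)$ and the $\kappa$'s), and makes transparent that $\sigma_z^2>0$ is the operative assumption; the paper's version yields only $\det(B)\ne 0$, whereas positive definiteness of $B$ is what one actually wants downstream (e.g.\ for $P_0\sim N_{p+1}(0,\sigma_x^2B)$ to be nondegenerate).
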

\begin{proof}
From the definition of $Z$, we get
\begin{align}
\frac{1}{n}Z^TZ &= \frac{1}{n}(\Gamma^T X^T + E^T)(X\Gamma+E)\notag \\
&= \frac{1}{n}\Gamma^T X^T X\Gamma + \frac{1}{n}E^TX\Gamma + \frac{1}{n}\Gamma^TX^TE + \frac{1}{n}E^TE.\label{zexp}
\end{align}
Since $\frac{1}{n}X^TX \to \Sigma$, the first term on the right converges to $\Gamma^T \Sigma \Gamma$.
Since the elements of $E$ are i.i.d.~$N(0,\sigma_z^2)$, we get that for any $a,b\in  \R^p$, 
\[
a^TE^TX\Gamma b \sim N(0, \|a\|^2\|X\Gamma b\|^2). 
\]
Since 
\[
\frac{1}{n} \|X\Gamma b\|^2 = \frac{1}{n} b^T \Gamma^TX^T X \Gamma b \to b^T \Gamma^T \Sigma \Gamma b,
\]
this shows that $\frac{1}{n}a^TE^TX\Gamma b$ converges to $0$ in probability. Since this holds for any $a,b\in \R^p$, this shows that the second term on the right in \eqref{zexp} converges to the zero matrix in probability. The third term is just the transpose of the second, so that also goes to zero in probability. Finally, for the fourth term, notice that for any $1\le i, j\le p$, the $(i,j)^{\mathrm{th}}$ entry of $\frac{1}{n}E^TE$ is
\[
\frac{1}{n}\sum_{k=1}^n e_{ki} e_{kj},
\]
where $e_{kl}$ denotes the $(k,l)^{\mathrm{th}}$ entry of $E$. Clearly, this converges to zero (in probability) if $i\ne j$, and to $\sigma_z^2$ if $i=j$. This proves that $\frac{1}{n}E^TE \to \sigma_z^2 I$, thus completing the proof of the first identity claimed in the lemma.

Next, note that
\begin{align*}
\frac{1}{n}Z^T X &= \frac{1}{n}(\Gamma^T X^T + E^T)X= \frac{1}{n}\Gamma^TX^TX + \frac{1}{n}E^T X.
\end{align*}
The first term converges to $\Gamma^T \Sigma$. We claim that the second term converges in probability to the zero matrix of order $p\times e$. To see this, take any $a\in \R^p$ and $b\in \R^e$. Note that
\[
a^T E^T X b \sim N(0, \sigma_z^2 \|a\|^2 \|Xb\|^2). 
\]
Since 
\[
\frac{1}{n} \|Xb\|^2 = \frac{1}{n} b^T X^T X b \to b^T \Sigma b,
\]
this shows that $\frac{1}{n}a^TE^TX b$ converges to $0$ in probability. Since this holds for any $a,b$, this completes the proof of the second identity claimed in the lemma. 

To prove the third identity, note that
\begin{align*}
\frac{1}{n}\tZ^T \tZ &=\frac{1}{n}
\begin{bmatrix}
1^T\\
Z^T
\end{bmatrix}
\begin{bmatrix}
1 & Z
\end{bmatrix}= 
\begin{bmatrix}
1 & \frac{1}{n}1^TZ\\
\frac{1}{n}Z^T 1 & \frac{1}{n}Z^TZ
\end{bmatrix}.
\end{align*}
Note that
\begin{align*}
\frac{1}{n}1^TZ &= \frac{1}{n}1^T X \Gamma + \frac{1}{n}1^T E.
\end{align*}
By the law of large numbers, $\frac{1}{n}1^TE$ converges in probability to the zero matrix. By assumption, $\frac{1}{n}1^TX \to \Theta$, and we have already established the limit of $\frac{1}{n}Z^TZ$. This proves the third identity claimed in the lemma.

For the fourth identity, note that
\begin{align*}
\frac{1}{n}\tZ^T X &=
\frac{1}{n}
\begin{bmatrix}
1^T \\
Z^T 
\end{bmatrix} X
= \frac{1}{n}\begin{bmatrix}
1^TX \\
Z^T X
\end{bmatrix}.
\end{align*}
Since $\frac{1}{n}1^TX \to \Theta$ and $\frac{1}{n}Z^TX \to \Gamma^T\Sigma$ in probability, this completes the proof.

Lastly, to show that $B$ is nonsingular, note that by the well-known formula for the determinant of a block matrix, 
\begin{align*}
\det(B) &= 1 - \Theta \Gamma(\Gamma^T \Sigma \Gamma + \sigma_z^2 I)^{-1} \Gamma^T\Theta^T\\
&= \lim_{n\to \infty} \biggl(1 - \frac{1}{n^2}1^TX \Gamma(\Gamma^T \Sigma \Gamma + \sigma_z^2 I)^{-1} \Gamma^TX^T 1\biggr)\\
&= \lim_{n\to \infty} \frac{1}{n}1^T\biggl(I - \frac{1}{n}X \Gamma(\Gamma^T \Sigma \Gamma + \sigma_z^2 I)^{-1} \Gamma^TX^T\biggr)1.
\end{align*}
Now, note that the eigenvalues of the positive definite matrix $\frac{1}{n}X \Gamma(\Gamma^T \Sigma \Gamma + \sigma_z^2 I)^{-1} \Gamma^TX^T$ are the same as its singular values. The top $p$ singular values of this matrix are the same as the singular values of $ (\Gamma^T \Sigma \Gamma + \sigma_z^2 I)^{-1} \Gamma^T(\frac{1}{n}X^TX)\Gamma$, and the rest are zero. But this latter matrix converges to $ (\Gamma^T \Sigma \Gamma + \sigma_z^2 I)^{-1} \Gamma^T\Sigma\Gamma$ as $n\to\infty$. By spectral decomposition, it is clear that the singular values of this matrix are all strictly less than $1$. This shows that there is some $\delta >0$ such that for all large enough $n$, the eigenvalues of $I - \frac{1}{n}X \Gamma(\Gamma^T \Sigma \Gamma + \sigma_z^2 I)^{-1} \Gamma^TX^T$ are all bigger than $\delta$. This implies, by the above formula for $\det(B)$, that $\det(B)>0$.
\end{proof}

 Recall that $\tZ_{(i)}$ denotes the matrix $\tZ$ after omitting the $i^{\mathrm{th}}$ row. Define $X_{(i)}$, $\epsilon_{(i)}$  and $y_{(i)}$ similarly. Let $\|M\|$ denote the $\ell^2$ operator norm of a matrix $M$ --- that is, $\|M\|$ is the supremum of $\|Mx\|$ over all $x\in \R^d$ with $\|x\|=1$, where $d$ is the number of columns of $M$. Recall that $\|M\|$ is equal to the largest singular value of $M$, which implies that $\|M\| = \|M^T\|$.
\begin{lmm}\label{zlmm1}
As $n\to \infty$, we have $\|(\tZ^T\tZ+\lambda I)^{-1}\| = O_P(n^{-1})$, and
\[
\max_{1\le i\le n} \|(\tZ^T\tZ+\lambda I )^{-1}-(\tZ_{(i)}^T \tZ_{(i)}+\lambda I)^{-1}\| = o_P(n^{-1}). 
\]
\end{lmm}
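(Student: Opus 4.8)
The plan is to treat the two claims separately, handling the first by a direct continuous-mapping argument and the second through the rank-one structure of the leave-one-out update.

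For the first bound, Lemma \ref{zlmm} gives $\frac{1}{n}(\tZ^T\tZ + \lambda I) \stackrel{P}{\to} B$, and $B$ is nonsingular; since it is a limit of positive semidefinite matrices it is in fact positive definite. Matrix inversion is continuous at invertible matrices, so $n(\tZ^T\tZ+\lambda I)^{-1} = \bigl(\frac{1}{n}(\tZ^T\tZ+\lambda I)\bigr)^{-1} \stackrel{P}{\to} B^{-1}$, whence $\|n(\tZ^T\tZ+\lambda I)^{-1}\| \stackrel{P}{\to} \|B^{-1}\| < \infty$ and $\|(\tZ^T\tZ+\lambda I)^{-1}\| = O_P(n^{-1})$. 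Writing $M := \tZ^T\tZ + \lambda I$, the form I will actually use below is the equivalent eigenvalue statement: $\lambda_{\min}(M) \geq cn$ with probability tending to one, for a constant $c>0$ (take $c = \lambda_{\min}(B)/2$), which follows because $\lambda_{\min}(\frac{1}{n}M) \to \lambda_{\min}(B) > 0$.

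For the second bound I would exploit the rank-one update. Let $\tilde{z}_i$ be the $i$-th row of $\tZ$ as a column vector and set $M_{(i)} := \tZ_{(i)}^T\tZ_{(i)} + \lambda I$, so that $M = M_{(i)} + \tilde{z}_i\tilde{z}_i^T$. The Sherman–Morrison formula gives
\[
M^{-1} - M_{(i)}^{-1} = -\frac{M_{(i)}^{-1}\tilde{z}_i\tilde{z}_i^T M_{(i)}^{-1}}{1 + \tilde{z}_i^T M_{(i)}^{-1}\tilde{z}_i},
\]
a symmetric rank-one matrix whose denominator is at least $1$ since $M_{(i)}$ is positive definite. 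Taking operator norms,
\[
\|M^{-1} - M_{(i)}^{-1}\| \leq \|M_{(i)}^{-1}\tilde{z}_i\|^2 \leq \|M_{(i)}^{-1}\|^2\,\|\tilde{z}_i\|^2.
\]
This reduces everything to two uniform-in-$i$ estimates, $\max_i \|M_{(i)}^{-1}\| = O_P(n^{-1})$ and $\max_i \|\tilde{z}_i\|^2 = o_P(n^{2/3})$; granting these, the product is $O_P(n^{-2})\cdot o_P(n^{2/3}) = o_P(n^{-4/3}) = o_P(n^{-1})$, as required.

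To establish the two uniform bounds I would bring in the theorem's hypotheses. Since $\|\tilde{z}_i\|^2 = 1 + \|z_i\|^2$ and $z_i = \Gamma^T x_i + e_i$, we have $\|z_i\|^2 \leq 2\|\Gamma\|^2\|x_i\|^2 + 2\|e_i\|^2$; here $\max_i\|x_i\|^2 = o(n^{2/3})$ by the assumption $\max_i\|x_i\| = o(n^{1/3})$, while $\|e_i\|^2/\sigma_z^2 \sim \chi^2_p$ with $p$ fixed, so a union bound over the $n$ rows and the chi-square tail give $\max_i\|e_i\|^2 = O_P(\log n)$; hence $\max_i\|\tilde{z}_i\|^2 = o_P(n^{2/3})$. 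For the inverses, I would bound the smallest eigenvalue by Weyl's inequality: $\lambda_{\min}(M_{(i)}) = \lambda_{\min}(M - \tilde{z}_i\tilde{z}_i^T) \geq \lambda_{\min}(M) - \|\tilde{z}_i\|^2$. Combining $\lambda_{\min}(M) \geq cn$ from the first part with $\max_i\|\tilde{z}_i\|^2 = o_P(n)$ yields $\min_i \lambda_{\min}(M_{(i)}) \geq cn/2$ with probability tending to one, so $\max_i\|M_{(i)}^{-1}\| \leq 2/(cn) = O_P(n^{-1})$ (this simultaneously certifies invertibility of each $M_{(i)}$, which matters in the OLS case $\lambda=0$). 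The main obstacle is precisely this uniformity over $i$: each estimate is routine for a single $i$, but all $n$ leave-one-out designs must be controlled at once. The quantitative fact that makes it work is that the perturbation has norm $\max_i\|\tilde{z}_i\|^2 = o_P(n^{2/3})$, an order of magnitude below $\lambda_{\min}(M) \asymp n$; this slack, supplied by the $o(n^{1/3})$ growth condition and the fixed dimension $p$, is what gives both the uniform conditioning of the $M_{(i)}$ and the sharper $o_P(n^{-1})$ rate rather than merely $O_P(n^{-1})$.
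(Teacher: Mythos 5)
Your proof is correct and follows essentially the same route as the paper's: both arguments rest on the rank-one relation between $\tZ^T\tZ+\lambda I$ and $\tZ_{(i)}^T\tZ_{(i)}+\lambda I$, the limit $\frac{1}{n}(\tZ^T\tZ+\lambda I)\stackrel{P}{\to}B$ from Lemma \ref{zlmm}, and a uniform bound $\max_i\|\tZ_i\|^2$ of smaller order than $n$. The only difference is bookkeeping: the paper expands $(\tZ_{(i)}^T\tZ_{(i)}+\lambda I)^{-1}$ around $(\tZ^T\tZ+\lambda I)^{-1}$ and shows the denominator $1-\tZ_i(\tZ^T\tZ+\lambda I)^{-1}\tZ_i^T$ is $1-o_P(1)$ uniformly, whereas you expand in the opposite direction so the denominator is automatically at least $1$ and instead control $\max_i\|(\tZ_{(i)}^T\tZ_{(i)}+\lambda I)^{-1}\|$ via Weyl's inequality; both yield the same $O_P(n^{-2})\cdot o_P(n)$ bound, and your version has the small bonus of certifying invertibility of each leave-one-out matrix explicitly.
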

\begin{proof}
Note that
\[
\tZ_{(i)}^T\tZ_{(i)} +\lambda I = \tZ^T\tZ+\lambda I - \tZ_i^T\tZ_i,
\]
and thus, by the well-known formula for the inverse of a rank-one perturbation, 
\begin{align}\label{zmain}
(\tZ_{(i)}^T\tZ_{(i)}+  \lambda I)^{-1} &= (\tZ^T\tZ+ \lambda I)^{-1} + \frac{(\tZ^T\tZ+ \lambda I)^{-1}\tZ_i^T\tZ_i(\tZ^T\tZ+ \lambda I)^{-1}}{1-\tZ_i(\tZ^T\tZ+ \lambda I)^{-1}\tZ_i^T}.
\end{align}
Let $B$ be as in Lemma \ref{zlmm}. Then by Lemma \ref{zlmm}, 
\begin{align}\label{zzeq}
\frac{1}{n}(\tZ^T\tZ + \lambda I)= B + o_P(1),
\end{align}
where $o_P(1)$ denotes a random $p\times p$ matrix whose entries are all $o_P(1)$ as $n\to\infty$. Since $B$ is nonsingular, equation~\eqref{zzeq} implies that 
\[
\biggl(\frac{1}{n}(\tZ^T\tZ +  \lambda I)\biggr)^{-1} = B^{-1} + o_P(1).
\]
In particular, this implies that $\|(\frac{1}{n}(\tZ^T\tZ+  \lambda I))^{-1}\| = O_P(1)$, proving the first claim of the lemma. Next, since the entries of $E$ are i.i.d.~$N(0,\sigma_z^2)$, standard results about maxima of Gaussian random variables imply that $\max_{1\le i\le n}\|E_i\|=O_P(\sqrt{\log n})$, where $E_i$ denotes the $i^{\mathrm{th}}$ row of $E$. Since $\tZ_i = [1 \ \ X_i \Gamma + E_i]$ and $\max_{1\le i\le n}\|X_i\|=o(\sqrt{n})$ by assumption, this implies that 
\begin{align}\label{zimax}
\max_{1\le i\le n} \|\tZ_i\|=o_P(\sqrt{n}).
\end{align} 
Thus,
\begin{align*}
\max_{1\le i\le n} |\tZ_i(\tZ^T\tZ+  \lambda I)^{-1}\tZ_i^T| &\le \max_{1\le i\le n}\|\tZ_i\|^2\|(\tZ^T\tZ+  \lambda I)^{-1}\| \\
&= o_P(n)O_P(n^{-1})= o_P(1).
\end{align*}
Similarly,
\begin{align*}
\max_{1\le i\le n}\|(\tZ^T\tZ+  \lambda I)^{-1}\tZ_i^T\tZ_i(\tZ^T\tZ +  \lambda I)^{-1}\| &\le \max_{1\le i\le n}\|\tZ_i (\tZ^T\tZ +  \lambda I)^{-1}\|^2\\
&\le \max_{1\le i\le n} \|\tZ_i\|^2 \|(\tZ^T\tZ+  \lambda I)^{-1}\|^2 \\
&= o_P(n) O_P(n^{-2}) = o_P(n^{-1}). 
\end{align*}
Plugging these estimates into \eqref{zmain}, we get the second claim.
\end{proof}
\begin{lmm}\label{zlmm2}
As $n\to \infty$, we have $\|\tZ^TX\| = O_P(n)$, and
\[
\max_{1\le i\le n} \|\tZ^TX - \tZ_{(i)}^T X_{(i)}\| = o_P(n). 
\]
\end{lmm}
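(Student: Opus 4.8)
The plan is to treat the two assertions separately. The first is essentially a corollary of Lemma~\ref{zlmm}, while the second rests on the single algebraic observation that deleting one row changes the cross-product $\tZ^T X$ by exactly one rank-one matrix.

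For the bound $\|\tZ^T X\| = O_P(n)$, I would simply quote Lemma~\ref{zlmm}, which gives $\frac{1}{n}\tZ^T X \stackrel{P}{\to} D$. Since this forces the operator norm $\|\frac{1}{n}\tZ^T X\|$ to converge in probability to the finite constant $\|D\|$, multiplying back by $n$ yields $\|\tZ^T X\| = O_P(n)$ with no additional effort.

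The substance is in the second claim, and the key step is the decomposition of $\tZ^T X$ into row contributions. Writing $\tZ_k$ and $X_k$ for the $k^{\mathrm{th}}$ rows of $\tZ$ and $X$, I would note that $\tZ^T X = \sum_{k=1}^n \tZ_k^T X_k$ and $\tZ_{(i)}^T X_{(i)} = \sum_{k \ne i} \tZ_k^T X_k$, so the difference collapses to the single outer product
\[
\tZ^T X - \tZ_{(i)}^T X_{(i)} = \tZ_i^T X_i.
\]
Using the fact that the operator norm of an outer product $ab^T$ is $\|a\|\,\|b\|$, this identity gives $\|\tZ^T X - \tZ_{(i)}^T X_{(i)}\| = \|\tZ_i\|\,\|X_i\|$. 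I would then take the maximum over $i$ and plug in the two row-norm bounds already available: the estimate $\max_{1\le i\le n}\|\tZ_i\| = o_P(\sqrt{n})$ from \eqref{zimax}, and $\max_{1\le i\le n}\|X_i\| = o(\sqrt{n})$ from assumption~(4). Their product is $o_P(n)$, which is exactly the claimed rate.

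I do not expect any real obstacle here: the whole proof turns on recognizing the rank-one structure of the leave-one-out perturbation, after which the rates follow immediately from ingredients assembled in the proof of Lemma~\ref{zlmm1} and from assumption~(4). The only point meriting a moment's care is checking that the product of the two maxima is $o_P(n)$ and not merely $O_P(n)$, which is ensured because both factors carry strict little-$o$ rates.
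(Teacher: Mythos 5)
Your proposal is correct and takes essentially the same approach as the paper: the main (second) claim is handled identically, via the rank-one identity $\tZ^T X - \tZ_{(i)}^T X_{(i)} = \tZ_i^T X_i$ combined with the bounds $\max_{1\le i\le n}\|\tZ_i\| = o_P(\sqrt{n})$ from \eqref{zimax} and $\max_{1\le i\le n}\|X_i\| = o(\sqrt{n})$ from the hypotheses. The only (immaterial) difference is in the first claim, where the paper uses submultiplicativity, $\|\tZ^T X\|\le \|\tZ\|\,\|X\| = O_P(\sqrt{n})\cdot O(\sqrt{n})$, instead of invoking the entrywise convergence $\frac{1}{n}\tZ^T X \stackrel{P}{\to} D$ from Lemma \ref{zlmm}; both arguments are valid since the matrix dimensions are fixed.
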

\begin{proof}
Note that by \eqref{zzeq}, 
\[
\|\tZ^T\| = \|\tZ\| = \sup_{x\in \R^p, \, \|x\|=1} \sqrt{x^T \tZ^T\tZ x} = O_P(\sqrt{n}).
\]
On the other hand, since $\frac{1}{n}X^TX \to \Sigma$, 
\[
\|X\| = \sup_{x\in \R^e, \, \|x\|=1} \sqrt{x^T X^TX x} = O(\sqrt{n}).
\]
This proves that $\|\tZ^TX\|\le \|\tZ^T\|\|X\| = O_P(n)$.  Next note that 
\begin{align*}
\max_{1\le i\le n}\|\tZ^T X - \tZ_{(i)}^T X_{(i)}\| &= \max_{1\le i\le n}\|\tZ_i^T X_i\|\le  \max_{1\le i\le n}\|\tZ_i\|\| X_i\|. 
\end{align*}
Since $\max_{1\le i\le n}\|\tZ_i\| = o_P(\sqrt{n})$ by equation \eqref{zimax}, and $\max_{1\le i\le n}\|X_i\| = o(\sqrt{n})$ by assumption, this proves the second claim of the lemma.
\end{proof}

For $1\le i\le n$, define
\begin{align}\label{nudef}
H_i :=  \tZ_i (\tZ_{(i)}^T \tZ_{(i)} + \lambda I)^{-1} \tZ_{(i)}^T, \ \ \  \nu_i := H_i X_{(i)}\beta_0,
\end{align}
and let $\nu$ denote the vector of the $\nu_i$'s. Then note that 
\begin{align}
\tilde{y}_i &= \tZ_i (\tZ_{(i)}^T \tZ_{(i)}+ \lambda I)^{-1} \tZ_{(i)}^T y_{(i)}\notag \\
&= H_i (X_{(i)}\beta_0 + \epsilon_{(i)})\notag \\
&= \nu_i + H_i \epsilon_{(i)}. \label{ydecomp}
\end{align}
Define 
\[
G := \tZ (\tZ^T \tZ+ \lambda I)^{-1} \tZ^TX,
\]
and let $G_i$ denote the $i^{\mathrm{th}}$ row of $G$. Define 
\[
\nu' := G\beta_0. 
\]
The following lemma shows that under the hypotheses of Theorem \cref{thm:ridge}, $\nu'$ gives a close approximation to $\nu$.
\begin{lmm}\label{nunu}
We have $\|\nu'-\nu\|=o_P(1)$ as $n\to \infty$.
\end{lmm}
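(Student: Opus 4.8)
\emph{Proof proposal.} The key is to recognize what $\nu'$ and $\nu$ actually are. Writing $\mu := X\beta_0 \in \R^n$ for the mean vector and $\mu_i := X_i\beta_0$ for its $i^{\mathrm{th}}$ entry, the definition $\nu'_i = \tZ_i(\tZ^T\tZ+\lambda I)^{-1}\tZ^T X\beta_0$ shows that $\nu' = \tZ(\tZ^T\tZ+\lambda I)^{-1}\tZ^T\mu$ is exactly the \emph{full-data} ridge fit of $\mu$ regressed on $\tZ$, while $\nu_i = \tZ_i(\tZ_{(i)}^T\tZ_{(i)}+\lambda I)^{-1}\tZ_{(i)}^T\mu_{(i)}$ is the corresponding \emph{leave-one-out} ridge fit at point $i$. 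Thus the lemma is a leave-one-out stability statement: the LOO fit of the deterministic signal $\mu$ agrees with the in-sample fit up to $o_P(1)$ in $\ell^2$. My plan is to make this precise through the exact rank-one leave-one-out identity, and then supply two size estimates.

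The central step is algebraic. Setting $h_i := \tZ_i(\tZ^T\tZ+\lambda I)^{-1}\tZ_i^T$ (the ridge leverage score, which is nonnegative since the inverse is positive definite), I would plug the Sherman--Morrison formula \eqref{zmain} into the expression for $\nu_i$, using $\tZ_{(i)}^T\mu_{(i)} = \tZ^T\mu - \tZ_i^T\mu_i$, and collect terms. The computation collapses to the classical leave-one-out identity
\[
\nu'_i - \nu_i = \frac{h_i\,(\mu_i - \nu'_i)}{1 - h_i}.
\]
This exact cancellation is the crux: it expresses the componentwise gap in terms of the in-sample residual $\mu_i-\nu'_i$, so that summing over $i$ produces $\|\mu-\nu'\|^2$ rather than $n$ copies of a pointwise maximum. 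Consequently
\[
\|\nu'-\nu\|^2 = \sum_{i=1}^n \frac{h_i^2(\mu_i-\nu'_i)^2}{(1-h_i)^2} \le \frac{(\max_i h_i)^2}{(1-\max_i h_i)^2}\,\|\mu-\nu'\|^2 .
\]

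It then remains to bound the two factors. For the leverages, $h_i \le \|\tZ_i\|^2\,\|(\tZ^T\tZ+\lambda I)^{-1}\|$, so by \eqref{zimax} (giving $\max_i\|\tZ_i\|^2 = o_P(n)$) and Lemma \ref{zlmm1} (giving $\|(\tZ^T\tZ+\lambda I)^{-1}\| = O_P(n^{-1})$) I get $\max_i h_i = o_P(1)$; in particular $\min_i(1-h_i)$ is bounded away from zero with probability tending to one, and the leading fraction is $o_P(1)$. For the residual factor, $\mu - \nu' = (I - \tZ(\tZ^T\tZ+\lambda I)^{-1}\tZ^T)\mu$, and the ridge hat matrix has eigenvalues $s_j^2/(s_j^2+\lambda)\in[0,1)$ (the $s_j$ being the singular values of $\tZ$), so the residual-maker matrix is a contraction and $\|\mu-\nu'\| \le \|\mu\|$. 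Finally $\|\mu\|^2 = \beta_0^T X^TX\beta_0 = (\sqrt n\beta_0)^T(\tfrac1n X^TX)(\sqrt n\beta_0) \to \alpha_0^T\Sigma\alpha_0$ by hypotheses (1)--(2), hence $\|\mu-\nu'\| = O_P(1)$. Combining, $\|\nu'-\nu\|^2 \le o_P(1)\cdot O_P(1) = o_P(1)$, which is the claim.

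The main obstacle to watch is precisely the $\ell^2$-versus-$\ell^\infty$ gap. A direct triangle-inequality bound on each $|\nu'_i-\nu_i|$ followed by $\|\nu'-\nu\|\le\sqrt n\,\max_i|\nu'_i-\nu_i|$ loses a factor of $\sqrt n$ and only yields $o_P(\sqrt n)$, which is useless. Everything hinges on the exact leave-one-out identity above, which converts the sum into the controllable quantity $\|\mu-\nu'\|^2 = O_P(1)$ while pulling out the genuinely small uniform factor $\max_i h_i = o_P(1)$; the remaining estimates are routine applications of \eqref{zimax}, Lemma \ref{zlmm1}, and the contraction property of the ridge projection.
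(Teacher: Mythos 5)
Your proof is correct, and it takes a genuinely different route from the paper's. The paper splits $\nu-\nu'$ via the intermediate quantity $H_i'X_{(i)}\beta_0$ with $H_i' = \tZ_i(\tZ^T\tZ+\lambda I)^{-1}\tZ_{(i)}^T$, and controls the two resulting pieces by triangle-inequality operator-norm bounds, invoking Lemma \ref{zlmm1} (the uniform difference-of-inverses estimate), Lemma \ref{zlmm2}, and --- crucially for the second piece --- the strengthened leverage bound $\max_i\|\tZ_i\|=o_P(n^{1/3})$ from \eqref{zimax2}, which is where the hypothesis $\max_i\|X_i\|=o(n^{1/3})$ enters. You instead push the Sherman--Morrison step \eqref{zmain} all the way to the exact leave-one-out identity $\nu'_i-\nu_i = h_i(\mu_i-\nu'_i)/(1-h_i)$ (which I have verified: $\tZ_i(\tZ_{(i)}^T\tZ_{(i)}+\lambda I)^{-1} = \tZ_i(\tZ^T\tZ+\lambda I)^{-1}/(1-h_i)$ and $\tZ_{(i)}^T\mu_{(i)}=\tZ^T\mu-\tZ_i^T\mu_i$ give $\nu_i=(\nu'_i-h_i\mu_i)/(1-h_i)$, whence the identity). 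This converts the $\ell^2$ norm of the gap into $\|\mu-\nu'\|^2=O(1)$ times the uniformly small factor $\max_i h_i^2/(1-h_i)^2$, exactly as you say. What your approach buys: it needs only $\max_i h_i = o_P(1)$, hence only $\max_i\|\tZ_i\|^2=o_P(n)$ via \eqref{zimax} and the first claim of Lemma \ref{zlmm1}; the $o(n^{1/3})$ strengthening, the second claim of Lemma \ref{zlmm1}, and Lemma \ref{zlmm2} are not needed for this lemma, and the residual factor is handled by a purely deterministic contraction bound $\|\mu-\nu'\|\le\|\mu\|\to\|\Sigma^{1/2}\alpha_0\|$ rather than the probabilistic estimates of the paper. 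The only cosmetic points: your citation of ``hypotheses (1)--(2)'' should be (2)--(3) in the numbering of \cref{thm:ridge}, and one should note that the bound $h_i^2/(1-h_i)^2\le(\max_j h_j)^2/(1-\max_j h_j)^2$ is applied on the event $\max_j h_j<1$, which has probability tending to one --- both immaterial to the conclusion.
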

\begin{proof}
Define
\begin{align}\label{hiprime}
H_i' := \tZ_i (\tZ^T \tZ+ \lambda I)^{-1} \tZ_{(i)}^T.
\end{align}
Then note that
\begin{align*}
&|(H_i X_{(i)} - H_i' X_{(i)})\beta_0| = |\tZ_i ((\tZ_{(i)}^T \tZ_{(i)}+ \lambda I)^{-1} - (\tZ^T\tZ+ \lambda I)^{-1}) \tZ_{(i)}^TX_{(i)}\beta_0|\\
&\le \|\tZ_i\|\|(\tZ_{(i)}^T \tZ_{(i)}+ \lambda I)^{-1} - (\tZ^T\tZ+ \lambda I)^{-1}\|\|\tZ_{(i)}^TX_{(i)}\|\|\beta_0\|\\
&\le \|\tZ_i\| \|\beta_0\| \max_{1\le j\le n} (\|(\tZ_{(j)}^T \tZ_{(j)}+ \lambda I)^{-1} - (\tZ^T\tZ+ \lambda I)^{-1}\|\|\tZ_{(j)}^TX_{(j)}\|).
\end{align*}
Thus, by Lemma \ref{zlmm1}, Lemma \ref{zlmm2}, Lemma \ref{zlmm}, and the assumption that $\|\beta_0\|=O(n^{-1/2})$, we get
\begin{align}
\sum_{i=1}^n ((H_i X_{(i)} - H_i' X_{(i)})\beta_0)^2 &\le O(n^{-1}) o_P(n^{-2})O_P(n^2) \sum_{i=1}^n \|\tZ_i\|^2\notag \\
&= o_P(n^{-1}) \tr(ZZ^T) =  o_P(n^{-1}) \tr(Z^TZ) = o_P(1).\label{newnu1}
\end{align}
Next, note that
\begin{align*}
|(H_i' X_{(i)} - G_i)\beta_0| &= |\tZ_i (\tZ^T \tZ+ \lambda I)^{-1} (\tZ_{(i)}^TX_{(i)}- \tZ^TX) \beta_0|\\
&= |\tZ_i (\tZ^T \tZ+ \lambda I)^{-1} \tZ_i^TX_i \beta_0|\\
&= |\tZ_i (\tZ^T \tZ+ \lambda I)^{-1} \tZ_i^T| |X_i \beta_0|\\
&\le |\tZ_i (\tZ^T \tZ+ \lambda I)^{-1} \tZ_i^T|\|X_i\|\|\beta_0\|.
\end{align*}
Now recall that in Theorem \cref{thm:ridge}, the assumption about the maximum norm of $\|X_i\|$ is strengthened to $\max_{1\le i\le n} \|X_i\|=o(n^{1/3})$. Using this and following the same steps as in the derivation of \eqref{zimax}, we now get
\begin{align}\label{zimax2}
\max_{1\le i\le n} \|\tZ_i\|=o_P(n^{1/3}).
\end{align} 
Thus, by Lemma \ref{zlmm1} and the fact that $\|\beta_0\|=O(n^{-1/2})$, we get
\begin{align}
\sum_{i=1}^n ((H_i' X_{(i)} - G_i)\beta_0)^2 &= o(n^{2/3})O(n^{-1}) \sum_{i=1}^n (\tZ_i (\tZ^T \tZ+ \lambda I)^{-1} \tZ_i^T)^2\notag \\
&\le o(n^{-1/3}) n \|(\tZ^T\tZ+ \lambda I)^{-1}\|^2\max_{1\le i\le n}\|\tZ_i\|^4 \notag \\
&= o(n^{2/3}) O_P(n^{-2}) o_P(n^{4/3}) = o_P(1).\label{newnu2}
\end{align}
Combining \eqref{newnu1} and \eqref{newnu2}, we get the desired result.
\end{proof}

Next, define
\[
\delta_i := H_i \epsilon_{(i)},
\]
and let $\delta$ be the vector of the $\delta_i$'s. Define 
\[
\delta' := \tZ(\tZ^T\tZ+ \lambda I)^{-1}\tZ^T\epsilon.
\]
The following lemma shows that under the hypotheses of Theorem \cref{thm:ridge}, $\delta'$ gives a close approximation of $\delta$.
\begin{lmm}\label{deltadelta}
As $n\to \infty$, $\|\delta-\delta'\|=o_P(1)$.
\end{lmm}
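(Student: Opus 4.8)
The plan is to mirror the proof of Lemma \ref{nunu}, with the noise $\epsilon$ now playing the role that $X\beta_0$ played there. I would introduce the same intermediate operator $H_i' := \tZ_i(\tZ^T\tZ + \lambda I)^{-1}\tZ_{(i)}^T$ as in \eqref{hiprime} and set $\delta_i'' := H_i'\epsilon_{(i)}$, letting $\delta''$ be the vector of the $\delta_i''$'s. By the triangle inequality it then suffices to show $\|\delta - \delta''\| = o_P(1)$ and $\|\delta'' - \delta'\| = o_P(1)$ separately. The first difference isolates the effect of replacing the leave-one-out Gram inverse by the full one, while the second isolates the effect of adding back the $i$th summand: since $\tZ_{(i)}^T\epsilon_{(i)} = \tZ^T\epsilon - \tZ_i^T\epsilon_i$, one gets the clean identity $\delta_i'' - \delta_i' = -\tZ_i(\tZ^T\tZ+\lambda I)^{-1}\tZ_i^T\epsilon_i$.

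For the first difference, $\delta_i - \delta_i'' = \tZ_i\Delta_i\tZ_{(i)}^T\epsilon_{(i)}$, where $\Delta_i := (\tZ_{(i)}^T\tZ_{(i)}+\lambda I)^{-1} - (\tZ^T\tZ+\lambda I)^{-1}$, so that $|\delta_i - \delta_i''| \le \|\tZ_i\|\,\|\Delta_i\|\,\|\tZ_{(i)}^T\epsilon_{(i)}\|$. The one new ingredient I would establish is a uniform bound $\max_i \|\tZ_{(i)}^T\epsilon_{(i)}\| = O_P(\sqrt n)$: writing $\tZ_{(i)}^T\epsilon_{(i)} = \tZ^T\epsilon - \tZ_i^T\epsilon_i$, each coordinate of $\tZ^T\epsilon$ is conditionally Gaussian with variance $\sigma_x^2(\tZ^T\tZ)_{jj} = O_P(n)$, so $\|\tZ^T\epsilon\| = O_P(\sqrt n)$, and the rank-one correction is $\max_i\|\tZ_i\|\,\max_i|\epsilon_i| = o_P(n^{1/3})\,O_P(\sqrt{\log n}) = o_P(\sqrt n)$ by \eqref{zimax2} and standard Gaussian maxima. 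Combining this with $\max_i\|\Delta_i\| = o_P(n^{-1})$ from Lemma \ref{zlmm1} and $\sum_i\|\tZ_i\|^2 = \tr(\tZ^T\tZ) = O_P(n)$ from Lemma \ref{zlmm}, I obtain $\sum_i(\delta_i-\delta_i'')^2 \le (\max_i\|\Delta_i\|^2)(\max_i\|\tZ_{(i)}^T\epsilon_{(i)}\|^2)\sum_i\|\tZ_i\|^2 = o_P(n^{-2})\,O_P(n)\,O_P(n) = o_P(1)$, in direct parallel with \eqref{newnu1}.

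The second difference is where I expect the main obstacle. Writing $h_i := \tZ_i(\tZ^T\tZ+\lambda I)^{-1}\tZ_i^T \ge 0$, I must show $\sum_i (\delta_i'' - \delta_i')^2 = \sum_i h_i^2\epsilon_i^2 = o_P(1)$. The naive bound $\sum_i h_i^2\epsilon_i^2 \le (\max_i h_i)^2\sum_i\epsilon_i^2$ only gives $o_P(n^{-2/3})\,O_P(n) = o_P(n^{1/3})$, which diverges, so the crude pointwise control of $h_i$ is not enough. The key idea is to instead exploit that the $h_i$ sum to a bounded quantity — the effective degrees of freedom — namely $\sum_i h_i = \tr\big((\tZ^T\tZ+\lambda I)^{-1}\tZ^T\tZ\big) \le p+1$, since the eigenvalues of $(\tZ^T\tZ+\lambda I)^{-1}\tZ^T\tZ$ are $\mu_j/(\mu_j+\lambda) \le 1$. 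Conditioning on $\tZ$ then yields $\E\big[\sum_i h_i^2\epsilon_i^2 \mid \tZ\big] = \sigma_x^2\sum_i h_i^2 \le \sigma_x^2(\max_i h_i)\sum_i h_i \le \sigma_x^2(p+1)\max_i h_i$, and since $\max_i h_i \le \max_i\|\tZ_i\|^2\,\|(\tZ^T\tZ+\lambda I)^{-1}\| = o_P(n^{2/3})\,O_P(n^{-1}) = o_P(n^{-1/3})$ by \eqref{zimax2} and Lemma \ref{zlmm1}, this conditional expectation is $o_P(1)$. Because a nonnegative random variable whose conditional expectation tends to zero in probability itself tends to zero in probability, I conclude $\sum_i h_i^2\epsilon_i^2 = o_P(1)$. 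Putting the two parts together gives $\|\delta - \delta'\| \le \|\delta - \delta''\| + \|\delta'' - \delta'\| = o_P(1)$, as claimed.
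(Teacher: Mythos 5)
Your proof is correct and follows essentially the same route as the paper: the same intermediate operator $H_i'$ from \eqref{hiprime}, the same two-term decomposition via $\delta_i-H_i'\epsilon_{(i)}$ and $H_i'\epsilon_{(i)}-\delta_i'=-\tZ_i(\tZ^T\tZ+\lambda I)^{-1}\tZ_i^T\epsilon_i$, and the same conditional-Markov device to pass from a vanishing conditional expectation to convergence in probability. The only differences are in the bookkeeping — you control the first piece pathwise through a uniform bound on $\max_i\|\tZ_{(i)}^T\epsilon_{(i)}\|$ where the paper computes $\E[((H_i-H_i')\epsilon_{(i)})^2\mid E]=\sigma^2\|H_i-H_i'\|^2$ directly, and you bound $\sum_i(\tZ_i(\tZ^T\tZ+\lambda I)^{-1}\tZ_i^T)^2$ via the degrees-of-freedom identity rather than operator norms — and both yield the same rates.
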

\begin{proof}
Let $H_i'$ be defined as in equation \eqref{hiprime}. Then note that 
\begin{align*}
&\E[((H_i-H_i')\epsilon_{(i)})^2|E] = \sigma^2\|H_i - H_i'\|^2\\
&= \sigma^2\|\tZ_i ((\tZ_{(i)}^T \tZ_{(i)}+ \lambda I)^{-1} - (\tZ^T\tZ+ \lambda I)^{-1}) \tZ_{(i)}^T\|^2\\
&\le \sigma^2 \|\tZ_i\|^2 \max_{1\le j\le n} \|(\tZ_{(i)}^T \tZ_{(i)}+ \lambda I)^{-1} - (\tZ^T\tZ+ \lambda I)^{-1})\|^2 \|\tZ_{(i)}^T\|^2.
\end{align*}
Note that
\[
\|\tZ_{(i)}^T\|^2 = \|\tZ_{(i)}^T\tZ_{(i)}\| = \|\tZ^T \tZ - \tZ_i^T \tZ_i\|\le \|\tZ^T\tZ\|,
\]
where the inequality holds because 
\[
x^T(\tZ^T \tZ - \tZ_i^T \tZ_i)x \le x^T \tZ^T \tZ x
\]
for any $x\in \R^{p+1}$. 
Thus, by Lemma \ref{zlmm} and  Lemma \ref{zlmm1}, 
\begin{align}
\sum_{i=1}^n\E[((H_i-H_i')\epsilon_{(i)})^2|E] &\le o_P(n^{-2})O_P(n)\sum_{i=1}^n \|\tZ_i\|^2\notag\\
&= o_P(n^{-1})\tr(\tZ\tZ^T) \notag \\
&= o_P(n^{-1}) \tr(\tZ^T \tZ) = o_P(n^{-1})O_P(n)=o_P(1).\label{delta1}
\end{align}
For ease of notation, let
\[
Q := \sum_{i=1}^n((H_i-H_i')\epsilon_{(i)})^2.
\]
By Chebychev's inequality, for any $\delta>0$,
\begin{align*}
\P(Q >\delta |E) \le \frac{\E(Q|E) }{\delta}.
\end{align*}
This implies that for any $\eta, \delta >0$, 
\begin{align*}
\P(Q > \delta) &\le \P(Q > \delta, \, \E(Q|E) \le \eta ) + \P(\E(Q|E) > \eta)\\
&\le \E[\P(Q> \delta |E) 1_{\{\E(Q|E)\le \eta  )\}}] + \P(\E(Q|E) > \eta)\\
&\le \E\biggl[\frac{\E(Q|E) }{\delta}1_{\{\E(Q|E)\le \eta )\}}\biggr] + \P(\E(Q|E) > \eta)\\
&\le \frac{\eta}{\delta} +\P(\E(Q|E) > \eta).
\end{align*}
But by equation \eqref{delta1}, $\E(Q|E)\to 0$ in probability. Thus, the above inequality shows that for any $\eta, \delta >0$,
\[
\limsup_{n\to \infty} \P(Q > \delta) \le \frac{\eta}{\delta}.
\]
Since $\eta$ is arbitrary, this shows that the left side is, in fact, equal to zero for any $\delta>0$. Thus, we get
\begin{align}\label{ytilde1}
\sum_{i=1}^n((H_i-H_i')\epsilon_{(i)})^2 = o_P(1).
\end{align}
Next, note that 
\begin{align*}
H_i'\epsilon_{(i)} - \delta_i' &= \tZ_i (\tZ^T\tZ+\lambda I)^{-1} (\tZ_{(i)}^T\epsilon_{(i)} - \tZ^T \epsilon) = \tZ_i (\tZ^T\tZ+\lambda I)^{-1}\tZ_i^T \epsilon_i.
\end{align*}
Thus,
\begin{align}
\sum_{i=1}^n\E((H_i'\epsilon_{(i)} - \delta_i' )^2 |E) &\le \sigma^2 \sum_{i=1}^n(\tZ_i (\tZ^T\tZ+\lambda I)^{-1}\tZ_i^T)^2\notag \\
&\le  \sigma^2 \|(\tZ^T \tZ+\lambda I)^{-1}\|^2 \sum_{i=1}^n \|\tZ_i\|^2\notag \\
&= O_P(n^{-2}) \tr(\tZ\tZ^T) \notag \\
&= O_P(n^{-2})\tr(\tZ^T\tZ) \notag \\
&= O_P(n^{-2}) O_P(n) = O_P(n^{-1}).\label{delta2}
\end{align}
Combining \eqref{delta1} and \eqref{delta2} gives the desired result.
\end{proof}
We also need the following two lemmas, which show that $\nu-\nu'$ is nearly orthogonal to $\epsilon$ and $\delta-\delta'$ is at an asymptotically constant angle to $\epsilon$.
\begin{lmm}\label{extra1}
As $n\to \infty$, $(\nu-\nu')^T\epsilon = o_P(1)$.
\end{lmm}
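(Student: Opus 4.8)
The plan is to exploit the independence of $\epsilon$ from everything that defines $\nu$ and $\nu'$, and then combine this with the bound $\|\nu - \nu'\| = o_P(1)$ already established in Lemma \ref{nunu}. Recall that $\nu_i = H_i X_{(i)}\beta_0$ and $\nu' = G\beta_0$ depend on the data only through $\tZ$, together with the deterministic quantities $X$ and $\beta_0$; since $Z = X\Gamma + E$ with $X$ and $\Gamma$ deterministic, $\tZ$ is a measurable function of $E$ alone. As $\epsilon$ has i.i.d.\ $N(0,\sigma^2)$ entries that are independent of $E$, the vector $\nu - \nu'$ is $E$-measurable and hence independent of $\epsilon$.

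First I would condition on $E$. Given $E$, the quantity $(\nu - \nu')^T \epsilon$ is a fixed linear functional of the Gaussian vector $\epsilon$, so it is conditionally Gaussian,
\[
(\nu - \nu')^T \epsilon \mid E \sim N\bigl(0,\ \sigma^2 \|\nu - \nu'\|^2\bigr),
\]
and in particular $\E[((\nu-\nu')^T\epsilon)^2 \mid E] = \sigma^2 \|\nu - \nu'\|^2$, which converges to zero in probability by Lemma \ref{nunu}.

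It then remains to pass from this conditional statement to the unconditional conclusion $(\nu - \nu')^T\epsilon = o_P(1)$. I would do this with the same truncation-and-Chebychev device used at the end of the proof of Lemma \ref{deltadelta}. Writing $a := (\nu - \nu')^T \epsilon$, Chebychev's inequality gives $\P(|a| > \delta \mid E) \le \sigma^2\|\nu-\nu'\|^2 / \delta^2$ for any $\delta > 0$; splitting on the event $\{\sigma^2\|\nu - \nu'\|^2 \le \eta\}$ then yields
\[
\P(|a| > \delta) \le \frac{\eta}{\delta^2} + \P\bigl(\sigma^2\|\nu-\nu'\|^2 > \eta\bigr),
\]
and since the last probability tends to zero for every fixed $\eta > 0$, letting $n \to \infty$ and then $\eta \to 0$ forces $\limsup_{n\to\infty} \P(|a| > \delta) = 0$.

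I do not anticipate any substantial obstacle here: the entire argument rests on the independence of $\epsilon$ and $E$ together with Lemma \ref{nunu}, and the only mild subtlety---transferring convergence of the conditional second moment to convergence in probability of the quantity itself---is dispatched by reusing verbatim the conditioning argument already present in the proof of Lemma \ref{deltadelta}.
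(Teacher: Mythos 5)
Your proposal is correct and follows essentially the same route as the paper: condition on $E$, note that $\E[((\nu-\nu')^T\epsilon)^2\mid E]=\sigma^2\|\nu-\nu'\|^2=o_P(1)$ by Lemma \ref{nunu}, and then transfer to an unconditional $o_P(1)$ statement via the truncation-and-Chebychev argument (the paper cites this as ``the technique used to derive \eqref{ytilde1}''). Your version even spells out the conditional Gaussianity and the $\sigma^2$ factor more carefully than the paper does.
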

\begin{proof}
Note that $\nu-\nu'$ has no dependence on $\epsilon$. Thus 
\begin{align*}
\E[((\nu-\nu')^T\epsilon)^2|E] = \|\nu-\nu'\|^2.
\end{align*}
By Lemma \ref{nunu}, the right side is $o_P(1)$. By the technique used to derive \eqref{ytilde1}, this proves the lemma.
\end{proof}
\begin{lmm}\label{extra2}
As $n\to \infty$, $(\delta-\delta')^T\epsilon =-\sigma^2(p+1)+ o_P(1)$.
\end{lmm}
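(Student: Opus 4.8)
The plan is to condition on $E$ (equivalently, on $\tZ$), so that $\tZ$, the leave-one-out operators $H_i,H_i'$, and the scalars $h_{ii}:=\tZ_i(\tZ^T\tZ+\lambda I)^{-1}\tZ_i^T$ are all frozen and $\epsilon$ is the only remaining source of randomness. I would start from the decomposition behind Lemma \ref{deltadelta}, namely $\delta_i=H_i\epsilon_{(i)}$ and $\delta_i'=H_i'\epsilon_{(i)}+h_{ii}\epsilon_i$, with $H_i'$ as in \eqref{hiprime}. Subtracting and multiplying by $\epsilon_i$ gives the exact identity
\[
(\delta-\delta')^T\epsilon=\sum_{i=1}^n (H_i-H_i')\epsilon_{(i)}\,\epsilon_i-\sum_{i=1}^n h_{ii}\,\epsilon_i^2=:S_1-S_2 .
\]
I would then show that the cross term $S_1$ is $o_P(1)$, that the diagonal quadratic term $S_2$ concentrates at its conditional mean $\sigma^2\tr(H)$ with $H:=\tZ(\tZ^T\tZ+\lambda I)^{-1}\tZ^T$, and finally that $\tr(H)\to p+1$; together these yield $(\delta-\delta')^T\epsilon=-\sigma^2(p+1)+o_P(1)$.

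For $S_1$, note that conditionally on $E$ it is a Gaussian quadratic form $\epsilon^TM\epsilon$ whose matrix $M$ has zero diagonal, since each summand pairs $\epsilon_i$ with $(H_i-H_i')\epsilon_{(i)}$, which does not involve $\epsilon_i$. Hence $\E[S_1\mid E]=\sigma^2\tr(M)=0$, and the Gaussian variance bound gives $\var(S_1\mid E)\le 2\sigma^4\|M\|_{\mathrm{HS}}^2\le 2\sigma^4\sum_{i=1}^n\|H_i-H_i'\|^2$. The right-hand side is exactly $2\sigma^2$ times the quantity shown to be $o_P(1)$ in \eqref{delta1} in the proof of Lemma \ref{deltadelta}. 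Transferring from conditional to unconditional convergence by the conditional-Chebyshev argument already used to derive \eqref{ytilde1}, I would conclude $S_1=o_P(1)$.

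For $S_2=\sum_i h_{ii}\epsilon_i^2$, the $\epsilon_i$ are i.i.d.\ $N(0,\sigma^2)$ and independent of the frozen $h_{ii}$, so $\E[S_2\mid E]=\sigma^2\sum_i h_{ii}=\sigma^2\tr(H)$ and $\var(S_2\mid E)=2\sigma^4\sum_i h_{ii}^2$. Since $h_{ii}\le\|\tZ_i\|^2\|(\tZ^T\tZ+\lambda I)^{-1}\|$, the bounds $\max_i\|\tZ_i\|=o_P(n^{1/3})$ from \eqref{zimax2} and $\|(\tZ^T\tZ+\lambda I)^{-1}\|=O_P(n^{-1})$ from Lemma \ref{zlmm1} give $\max_i h_{ii}=o_P(1)$, while $\sum_i h_{ii}=\tr(H)\le p+1$ is $O_P(1)$. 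Hence $\sum_i h_{ii}^2\le(\max_i h_{ii})\sum_i h_{ii}=o_P(1)$, so $\var(S_2\mid E)=o_P(1)$ and, by the same transfer argument, $S_2=\sigma^2\tr(H)+o_P(1)$.

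It remains --- and this is the crux --- to evaluate $\lim\tr(H)$. Using the identity
\[
\tr(H)=\tr\big((\tZ^T\tZ+\lambda I)^{-1}\tZ^T\tZ\big)=(p+1)-\lambda\,\tr\big((\tZ^T\tZ+\lambda I)^{-1}\big),
\]
together with $\tfrac1n(\tZ^T\tZ+\lambda I)\stackrel{P}{\to}B$ from Lemma \ref{zlmm}, the plan is to show that the penalty contribution $\lambda\,\tr((\tZ^T\tZ+\lambda I)^{-1})$ is negligible, so that $\tr(H)\to p+1$ and $S_2\to\sigma^2(p+1)$. Combining the three pieces then gives $(\delta-\delta')^T\epsilon=S_1-S_2=-\sigma^2(p+1)+o_P(1)$. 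The main obstacle is precisely this last trace limit: controlling the penalty term $\lambda\,\tr((\tZ^T\tZ+\lambda I)^{-1})$ is where the assumption on the limiting behavior of $\lambda$ must enter, and it is the one step that is genuinely delicate rather than routine, since for a fixed or slowly growing penalty the term vanishes but in general it governs the exact constant; the two concentration steps above are comparatively mechanical given the operator-norm estimates of Lemmas \ref{zlmm1} and \ref{zlmm2}.
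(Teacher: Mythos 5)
Your decomposition is exactly the paper's: your $S_1$ is the paper's $\gamma^T\epsilon$ (note $H_i-H_i'=\tilde{Z}_i L_i \tilde{Z}_{(i)}^T$ with $L_i$ as in the paper's proof) and your $S_2$ is the paper's $\eta^T\epsilon$, and both of your concentration steps are sound and essentially match the paper's estimates: the zero-diagonal Gaussian quadratic-form bound recycling \eqref{delta1} for $S_1$, and the conditional mean/variance argument with $\max_i h_{ii}=o_P(n^{-1/3})$ for $S_2$ (the paper uses a crude $3\sigma^4$ where you use the sharp $2\sigma^4$; immaterial). The genuine gap is the step you yourself flag as the crux and then resolve in the wrong direction: under the hypotheses of Theorem \ref{thm:ridge}, the penalty contribution is \emph{not} negligible. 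Since $\lambda/n\to\kappa$ and $\frac{1}{n}(\tilde{Z}^T\tilde{Z}+\lambda I)\stackrel{P}{\to}B$ with $B$ nonsingular (Lemma \ref{zlmm}),
\[
\lambda\operatorname{Tr}\bigl((\tilde{Z}^T\tilde{Z}+\lambda I)^{-1}\bigr)
=\frac{\lambda}{n}\operatorname{Tr}\Bigl(\bigl({\textstyle\frac{1}{n}}(\tilde{Z}^T\tilde{Z}+\lambda I)\bigr)^{-1}\Bigr)
\stackrel{P}{\to}\kappa\operatorname{Tr}(B^{-1}),
\]
which is strictly positive whenever $\kappa>0$. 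So the trace limit you deferred is actually routine given Lemma \ref{zlmm} --- but it gives $\operatorname{Tr}(H)\to(p+1)-\kappa\operatorname{Tr}(B^{-1})$, not $p+1$, and hence $(\delta-\delta')^T\epsilon=-\sigma^2(p+1)+\sigma^2\kappa\operatorname{Tr}(B^{-1})+o_P(1)$. Your plan of "showing the penalty term is negligible" can only succeed when $\kappa=0$, i.e.\ in the least-squares special case of Theorem \ref{thm:linear}.

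In fairness, the lemma as printed states the conclusion without the $\kappa$ term, which appears to be a typo in the statement rather than the intended result: the paper's own proof derives the limit $\sigma^2(p+1)-\sigma^2\kappa\operatorname{Tr}(B^{-1})$ for $\mathbb{E}(\eta^T\epsilon\,|\,E)$, and the corrected version is what is carried forward into Lemma \ref{u0v0lmm} ($U_0-U_0'=-\sigma^2(p+1)+\sigma^2\kappa\operatorname{Tr}(B^{-1})+o_P(1)$) and into the numerator of the limit law \eqref{eq:ridge}, which contains the term $+\sigma_x^2\kappa\operatorname{Tr}(B^{-1})$. Had you pushed your plan through as written, you would have produced a "proof" of the literal statement that contradicts the ridge theorem it is meant to serve; the fix is simply to evaluate the trace limit via Lemma \ref{zlmm} and keep $\kappa\operatorname{Tr}(B^{-1})$, rather than argue it away. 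Your instinct that this term "governs the exact constant" was exactly right --- the error was concluding that the constant should be zero under the paper's asymptotics.
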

\begin{proof}
For ease of notation, let
\[
L_i := (\tZ_{(i)}^T \tZ_{(i)}+\lambda I)^{-1} - (\tZ^T\tZ+\lambda I)^{-1}.
\]
Then note that
\begin{align*}
\delta_i - \delta_i' &= \tZ_i L_i \tZ_{(i)}^T\epsilon_{(i)} + \tZ_i (\tZ^T\tZ+\lambda I)^{-1} (\tZ_{(i)}^T\epsilon_{(i)} - \tZ^T \epsilon)\\
&= \tZ_i L_i (\tZ^T\epsilon - \tZ_i^T\epsilon_i) - \tZ_i (\tZ^T\tZ+\lambda I)^{-1} \tZ_i^T\epsilon_i.
\end{align*}
Let's call the first term on the right $\gamma_i$ and the second term $\eta_i$. Then note that 
Thus,
\begin{align*}
\gamma^T \epsilon&= \epsilon^T\tZ \sum_{i=1}^n L_i\tZ_i^T\epsilon_i - \sum_{i=1}^n \tZ_i L_i \tZ_i^T\epsilon_i^2\\
&= \sum_{i, j=1}^n \epsilon_j\tZ_j L_i\tZ_i^T\epsilon_i- \sum_{i=1}^n \tZ_i L_i \tZ_i^T\epsilon_i^2\\
&= \sum_{1\le i\ne j\le n} \epsilon_j\tZ_j L_i \tZ_i^T\epsilon_i.
\end{align*}
Now, if $i\ne j$ and $i'\ne j'$, then 
\[
\E(\epsilon_i\epsilon_j \epsilon_{i'}\epsilon_{j'}) =
\begin{cases}
1 &\text{ if } \{i,j\}=\{i',j'\},\\
0 &\text{ otherwise.}
\end{cases}
\]
Thus,
\begin{align*}
\E[(\gamma^T\epsilon)^2|E] &\le \sum_{1\le i\ne j\le n} (\tZ_j L_i \tZ_i^T)^2  + \sum_{1\le i\ne j\le n} (\tZ_j L_i \tZ_i^T)(\tZ_i L_j \tZ_j^T)\\
&\le \sum_{i, j=1}^n \|L_i\|^2 \|\tZ_i\|^2\|\tZ_j\|^2  + \sum_{i,j=1}^n \|L_i\|\|L_j\|\|\tZ_i\|^2\|\tZ_j\|^2\\
&\le 2\biggl(\sum_{i=1}^n \|\tZ_i\|^2\biggr) \max_{1\le i\le n}\|L_i\|^2.
\end{align*}
By Lemma \ref{zlmm1}, $ \max_{1\le i\le n}\|L_i\|^2 = o_P(n^{-2})$. By Lemma \ref{zlmm},
\[
\sum_{i=1}^n \|\tZ_i\|^2 = \tr(\tZ^T\tZ) = O_P(n). 
\]
Thus,  by the technique for proving \eqref{ytilde1}, we get $\gamma^T\epsilon= o_P(1)$. Next, note that
\begin{align*}
\eta^T \epsilon &= \sum_{i=1}^n \tZ_i (\tZ^T\tZ+\lambda I)^{-1} \tZ_i^T\epsilon_i^2.
\end{align*}
Thus, 
\begin{align*}
\E(\eta^T\epsilon|E) &= \sigma^2  \sum_{i=1}^n \tZ_i (\tZ^T\tZ+\lambda I)^{-1} \tZ_i^T\\
&= \sigma^2 \sum_{i=1}^n \tr((\tZ^T\tZ+\lambda I)^{-1} \tZ_i^T\tZ_i)\\
&= \sigma^2 \tr((\tZ^T\tZ+\lambda I)^{-1}\tZ^T\tZ)\\
&= \sigma^2 (p+1) -\sigma^2 \lambda \tr((\tZ^T\tZ+\lambda I)^{-1}). 
\end{align*}
Applying Lemma \ref{zlmm} to the above expression, we get
\[
\E(\eta^T\epsilon|E) \stackrel{P}{\to} \sigma^2 (p+1) - \sigma^2 \kappa \tr(B^{-1}).
\]
On the other hand, by Lemma \ref{zlmm} and equation \eqref{zimax2},
\begin{align*}
\var(\eta^T \epsilon|E) &= 3\sigma^4 \sum_{i=1}^n (\tZ_i (\tZ^T\tZ+\lambda I)^{-1} \tZ_i^T)^2\\
&\le 3\sigma^4 \|(\tZ^T\tZ+\lambda I)^{-1}\|^2\sum_{i=1}^n\|\tZ_i\|^4\\
&\le 3\sigma^4 \|(\tZ^T\tZ+\lambda I)^{-1}\|^2 (\max_{1\le i\le n}\|\tZ_i\|^2)\sum_{i=1}^n\|\tZ_i\|^2\\
&= o_P(n^{-2}) o_P(n^{2/3}) O_P(n) = o_P(n^{-1/3}).
\end{align*}
This completes the proof.
\end{proof}
Define
\[
y' := \nu' + \delta'.
\]
The following lemma shows that $y'$ is a close approximation of $\tilde{y}$ and that $\|y'\|$ and $\|\tilde{y}\|$ are both of order $1$.
\begin{lmm}\label{ynorms}
As $n\to \infty$, $\|\tilde{y}-y'\|=o_P(1)$, $\|\tilde{y}\|=O_P(1)$, and $\|y'\|=O_P(1)$.
\end{lmm}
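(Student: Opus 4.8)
The plan is to establish the three assertions in sequence, deducing the bound on $\|\tilde y\|$ from the other two. The whole argument rests on the decompositions $\tilde y = \nu + \delta$ and $y' = \nu' + \delta'$, together with the two approximation lemmas already proved.

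First I would prove $\|\tilde y - y'\| = o_P(1)$. Reading equation \eqref{ydecomp} coordinatewise gives $\tilde y = \nu + \delta$, while by definition $y' = \nu' + \delta'$. Hence $\tilde y - y' = (\nu - \nu') + (\delta - \delta')$, and the triangle inequality yields $\|\tilde y - y'\| \le \|\nu - \nu'\| + \|\delta - \delta'\|$. Both summands are $o_P(1)$ --- the first by Lemma \ref{nunu} and the second by Lemma \ref{deltadelta} --- which settles the first claim.

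The substantive part is the bound $\|y'\| = O_P(1)$, and I would control $\nu'$ and $\delta'$ separately through the matrix $P := \tZ(\tZ^T\tZ + \lambda I)^{-1}\tZ^T$, noting that $\nu' = PX\beta_0$ and $\delta' = P\epsilon$. The key observation is purely spectral: if $s_1,\dots,s_{p+1}$ denote the singular values of $\tZ$, then $P$ is symmetric positive semidefinite with eigenvalues $s_j^2/(s_j^2 + \lambda) \in [0,1]$, so that $\|P\| \le 1$ and $\tr(P^2) \le p+1$ hold deterministically for every $\lambda \ge 0$. For the signal part, $\|\nu'\| \le \|P\|\,\|X\|\,\|\beta_0\| \le \|X\|\,\|\beta_0\|$; since $\frac1n X^TX \to \Sigma$ forces $\|X\| = O(\sqrt n)$ (as already used in the proof of Lemma \ref{zlmm2}) and the hypothesis $\sqrt n\,\beta_0 \to \alpha_0$ gives $\|\beta_0\| = O(n^{-1/2})$, these two rates cancel to leave $\|\nu'\| = O(1)$. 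For the noise part, I would condition on $E$ (which, together with the deterministic $X$, fixes $P$ while leaving $\epsilon$ Gaussian) and compute $\E[\|\delta'\|^2 \mid E] = \sigma^2 \tr(P^2) \le \sigma^2(p+1)$; since this conditional bound is a fixed constant, taking expectations and applying Markov's inequality gives $\|\delta'\| = O_P(1)$. Combining, $\|y'\| \le \|\nu'\| + \|\delta'\| = O_P(1)$.

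Finally, $\|\tilde y\| \le \|\tilde y - y'\| + \|y'\| = o_P(1) + O_P(1) = O_P(1)$, giving the second claim. The only point demanding care is the spectral bound on $P$: because $\lambda/n \to \kappa$ permits $\lambda$ to grow linearly in $n$, it is essential that $\|P\| \le 1$ and $\tr(P^2) \le p+1$ hold uniformly in $\lambda$, so that the $O(\sqrt n)$ growth of $\|X\|$ is exactly offset by the $O(n^{-1/2})$ decay of $\|\beta_0\|$, and the conditional variance of $\delta'$ stays bounded by the fixed quantity $\sigma^2(p+1)$, independently of the penalty level.
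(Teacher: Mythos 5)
Your proof is correct and follows the same skeleton as the paper's: the first claim via the decomposition $\tilde{y}-y'=(\nu-\nu')+(\delta-\delta')$ together with Lemmas \ref{nunu} and \ref{deltadelta}, then separate bounds on $\|\nu'\|$ and $\|\delta'\|$, and finally the triangle inequality for $\|\tilde{y}\|$. The one place you diverge is in how the middle step is handled. The paper rewrites $\|\nu'\|^2$ as a quadratic form in $\sqrt{n}\beta_0$ and the normalized matrices $\frac{1}{n}X^T\tZ$ and $\frac{1}{n}(\tZ^T\tZ+\lambda I)$, then invokes the probabilistic limits from Lemma \ref{zlmm}; likewise it computes the limit of $\E[\|\delta'\|^2\mid E]$ exactly as $\sigma^2(p+1)-\sigma^2\kappa\tr(B^{-1})+o_P(1)$. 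You instead observe that $\nu'=PX\beta_0$ and $\delta'=P\epsilon$ for the ridge smoother $P=\tZ(\tZ^T\tZ+\lambda I)^{-1}\tZ^T$, whose eigenvalues $s_j^2/(s_j^2+\lambda)$ lie in $[0,1]$, giving the deterministic bounds $\|P\|\le 1$ and $\tr(P^2)\le p+1$ uniformly in $\lambda\ge 0$. This is more elementary --- it needs only $\|X\|=O(\sqrt{n})$, $\|\beta_0\|=O(n^{-1/2})$, and Markov's inequality, with no appeal to Lemma \ref{zlmm} or to the limit $\lambda/n\to\kappa$ --- and as a side benefit it is exact where the paper's displayed identity for $\|\nu'\|^2$ silently drops a factor of $\tZ^T\tZ(\tZ^T\tZ+\lambda I)^{-1}$ (harmless there, since the dropped factor is a contraction, but your version avoids the issue entirely). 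The paper's route buys the precise limiting constant for $\E[\|\delta'\|^2\mid E]$, which it reuses elsewhere (e.g.\ in Lemma \ref{extra2}), whereas yours buys uniformity in the penalty level with less machinery; for the purposes of this lemma alone, either suffices.
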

\begin{proof}
Since $\tilde{y}=\nu+\delta$ and $y' = \nu'+\delta'$, it follows from Lemma \ref{nunu} and Lemma \ref{deltadelta} that $\|\tilde{y}-y'\|=o_P(1)$. Next, note that 
\begin{align*}
\|\nu'\|^2 &= \beta_0^T X^T \tZ (\tZ^T \tZ+\lambda I)^{-1}\tZ^TX\beta_0\\
&= (\sqrt{n}\beta_0)^T({\textstyle\frac{1}{n}}X^T \tZ)({\textstyle\frac{1}{n}}\tZ^T \tZ+\lambda I)^{-1} ({\textstyle\frac{1}{n}}\tZ^T X)(\sqrt{n}\beta_0)
\end{align*}
By Lemma \ref{zlmm} and the assumption that $\sqrt{n}\beta_0\to \alpha_0$, it follows that the above expression is $O_P(1)$. Thus, $\|\nu'\|=O_P(1)$. Next, note that
\begin{align*}
\E[\|\delta'\|^2 |E] &= \sigma^2 \tr(\tZ(\tZ^T\tZ+\lambda I)^{-1}\tZ^T) \\
&= \sigma^2 \tr((\tZ^T\tZ+\lambda I)^{-1}\tZ^T\tZ) \\
&= \sigma^2(p+1)-\sigma^2\kappa \tr(B^{-1}) + o_P(1).
\end{align*}
Thus, by the argument used to derive \eqref{ytilde1}, $\|\delta'\|=O_P(1)$. Combining, we get $\|y'\|=O_P(1)$. Since $\|\tilde{y}-y'\|=o_P(1)$, this also proves that $\|\tilde{y}\|=O_P(1)$.
\end{proof}
Define two random variables 
\begin{align*}
U_0 &:= \tilde{y}^T(I-X(X^TX)^{-1}X^T)\epsilon,\\
V_0 &:=\tilde{y}^T(I-X(X^TX)^{-1}X^T)\tilde{y}.
\end{align*}
Define $U_0'$ and $V_0'$ to be the random variables obtained by replacing $\tilde{y}$ with $y'$ in the definitions of $U_0$ and $V_0$. The following lemma shows that $U_0'$ and $V_0'$ are close approximations of $U_0$ and $V_0$.
\begin{lmm}\label{u0v0lmm}
As $n\to \infty$, $U_0-U_0' = -\sigma^2(p+1) +\sigma^2\kappa \tr(B^{-1})+ o_P(1)$ and $V_0-V_0' = o_P(1)$.
\end{lmm}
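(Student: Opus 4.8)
The plan is to exploit that both quantities are bilinear or quadratic forms in the difference $\tilde y - y'$, which Lemma \ref{ynorms} already controls. Write $M := I - X(X^TX)^{-1}X^T$ for the orthogonal projection onto the orthocomplement of the column space of $X$, so that $M$ is symmetric with $\|M\|\le 1$, and note $U_0-U_0' = (\tilde y - y')^T M\epsilon$ while $V_0 - V_0' = \tilde y^T M\tilde y - y'^T M y'$.

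For $V_0-V_0'$, I would telescope:
\[
V_0 - V_0' = \tilde y^T M(\tilde y - y') + (\tilde y - y')^T M y'.
\]
Cauchy--Schwarz together with $\|M\|\le 1$ bounds the two terms by $\|\tilde y\|\,\|\tilde y - y'\|$ and $\|\tilde y - y'\|\,\|y'\|$ respectively. Since Lemma \ref{ynorms} gives $\|\tilde y - y'\| = o_P(1)$ and $\|\tilde y\|, \|y'\| = O_P(1)$, both terms are $o_P(1)$, whence $V_0 - V_0' = o_P(1)$. This half is routine.

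For $U_0 - U_0'$ the key observation is that $\|\epsilon\| = O_P(\sqrt n)$, so a blunt Cauchy--Schwarz against all of $\epsilon$ only yields $o_P(1)\cdot O_P(\sqrt n)$, which is useless; the two directions of $M\epsilon = \epsilon - X(X^TX)^{-1}X^T\epsilon$ must therefore be treated separately. The projected part lives in the fixed-dimensional column space of $X$: since $\E\|X(X^TX)^{-1}X^T\epsilon\|^2 = \sigma^2\rank(X) = O(1)$, that piece is $O_P(1)$, so $|(\tilde y - y')^T X(X^TX)^{-1}X^T\epsilon| \le \|\tilde y - y'\|\cdot O_P(1) = o_P(1)$ by Lemma \ref{ynorms}. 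For the unprojected part I split $\tilde y - y' = (\nu-\nu') + (\delta-\delta')$ and quote the two dedicated lemmas directly: Lemma \ref{extra1} gives $(\nu-\nu')^T\epsilon = o_P(1)$, and Lemma \ref{extra2} supplies the exact constant $(\delta-\delta')^T\epsilon = -\sigma^2(p+1) + \sigma^2\kappa\tr(B^{-1}) + o_P(1)$. Collecting the contributions yields $U_0 - U_0' = -\sigma^2(p+1) + \sigma^2\kappa\tr(B^{-1}) + o_P(1)$, as claimed.

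The genuinely delicate estimates have all been carried out already in Lemmas \ref{extra1}, \ref{extra2}, and \ref{ynorms}; the only real obstacle in assembling them is the dichotomy just described --- recognizing that the full-dimensional direction $\epsilon$ cannot be controlled by norm bounds and must be routed through the second-moment computations of Lemmas \ref{extra1}--\ref{extra2}, whereas the component of $\epsilon$ in the fixed finite-dimensional range of $X$ is safely dispatched by Cauchy--Schwarz against the small factor $\tilde y - y'$. Once that split is made, everything reduces to plugging in the prior estimates.
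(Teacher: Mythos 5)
Your proposal is correct and follows essentially the same route as the paper's proof: the $V_0-V_0'$ bound via telescoping and the norm-reducing property of the projection, and the $U_0-U_0'$ computation by splitting $M\epsilon$ into $\epsilon$ minus the finite-rank piece $X(X^TX)^{-1}X^T\epsilon$ (controlled by its $O_P(1)$ second moment together with $\|\tilde y - y'\| = o_P(1)$) and then invoking Lemmas \ref{extra1} and \ref{extra2} for $(\nu-\nu')^T\epsilon$ and $(\delta-\delta')^T\epsilon$. No gaps.
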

\begin{proof}
Since $Q_X := I-X(X^TX)^{-1}X^T$ is a projection operator, it reduces Euclidean norm. Consequently, by Lemma \ref{ynorms}
\begin{align*}
|V_0-V_0'| &\le \|\tilde{y}-y'\|\|Q_X\tilde{y}\| + \|y'\|\|Q_X(\tilde{y}-y')\|\\
&\le  \|\tilde{y}-y'\|\|\tilde{y}\| + \|y'\|\|\tilde{y}-y')\| = o_P(1).
\end{align*}
Next, let $P_X := I-Q_X = X(X^TX)^{-1}X^T$. Note that 
\begin{align*}
\E[\|P_X\epsilon\|^2|E] &= \tr(P_X^TP_X) = \tr((X^TX)^{-1}(X^TX)) = e.
\end{align*}
Thus, by the method of proving \eqref{ytilde2}, $\|P_X\epsilon\|=O_P(1)$. Combining this with Lemma \ref{ynorms}, Lemma \ref{extra1} and Lemma \ref{extra2}, we get
\begin{align*}
U_0 - U_0' &= (\tilde{y}-y')^T\epsilon - (\tilde{y}-y')^TP_X\epsilon\\
&= (\nu-\nu')^T\epsilon + (\delta-\delta')^T \epsilon + O_P(\|\tilde{y}-y'\| \|P_X\epsilon\|)\\
&= -\sigma^2(p+1)+\sigma^2\kappa \tr(B^{-1})+ o_P(1).
\end{align*} 
This completes the proof of the lemma.
\end{proof}

We are  now ready to complete the proof of \cref{thm:ridge}.
\begin{proof}[Proof of \cref{thm:ridge}]
Let 
\[
P :=\frac{1}{\sqrt{n}} \tZ^T \epsilon, \ \ \ Q := \frac{1}{\sqrt{n}} X^T\epsilon.
\]
Then Lemma \ref{zlmm} and a simple argument via characteristic functions show that as $n\to \infty$, $(P,Q)\Longrightarrow (P_0, Q_0)$, where $(P_0, Q_0)$ is the pair defined in the statement of \cref{thm:ridge}. Now note that
\begin{align*}
U_0' &= (\nu' + \delta')^T  (I-X(X^TX)^{-1}X^T)\epsilon\\
&= (\beta_0^T X^T\tZ (\tZ^T \tZ+\lambda I)^{-1} \tZ^T+ \epsilon^T \tZ(\tZ^T\tZ+\lambda I)^{-1}\tZ^T)(I-X(X^TX)^{-1}X^T)\epsilon\\
&= (\sqrt{n}\beta_0)^T ({\textstyle\frac{1}{n}}X^T\tZ)({\textstyle\frac{1}{n}}(\tZ^T\tZ+\lambda I))^{-1}P \\
&\qquad - (\sqrt{n}\beta_0)^T ({\textstyle\frac{1}{n}}X^T\tZ)({\textstyle\frac{1}{n}}(\tZ^T\tZ+\lambda I))^{-1}({\textstyle\frac{1}{n}}\tZ^T X) ({\textstyle\frac{1}{n}}X^TX)^{-1}Q\\
&\qquad + P^T ({\textstyle\frac{1}{n}}(\tZ^T\tZ+\lambda I))^{-1}P - P^T ({\textstyle\frac{1}{n}}(\tZ^T\tZ+\lambda I))^{-1}({\textstyle\frac{1}{n}}\tZ^TX)({\textstyle\frac{1}{n}}X^TX)^{-1}Q.
\end{align*}
Similarly,
\begin{align*}
V_0'&= (\nu' + \delta')^T  (I-X(X^TX)^{-1}X^T)(\nu'+\delta')\\
&= (\beta_0^T X^T\tZ (\tZ^T \tZ+\lambda I)^{-1} \tZ^T+ \epsilon^T \tZ(\tZ^T\tZ+\lambda I)^{-1}\tZ^T)(I-X(X^TX)^{-1}X^T)\\
&\qquad \qquad \cdot (\tZ (\tZ^T \tZ+\lambda I)^{-1} \tZ^TX\beta_0+  \tZ(\tZ^T\tZ+\lambda I)^{-1}\tZ^T\epsilon)\\
&= (\sqrt{n}\beta_0)^T ({\textstyle\frac{1}{n}}X^T\tZ)({\textstyle\frac{1}{n}}(\tZ^T\tZ+\lambda I))^{-1}({\textstyle\frac{1}{n}}\tZ^TX)(\sqrt{n}\beta_0)\\
&\qquad - (\sqrt{n}\beta_0)^T ({\textstyle\frac{1}{n}}X^T\tZ)({\textstyle\frac{1}{n}}(\tZ^T\tZ+\lambda I))^{-1}({\textstyle\frac{1}{n}}\tZ^TX)({\textstyle\frac{1}{n}}X^TX)^{-1}\\
&\hskip2in \cdot ({\textstyle\frac{1}{n}}X^T\tZ)({\textstyle\frac{1}{n}}(\tZ^T\tZ+\lambda I))^{-1}({\textstyle\frac{1}{n}}\tZ^TX)(\sqrt{n}\beta_0)\\
&\qquad + 2 P^T({\textstyle\frac{1}{n}}(\tZ^T\tZ+\lambda I))^{-1} ({\textstyle\frac{1}{n}}\tZ^TX)(\sqrt{n}\beta_0)\\
&\qquad \qquad  - 2 P^T({\textstyle\frac{1}{n}}(\tZ^T\tZ+\lambda I))^{-1} ({\textstyle\frac{1}{n}}\tZ^TX) ({\textstyle\frac{1}{n}}X^TX)^{-1}({\textstyle\frac{1}{n}}X^T\tZ)\\
&\hskip2in \cdot({\textstyle\frac{1}{n}}(\tZ^T\tZ+\lambda I))^{-1}({\textstyle\frac{1}{n}}\tZ^TX)(\sqrt{n}\beta_0)\\
&\qquad + P^T({\textstyle\frac{1}{n}}(\tZ^T\tZ+\lambda I))^{-1}P \\
&\qquad \qquad - P^T({\textstyle\frac{1}{n}}(\tZ^T\tZ+\lambda I))^{-1}({\textstyle\frac{1}{n}}\tZ^TX)({\textstyle\frac{1}{n}}X^TX)^{-1}({\textstyle\frac{1}{n}}X^T\tZ)({\textstyle\frac{1}{n}}(\tZ^T\tZ+\lambda I))^{-1}P.
\end{align*}
By Lemma \ref{zlmm} and the fact that $(P, Q) \Longrightarrow (P_0,Q_0)$, these expressions show that as $n\to \infty$, $(U_0', V_0') \Longrightarrow (S, T)$, where 
\begin{align*}
S &= \alpha_0^T D^TB^{-1}P_0 - \alpha_0^T D^TB^{-1}D \Sigma^{-1}Q_0+ P_0^T B^{-1}P_0 - P_0^T B^{-1}D\Sigma^{-1}Q_0\\
&= (P_0+D\alpha_0)^T B^{-1}(P_0 - D\Sigma^{-1}Q_0),
\end{align*}
and 
\begin{align}
T &= \alpha_0^T D^TB^{-1}D\alpha_0- \alpha_0^T D^TB^{-1}D\Sigma^{-1}D^TB^{-1}D\alpha_0\notag \\
&\qquad + 2 P_0^TB^{-1} D\alpha_0 - 2 P_0^TB^{-1} D \Sigma^{-1}D^TB^{-1}D\alpha_0\notag \\
&\qquad + P_0^TB^{-1}P_0 - P_0^TB^{-1}D\Sigma^{-1}D^TB^{-1}P_0\notag \\
&= (P_0 + D\alpha_0)^T (B^{-1} - B^{-1}D\Sigma^{-1}D^T B^{-1}) (P_0+D\alpha_0)\notag \\
&= (P_0 + D\alpha_0)^T B^{-1}(B - D\Sigma^{-1}D^T)B^{-1} (P_0+D\alpha_0).\label{texp}
\end{align}
Since $B$ is nonsingular and $P_0\sim N_{p+1}(0, B)$, all components of $B^{-1}(P_0 + D\alpha_0)$ are nonzero with probability one. Now, note that 
\begin{align*}
B - D\Sigma^{-1} D^T &= \begin{bmatrix}
1+\kappa  & \Theta \Gamma\\
\Gamma^T \Theta^T & \Gamma^T \Sigma \Gamma + (\sigma_z^2+\kappa) I
\end{bmatrix} -
\begin{bmatrix}
\Theta \\
\Gamma^T \Sigma 
\end{bmatrix}\Sigma^{-1}
\begin{bmatrix}
\Theta^T & \Sigma \Gamma
\end{bmatrix}
\\
&=
\begin{bmatrix}
1+\kappa -\Theta\Sigma^{-1}\Theta^T & 0\\
0 & (\sigma_z^2+\kappa)I
\end{bmatrix}.
\end{align*}
Now, recall that $X$ has a column of all $1$'s. Thus, the projection of the vector $1$ onto the column space of $X$ is $1$ itself. That is, $X(X^TX)^{-1}X^T1 = 1$. Consequently,
\[
1 - \frac{1}{n} 1^TX(X^TX)^{-1}X^T1 = 1 - \frac{1}{n}1^T1 = 0.
\]
But the limit of the left side above as $n\to\infty$ is $1-\Theta \Sigma^{-1} \Theta^T$. Thus, 
\begin{align*}
B - D\Sigma^{-1} D^T  = \begin{bmatrix}
\kappa & 0\\
0 & (\sigma_z^2+\kappa)I
\end{bmatrix}.
\end{align*}
Using this in equation \eqref{texp} shows that $T > 0$ with probability one. Thus,
\[
\frac{U_0'}{\sigma\sqrt{V_0'}} \Longrightarrow \frac{S}{\sigma\sqrt{T}}. 
\]
But by Lemma \ref{mainlmm},
\begin{align*}
\frac{\hat{\beta}_{PV}}{\hat{\sigma}(\hat{\beta}_{PV}) } &= \frac{ \tilde{y}^T(I-X(X^TX)^{-1}X^T)\epsilon}{\sigma \sqrt{\tilde{y}^T(I-X(X^TX)^{-1}X^T)\tilde{y}}} = \frac{U_0}{\sigma\sqrt{V_0}}.
\end{align*}
By Lemma \ref{u0v0lmm}, this completes the proof.
\end{proof}


For ease of notation, let $Q:= (\Gamma^T \Sigma \Gamma + \sigma_z^2 I)$. Then note that
\begin{align*}
B^{-1} &= \frac{1}{\det(B)}
\begin{bmatrix}
1 & -\Theta \Gamma Q\\
-Q\Gamma^T \Theta^T & \Theta \Gamma Q\Gamma^T \Theta^T
\end{bmatrix}.
\end{align*}
Thus,
\begin{align*}
B^{-1}D &= 
\begin{bmatrix}
1 & -\Theta \Gamma Q\\
-Q\Gamma^T \Theta^T & \Theta \Gamma Q\Gamma^T \Theta^T
\end{bmatrix}
\begin{bmatrix}
\Theta \\
\Gamma^T \Sigma 
\end{bmatrix} = 
\begin{bmatrix}
\Theta - \Theta \Gamma Q \Gamma^T \Sigma\\
-Q\Gamma^T \Theta^T \Theta + \Theta \Gamma Q\Gamma^T \Theta^T\Gamma^T \Sigma 
\end{bmatrix}.
\end{align*}
To see this, let $P\Delta P^T$ be the spectral decomposition of the positive definite matrix $\Gamma^T \Sigma \Gamma$, where $P$ is orthogonal and $\Delta$ is diagonal. Then 
\begin{align*}
&B^{-1} - B^{-1}D\Gamma B^{-1} \\
&= (\Gamma^T \Sigma \Gamma + \sigma_z^2 I)^{-1}  - (\Gamma^T \Sigma \Gamma + \sigma_z^2 I)^{-1} \Gamma^T \Sigma \Gamma (\Gamma^T \Sigma \Gamma + \sigma_z^2 I)^{-1} \\
&= P(\Delta + \sigma_z^2I)^{-1} P^T -  P(\Delta  + \sigma_z^2I)^{-1} P^TP \Delta P^T  P(\Delta + \sigma_z^2 I)^{-1} P^T\\
&= P((\Delta + \sigma_z^2I)^{-1} - (\Delta + \sigma_z^2I)^{-2}\Delta) P^T.
\end{align*}
In the above expression, the matrix sandwiched between $P$ and $P^T$ is a diagonal matrix whose diagonal entries are all strictly positive. This shows that $B^{-1}-B^{-1}D\Gamma B^{-1}$ is a positive definite matrix. In particular, since $D\beta \ne 0$ by assumption, this shows that
\begin{align*}
\beta_0^T D^TB^{-1}D\beta_0 - \beta_0^T D^T B^{-1}D\Gamma B^{-1} D\beta_0 &= \beta_0^T D^T(B^{-1} - B^{-1}D\Gamma B^{-1} ) D\beta_0 > 0.
\end{align*}

\begin{align*}
\hat{\beta}_{PV} &= \frac{\tilde{y}^T(I-X(X^TX)^{-1}X^T)\epsilon}{\tilde{y}^T(I-X(X^TX)^{-1}X^T)\tilde{y}}, 
\end{align*}
and 
\begin{align*}
\hat{\sigma}(\hat{\beta}_{PV}) &= \frac{\sigma }{\sqrt{\tilde{y}^T(I-X(X^TX)^{-1}X^T)\tilde{y}}}.
\end{align*}
Thus,

The next step is to show that $\frac{1}{n}\tilde{y}^T\tilde{y}$ and $V$ converge in probability to deterministic limits. For that, we need the following lemma, which gives a decomposition of $\tilde{y}$ into a `large'  component which is a function of $E$, and a `small' component that is independent of the large component. 
\begin{lmm}\label{ylmm}
Define
\begin{align*}
\mu &:= \tZ(\tZ^T\tZ)^{-1}\tZ^TX\beta_0.
\end{align*}
Then we have, as $n\to\infty$, $\|\tilde{y}-\mu\|=o_P(\sqrt{n})$. 
\end{lmm}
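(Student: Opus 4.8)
The plan is to decompose $\tilde{y}$ exactly as in the proof of Theorem \ref{thm:ridge} specialized to $\lambda = 0$, and to recognize that $\mu$ is precisely the full-data (non-leave-one-out) analogue of the signal part of $\tilde{y}$. Concretely, writing $\tilde{y}_i = \nu_i + \delta_i$ with $\nu_i = H_i X_{(i)}\beta_0$ and $\delta_i = H_i\epsilon_{(i)}$, where now $H_i = \tZ_i(\tZ_{(i)}^T\tZ_{(i)})^{-1}\tZ_{(i)}^T$, the vector $\mu = \tZ(\tZ^T\tZ)^{-1}\tZ^T X\beta_0$ coincides with the approximating signal $\nu' = G\beta_0$ of Lemma \ref{nunu} in the case $\lambda = 0$, since $G = \tZ(\tZ^T\tZ)^{-1}\tZ^T X$. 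Thus $\tilde{y} - \mu = (\nu - \nu') + \delta$, and it suffices to control the two pieces separately.

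For the signal piece I would invoke Lemma \ref{nunu} in the special case $\kappa = 0$, which already gives $\|\nu - \nu'\| = \|\nu - \mu\| = o_P(1)$. For the noise piece I would first replace $\delta$ by the full-data projection $\delta' = \tZ(\tZ^T\tZ)^{-1}\tZ^T\epsilon$: Lemma \ref{deltadelta} gives $\|\delta - \delta'\| = o_P(1)$, and since $\tZ(\tZ^T\tZ)^{-1}\tZ^T$ is an orthogonal projection onto a $(p+1)$-dimensional subspace, conditioning on $E$ yields $\E[\|\delta'\|^2\mid E] = \sigma^2\tr(\tZ(\tZ^T\tZ)^{-1}\tZ^T) = \sigma^2(p+1) = O(1)$, whence $\|\delta'\| = O_P(1)$ by the conditional-Chebyshev-plus-truncation argument used to establish \eqref{ytilde1}. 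Combining, $\|\tilde{y} - \mu\| \le \|\nu - \mu\| + \|\delta - \delta'\| + \|\delta'\| = O_P(1)$, which is in particular $o_P(\sqrt{n})$. Note this route yields the stronger conclusion $O_P(1)$; the weaker $o_P(\sqrt{n})$ bound as stated is all that is needed downstream to identify the probability limits of $\frac1n\tilde{y}^T\tilde{y}$ and $\frac1n X^T\tilde{y}$.

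If instead one wants a self-contained argument that does not cite Lemmas \ref{nunu} and \ref{deltadelta}, the same two estimates must be reproven directly from the rank-one update identity for $(\tZ_{(i)}^T\tZ_{(i)})^{-1}$ versus $(\tZ^T\tZ)^{-1}$, together with the uniform bounds $\max_i\|\tZ_i\| = o_P(n^{1/3})$, $\|(\tZ^T\tZ)^{-1}\| = O_P(n^{-1})$, and $\tr(\tZ^T\tZ) = O_P(n)$ furnished by Lemmas \ref{zlmm} and \ref{zlmm1}.

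The main obstacle is the noise-side bound, and specifically passing from a conditional second-moment estimate of the form $\E[\,\cdot\mid E] = o_P(1)$ to an unconditional $o_P(1)$ statement: because the bound on the conditional expectation is itself random (a function of $E$), one cannot apply Markov's inequality directly. The device, already used repeatedly in the ridge proof, is to split on the event that the conditional expectation is small, invoke Chebyshev there, and then send the truncation level to zero; I would reuse that exact argument. The signal side is comparatively routine, because the leverage-type quantities $\tZ_i(\tZ^T\tZ)^{-1}\tZ_i^T$ are uniformly $o_P(1)$ under the assumed control on $\max_i\|x_i\|$.
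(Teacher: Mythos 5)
Your decomposition is the paper's: write $\tilde{y}=\nu+\delta$ with $\nu_i=H_iX_{(i)}\beta_0$ and $\delta_i=H_i\epsilon_{(i)}$, observe that $\mu$ is the full-data analogue of $\nu$, and control the two discrepancies separately. Your treatment of the noise piece is correct and equivalent to the paper's: the paper computes $\mathbb{E}\bigl[\|\tilde y-\nu\|^2\mid E\bigr]\le \sigma^2\bigl(\max_i\|(\tilde Z_{(i)}^T\tilde Z_{(i)})^{-1}\|\bigr)\sum_i\|\tilde Z_i\|^2=O_P(1)$ directly and then runs the same conditional-Chebyshev truncation you describe, so $\|\delta\|=O_P(1)$ either way.

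The gap is on the signal side. You invoke Lemma \ref{nunu} to conclude $\|\nu-\mu\|=o_P(1)$, but that lemma is proved under the hypotheses of Theorem \ref{thm:ridge}, and its proof uses $\|\beta_0\|=O(n^{-1/2})$ and $\max_i\|x_i\|=o(n^{1/3})$ in an essential way (both enter explicitly in the bounds \eqref{newnu1} and \eqref{newnu2}). Lemma \ref{ylmm}, however, is stated and used in the part of the appendix supporting Theorem \ref{mainthm}, where $\beta_0$ is a \emph{fixed} nonzero vector and only $\max_i\|X_i\|=o(\sqrt n)$ is assumed --- which is precisely why the lemma claims only $o_P(\sqrt n)$ and why Lemma \ref{vrlmm} has a nondegenerate limit involving $\beta_0$. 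In that regime the conclusion of Lemma \ref{nunu} is not available, and your claimed strengthening $\|\tilde y-\mu\|=O_P(1)$ fails in general: the summand $(H_i'X_{(i)}-G_i)\beta_0=-\tilde Z_i(\tilde Z^T\tilde Z)^{-1}\tilde Z_i^TX_i\beta_0$ from that proof can, when $\Gamma\neq 0$ and some rows satisfy $\|X_i\|\asymp n^{1/2-\epsilon}$ with small $\epsilon>0$, have squared magnitude of order $n^{1-6\epsilon}\to\infty$ for a single $i$. The repair is exactly your self-contained fallback: redo the two rank-one-update estimates keeping $\|\beta_0\|^2$ as an $O(1)$ factor and using only $\max_i\|X_i\|=o(\sqrt n)$, which yields $\|\nu-\mu\|^2\le o_P(1)\cdot\|\beta_0\|^2\sum_i\|\tilde Z_i\|^2=o_P(n)$, i.e.\ $\|\nu-\mu\|=o_P(\sqrt n)$; this is what the paper's proof does and is all the lemma asserts.
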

To prove Lemma \ref{ylmm}, we need two preliminary lemmas.

\begin{thm}\label{mainthm}
Assume that $\sigma^2, \sigma_z^2 > 0$. Suppose that as $n\to\infty$, $\beta_0$, $\sigma^2$, $\sigma_z^2$, $e$, $p$, and $\Gamma$ remain fixed, and $X$ varies with $n$ in such a way that 
\begin{enumerate}
\item $\frac{1}{n}X^TX$ converges to an $e\times e$ positive definite matrix $\Sigma$, 
\item $\frac{1}{n}1^T X$ converges to an $1\times e$ matrix $\Theta$ (where $1$ denotes the vector of all $1$'s), and 
\item $\max_{1\le i\le n} \|X_i\|=o(\sqrt{n})$, where $X_i$ denotes the $i^{\mathrm{th}}$ row of $X$ and $\|X_i\|$ is its Euclidean norm. 
\end{enumerate}
Further, assume that the vectors $\Gamma^T\Theta^T$ and $\Gamma^T\Sigma \beta_0$ are not collinear. 
Then, as $n\to\infty$,
\[
\frac{\hat{\beta}_{PV}}{\hat{\sigma}(\hat{\beta}_{PV})} \Longrightarrow N(0,1).
\]
\end{thm}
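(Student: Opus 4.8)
The plan is to reduce the $t$-statistic, via \Cref{mainlmm} with $\lambda=0$, to the ratio
\[
\frac{\hat\beta_{PV}}{\hat\sigma(\hat\beta_{PV})}=\frac{U_0}{\sigma\sqrt{V_0}},\qquad U_0=\tilde{y}^TQ_X\epsilon,\quad V_0=\tilde{y}^TQ_X\tilde{y},
\]
where $Q_X:=I-X(X^TX)^{-1}X^T$, and then to exploit the fact that, because $\beta_0$ is now held \emph{fixed} rather than shrinking at rate $n^{-1/2}$, the pre-validated vector $\tilde{y}$ is dominated by its signal part. Concretely I would set $\mu:=\tZ(\tZ^T\tZ)^{-1}\tZ^TX\beta_0$ and invoke \Cref{ylmm} to write $\tilde{y}=\mu+\rho$ with $\|\rho\|=o_P(\sqrt{n})$. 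The crucial structural point is that $\mu$ is a function of $E$ and $X$ only, hence independent of $\epsilon$, while $\|\mu\|$ is of order $\sqrt{n}$, since $\tfrac1n\mu^T\mu\to\beta_0^TD^TB^{-1}D\beta_0>0$ by \Cref{zlmm}. Thus, conditionally on $E$, $\mu^TQ_X\epsilon\sim N(0,\sigma^2\mu^TQ_X\mu)$ \emph{exactly}, and the entire proof reduces to showing that the remaining contributions to $U_0$ and $V_0$ are negligible on the order-$\sqrt n$ scale.

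For the numerator I would split $U_0=\mu^TQ_X\epsilon+\rho^TQ_X\epsilon$, the first term being the conditionally Gaussian object above. The difficulty, which I expect to be the main obstacle, is that a naive Cauchy--Schwarz bound only yields $|\rho^TQ_X\epsilon|\le\|\rho\|\,\|Q_X\epsilon\|=o_P(\sqrt n)\cdot O_P(\sqrt n)=o_P(n)$, which is \emph{not} small enough once divided by $\sqrt{V_0}\asymp\sqrt n$. To recover the sharper $\rho^TQ_X\epsilon=o_P(\sqrt n)$ I would use the finer decomposition $\rho=(\nu-\nu')+\delta$ from the proof of \Cref{thm:ridge} and handle three terms separately: (i) $(\nu-\nu')^TQ_X\epsilon$ is conditionally Gaussian given $E$ with variance $\le\sigma^2\|\nu-\nu'\|^2$, and rerunning \Cref{nunu} with $\|\beta_0\|=O(1)$ in place of $O(n^{-1/2})$ gives $\|\nu-\nu'\|=o_P(\sqrt n)$, so this term is $o_P(\sqrt n)$; (ii) $(\delta-\delta')^TQ_X\epsilon\le\|\delta-\delta'\|\,\|Q_X\epsilon\|=o_P(1)\,O_P(\sqrt n)$ by \Cref{deltadelta}; and (iii) $\delta'^TQ_X\epsilon=\epsilon^T\tZ(\tZ^T\tZ)^{-1}\tZ^TQ_X\epsilon$ is a Gaussian quadratic form whose conditional mean and variance are both $O(1)$ (the mean converging to $\sigma^2[(p+1)-\tr(B^{-1}D\Sigma^{-1}D^T)]$, the variance controlled because $\tZ(\tZ^T\tZ)^{-1}\tZ^T$ and $Q_X$ are projections), hence $O_P(1)$. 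Summing, $\rho^TQ_X\epsilon=o_P(\sqrt n)$.

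For the denominator, since $Q_X$ is a contraction and $\|\mu\|=O_P(\sqrt n)$, $\|\rho\|=o_P(\sqrt n)$, I would expand $V_0=\mu^TQ_X\mu+2\mu^TQ_X\rho+\rho^TQ_X\rho=\mu^TQ_X\mu+o_P(n)$, and then compute $\tfrac1n\mu^TQ_X\mu\to c:=\beta_0^TD^T(B^{-1}-B^{-1}D\Sigma^{-1}D^TB^{-1})D\beta_0$ by \Cref{zlmm}. The key remaining point is $c>0$. Writing $c=(D\beta_0)^TB^{-1}(B-D\Sigma^{-1}D^T)B^{-1}(D\beta_0)$ and using the identity (established in the proof of \Cref{thm:ridge}, specialized to $\kappa=0$) that $B-D\Sigma^{-1}D^T=\mathrm{diag}(0,\sigma_z^2 I)$, one finds $c=\sigma_z^2\|(B^{-1}D\beta_0)_{2:p+1}\|^2$, which vanishes exactly when $D\beta_0$ lies in $\mathrm{span}\{Be_1\}=\mathrm{span}\{(1,\Gamma^T\Theta^T)^T\}$, i.e. when $\Gamma^T\Sigma\beta_0$ is a scalar multiple of $\Gamma^T\Theta^T$. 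The non-collinearity hypothesis on $\Gamma^T\Theta^T$ and $\Gamma^T\Sigma\beta_0$ is precisely what rules this out and forces $c>0$.

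Finally I would assemble the pieces into
\[
\frac{U_0}{\sigma\sqrt{V_0}}=\frac{\mu^TQ_X\epsilon}{\sigma\sqrt{\mu^TQ_X\mu}}\cdot\frac{1}{\sqrt{1+o_P(1)}}+\frac{o_P(\sqrt n)}{\sigma\sqrt{cn(1+o_P(1))}}.
\]
The additive remainder is $o_P(1)$ and the second factor is $1+o_P(1)$. Conditionally on $E$, on the event $\{\mu^TQ_X\mu>0\}$, which has probability tending to one since $c>0$, the first factor is exactly $N(0,1)$ for every $n$; hence it converges in distribution to $N(0,1)$. Slutsky's theorem then gives $\hat\beta_{PV}/\hat\sigma(\hat\beta_{PV})\Longrightarrow N(0,1)$. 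I expect the genuinely delicate step to be parts (i) and (iii) of the numerator estimate---extracting the extra factor of $\sqrt n$ beyond the trivial Cauchy--Schwarz bound by separating the $\epsilon$-dependent and $\epsilon$-independent components of $\rho$---while the denominator positivity is a clean linear-algebra consequence of the non-collinearity assumption.
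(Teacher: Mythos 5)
Your proposal is correct, and its skeleton coincides with the paper's: both reduce the statistic via Lemma~\ref{mainlmm} (at $\lambda=0$) to $U_0/(\sigma\sqrt{V_0})$, isolate the $\epsilon$-independent signal $\mu=\tilde{Z}(\tilde{Z}^T\tilde{Z})^{-1}\tilde{Z}^TX\beta_0$ inside $\tilde{y}$ using Lemma~\ref{ylmm}, exploit exact conditional Gaussianity of the $\mu$-part given $E$, and use the non-collinearity of $\Gamma^T\Theta^T$ and $\Gamma^T\Sigma\beta_0$ to force the limiting denominator $c=(D\beta_0)^TB^{-1}(B-D\Sigma^{-1}D^T)B^{-1}(D\beta_0)=\sigma_z^2\,\|(B^{-1}D\beta_0)_{2:p+1}\|^2$ to be strictly positive --- your linear algebra here matches Lemma~\ref{vrlmm} exactly. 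Where you genuinely diverge is the decisive remainder estimate. The paper keeps $u=\frac{1}{\sqrt n}\tilde{y}^T\epsilon$ and $W=\frac{1}{\sqrt n}X^T\epsilon$ separate, proves $u-\frac{1}{\sqrt n}\nu^T\epsilon=o_P(1)$ by a bespoke decoupling computation (Lemma~\ref{uvlmm}: expand $\mathbb{E}[(u-w)^2\mid E]$ as a double sum over $(H_i\epsilon_{(i)})(H_j\epsilon_{(j)})\epsilon_i\epsilon_j$ and use that $H_i\epsilon_{(i)}$ is free of $\epsilon_i$, so only the $\sigma^4H_{ij}H_{ji}$ terms survive), and then concludes through the joint normality of $(u,W)$ in Lemma~\ref{uwlmm} plus Slutsky. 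You instead bound $\rho^TQ_X\epsilon$ directly by importing $\nu'$, $\delta'$ and Lemma~\ref{deltadelta} from the ridge proof and treating $\delta'^TQ_X\epsilon$ as an $O_P(1)$ Gaussian quadratic form; both routes work, yours being more modular while the paper's is self-contained and avoids the full-data approximations. One caveat: for your step (i) you cannot literally rerun the second estimate \eqref{newnu2} of Lemma~\ref{nunu}, since Theorem~\ref{mainthm} assumes only $\max_i\|X_i\|=o(\sqrt n)$ rather than $o(n^{1/3})$, and the chain $n\|(\tilde{Z}^T\tilde{Z})^{-1}\|^2\max_i\|\tilde{Z}_i\|^4$ then yields only $o_P(n^2)$ for the squared norm. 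The bound you need, $\|\nu-\nu'\|^2=o_P(n)$, is nonetheless true under the weaker hypothesis and is precisely what the paper establishes as \eqref{ytilde3} inside Lemma~\ref{ylmm}; alternatively, replace the crude bound by $\sum_i(\tilde{Z}_i(\tilde{Z}^T\tilde{Z})^{-1}\tilde{Z}_i^T)^2\le\sum_i\tilde{Z}_i(\tilde{Z}^T\tilde{Z})^{-1}\tilde{Z}_i^T=p+1$. So this is a gap of citation, not of substance.
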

The theorem implies, under mild conditions, that if the signal-to-noise ratio (SNR) (which is, roughly speaking, the ratio of $\|\beta_0\|^2$ and $\sigma^2$) is fixed, then $\hat{\beta}_{PV}/\hat{\sigma}(\hat{\beta}_{PV})$ converges to $N(0,1)$ as $n\to\infty$, no matter how small the SNR is. However, if the SNR is small, it may require a very large $n$ to see a good fit to $N(0,1)$. For understanding the behavior under this setting, we need a theorem where simultaneously $n\to \infty$ and $\beta_0\to 0$. The next theorem gives this.

We are now ready to prove Lemma \ref{ylmm}.
\begin{proof}[Proof of Lemma \ref{ylmm}]

In particular, $\tilde{y}_i - \nu_i$ is independent of $\epsilon$. Thus, the above formula shows that
\begin{align*}
\E[(\tilde{y}_i - \nu_i)^2 | E] &= \E[H_i \epsilon_{(i)}|E]\\
&= \E[H_i \epsilon_{(i)}\epsilon_{(i)}^T H_i^T|E]\\
&= H_i \E[\epsilon_{(i)}\epsilon_{(i)}^T ] H_i^T\\
&= H_i (\sigma_z^2 I) H_i^T\\
&= \sigma_z^2 H_i H_i^T = \sigma_z^2 \tZ_i (\tZ_{(i)}^T\tZ_{(i)})^{-1}\tZ_i^T.
\end{align*}
By Lemma \ref{zlmm1}, 
\begin{align*}
\max_{1\le i\le n} \|(\tZ_{(i)}^T\tZ_{(i)})^{-1}\| &\le \|(\tZ^T \tZ)^{-1}\| + \max_{1\le i\le n} \|(\tZ^T\tZ)^{-1} - (\tZ_{(i)}^T\tZ_{(i)})^{-1}\|\\
&= O_P(n^{-1}) + o_P(n^{-1}) = O_P(n^{-1}).
\end{align*}
Thus, the previous identity and Lemma \ref{zlmm} show that
\begin{align*}
\E[\|\tilde{y}-\nu\|^2|E] &= \sum_{i=1}^n \E[(\tilde{y}_i - \nu_i)^2 | E] \\
&\le \sigma_z^2 (\max_{1\le i\le n}\|(\tZ_{(i)}^T\tZ_{(i)})^{-1}\|) \sum_{i=1}^n \|\tZ_i\|^2\\
&= O_P(n^{-1}) \tr(\tZ\tZ^T)\\
&= O_P(n^{-1}) \tr(\tZ^T\tZ) = O_P(n^{-1}) O_P(n) = O_P(1).
\end{align*}
This implies that for any given $\delta>0$, we can choose $K$ large enough so that for all $n$,
\[
\P(\E[\|\tilde{y}-\nu\|^2|E] > K) < \delta.
\]
But by Chebychev's inequality,
\begin{align*}
\P(\|\tilde{y}-\nu\|^2 > K^2|E) \le \frac{\E[\|\tilde{y}-\nu\|^2|E] }{K^2}.
\end{align*}
Combining, we get
\begin{align*}
\P(\|\tilde{y}-\nu\|^2 > K^2) &\le \P(\|\tilde{y}-\nu\|^2 > K^2, \, \E[\|\tilde{y}-\nu\|^2|E] \le K) + \P(\E[\|\tilde{y}-\nu\|^2|E] > K)\\
&\le \E[\P(\|\tilde{y}-\nu\|^2 > K^2|E) 1_{\{\E[\|\tilde{y}-\nu\|^2|E] \le K)\}}] + \delta\\
&\le \E\biggl[\frac{\E[\|\tilde{y}-\nu\|^2|E] }{K^2}1_{\{\E[\|\tilde{y}-\nu\|^2|E] \le K)\}}\biggr] + \delta\\
&\le K^{-1}+\delta.
\end{align*}
Without loss of generality, we may choose $K > 1/\delta$. Then the above inequality implies that for all $n$,
\[
\P(\|\tilde{y}-\nu\|^2 > K^2) \le 2\delta.
\]
Therefore, 
\begin{align}\label{ytilde2}
\|\tilde{y}-\nu\| = O_P(1)
\end{align}
as $n\to \infty$. Next, note that for any $i$,
\begin{align*}
|\mu_i -\nu_i| &= \|\tZ_i(\tZ^T\tZ)^{-1}\tZ^TX\beta_0 - \tZ_i (\tZ_{(i)}^T \tZ_{(i)})^{-1} \tZ_{(i)}^T X_{(i)}\beta_0\|\\
&\le \|\tZ_i\|\|\beta_0\|(\|(\tZ^T\tZ)^{-1} - (\tZ_{(i)}^T \tZ_{(i)})^{-1}\|\|\tZ^T X\| \\
&\hskip2in + \|(\tZ_{(i)}^T \tZ_{(i)})^{-1}\|\|\tZ^TX - \tZ_{(i)}^T X_{(i)}\|).
\end{align*}
Thus, the estimates from Lemma \ref{zlmm1}, Lemma \ref{zlmm2} and Lemma \ref{zlmm} imply that
\begin{align}
\|\mu - \nu\|^2 &\le (o_P(n^{-1}) O_P(n) + O_P(n^{-1}) o_P(n)) \sum_{i=1}^n \|\tZ_i\|^2\notag \\
&= o_P(1) \tr(\tZ\tZ^T) = o_P(1)\tr(\tZ^T\tZ) = o_P(1)O_P(n) = o_P(n).\label{ytilde3}
\end{align}
Combining \eqref{ytilde2} and \eqref{ytilde3}, we get the desired result.
\end{proof}

We now have the ingredients necessary for computing the limits of the matrix $V$ defined in equation \eqref{uvw} and the number $r$ defined in equation \eqref{rdef}. This is the content of the following lemma.

\begin{lmm}\label{vrlmm}
Let $B$ and $D$ be defined as in Lemma \ref{zlmm}. 
Then, as $n\to \infty$,
\[
V \stackrel{P}{\to} D^T B^{-1} D\beta_0,
\]
and
\[
r\stackrel{P}{\to} \frac{1}{\beta_0^T D^TB^{-1}D\beta_0 - \beta_0^T D^T B^{-1}D\Sigma^{-1} D^TB^{-1} D\beta_0},
\]
where the denominator is strictly positive.
\end{lmm}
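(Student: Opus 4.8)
The plan is to reduce both convergence statements to the deterministic limits supplied by Lemma \ref{zlmm}, by first replacing $\tilde{y}$ with the surrogate $\mu := \tZ(\tZ^T\tZ)^{-1}\tZ^T X\beta_0$ from Lemma \ref{ylmm}, and then to treat the strict positivity of the limiting denominator as a separate, self-contained step that invokes the non-collinearity hypothesis of Theorem \ref{mainthm}.

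For the first claim I would split $V = \frac{1}{n}X^T\tilde{y} = \frac{1}{n}X^T\mu + \frac{1}{n}X^T(\tilde{y}-\mu)$. The error term has norm at most $\frac{1}{n}\|X\|\,\|\tilde{y}-\mu\|$; since $\|X\| = O(\sqrt{n})$ (because $\frac{1}{n}X^TX\to\Sigma$) and $\|\tilde{y}-\mu\| = o_P(\sqrt{n})$ by Lemma \ref{ylmm}, this is $o_P(1)$. The main term factors as $(\frac{1}{n}X^T\tZ)(\frac{1}{n}\tZ^T\tZ)^{-1}(\frac{1}{n}\tZ^TX)\beta_0$, with the powers of $n$ cancelling; Lemma \ref{zlmm} (here with $\kappa=0$) gives $\frac{1}{n}\tZ^TX\to D$ and $\frac{1}{n}\tZ^T\tZ\to B$ with $B$ nonsingular, so the continuous mapping theorem yields $V\stackrel{P}{\to} D^TB^{-1}D\beta_0$. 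The same substitution handles $\frac{1}{n}\tilde{y}^T\tilde{y}$: writing it as $\frac{1}{n}\mu^T\mu + \frac{1}{n}(\tilde{y}-\mu)^T(\tilde{y}+\mu)$, the second piece is bounded by $\frac{1}{n}\|\tilde{y}-\mu\|(\|\tilde{y}\|+\|\mu\|)=o_P(1)$ after noting $\|\mu\|=O_P(\sqrt{n})$ (since $\frac{1}{n}\mu^T\mu$ has a finite limit) and hence $\|\tilde{y}\|=O_P(\sqrt{n})$, while the same factoring gives $\frac{1}{n}\mu^T\mu\to\beta_0^TD^TB^{-1}D\beta_0$.

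Given these two limits, the limit of $r$ follows by Slutsky's theorem: its denominator $\frac{1}{n}\tilde{y}^T\tilde{y} - V^T(\frac{1}{n}X^TX)^{-1}V$ converges, using $(\frac{1}{n}X^TX)^{-1}\to\Sigma^{-1}$, to $\beta_0^TD^TB^{-1}D\beta_0 - \beta_0^TD^TB^{-1}D\Sigma^{-1}D^TB^{-1}D\beta_0$, so $r$ converges to the reciprocal provided this quantity is strictly positive.

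Establishing that strict positivity is the one genuinely nontrivial step, and I expect it to be the main obstacle, since it is exactly where the non-collinearity assumption is used. Setting $w := D\beta_0$ and $v := B^{-1}w$, the limiting denominator equals $v^T(B - D\Sigma^{-1}D^T)v$. I would reuse the block computation from the proof of Theorem \ref{thm:ridge}: with $\kappa=0$, and using that the all-ones column of $X$ forces $1-\Theta\Sigma^{-1}\Theta^T=0$, one obtains $B - D\Sigma^{-1}D^T = \begin{bmatrix} 0 & 0\\ 0 & \sigma_z^2 I\end{bmatrix}$, so the denominator equals $\sigma_z^2\|v_{2:}\|^2$, where $v_{2:}$ denotes the last $p$ coordinates of $v$. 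It then remains to show $v_{2:}\neq 0$. Inverting $B$ in block form with $b := \Gamma^T\Theta^T$, $g := \Gamma^T\Sigma\beta_0$, $C := \Gamma^T\Sigma\Gamma + \sigma_z^2 I$, and Schur complement $s := 1 - b^TC^{-1}b$, a direct computation gives $v_{2:} = C^{-1}\bigl(g + s^{-1}(b^TC^{-1}g - \Theta\beta_0)\,b\bigr)$; since $C^{-1}$ is invertible, $v_{2:}=0$ would force $g$ to be a scalar multiple of $b$. The hypothesis that $\Gamma^T\Theta^T$ and $\Gamma^T\Sigma\beta_0$ are not collinear rules this out, so $v_{2:}\neq 0$ and the denominator is strictly positive. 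Everything outside this step is a routine combination of Lemma \ref{ylmm}, Lemma \ref{zlmm}, and Slutsky's theorem.
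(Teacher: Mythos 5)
Your proposal is correct and follows essentially the same route as the paper: replace $\tilde{y}$ by $\mu$ via Lemma \ref{ylmm}, pass to the limits from Lemma \ref{zlmm} by continuous mapping/Slutsky, reduce the positivity question to $B - D\Sigma^{-1}D^T = \mathrm{diag}(0,\sigma_z^2 I)$ using the all-ones column, and invoke the non-collinearity of $\Gamma^T\Theta^T$ and $\Gamma^T\Sigma\beta_0$ to show the lower block of $B^{-1}D\beta_0$ is nonzero. Your explicit Schur-complement formula for that lower block is in fact a cleaner version of the paper's own (somewhat loosely normalized) block inverse, but the argument is the same.
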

\begin{proof}
First, note that by Lemma \ref{ylmm},
\begin{align*}
\|X^T\tilde{y} - X^T \mu\| &\le \|X^T\|\|\tilde{y}-\mu\| = \|X^T\| o_P(\sqrt{n}).
\end{align*}
Now, $\|X^T\| = \|X\| = $ the largest singular value of $X$. This is equal to the square root of the  largest eigenvalue of $X^TX$. Since $\frac{1}{n}X^TX\to \Sigma$ and the largest eigenvalue is a continuous map on the space of $e\times e$ matrices, this shows that $\|X^T\|=O(\sqrt{n})$. Thus, by the above identity, we get that
\[
\|X^T\tilde{y} - X^T \mu\|  = O(\sqrt{n}) o_P(\sqrt{n}) = o_P(n).
\]
But $V=\frac{1}{n}X^T \tilde{y}$ and by Lemma \ref{zlmm},
\begin{align}
\frac{1}{n}X^T\mu &= \frac{1}{n}X^T \tZ(\tZ^T\tZ)^{-1}\tZ^TX\beta_0\notag\\
&\stackrel{P}{\to} D^TB^{-1} D\beta_0.\label{xmulim}
\end{align}
This proves the first claim of the lemma. Next, note that by the triangle inequality and Lemma \ref{ylmm},
\begin{align}
|\|\tilde{y}\|^2 - \|\mu\|^2| &\le |\|\tilde{y}\|-\mu\||(\|\tilde{y}\| + \|\mu\|)\notag\\
&\le \|\tilde{y}-\mu\| (\|\tilde{y}-\mu\| + 2\|\mu\|)\notag\\
&\le o_P(\sqrt{n}) ( o_P(\sqrt{n}) + 2\|\mu\|)\notag \\
&= o_P(n) + o_P(\sqrt{n})\|\mu\|.\label{ymu}
\end{align}
By Lemma \ref{zlmm},
\begin{align}
\frac{1}{n}\|\mu\|^2 &= \frac{1}{n}\mu^T \mu =\frac{1}{n}\beta_0^TX^T\tZ(\tZ^T\tZ)^{-1}\tZ^T \tZ(\tZ^T\tZ)^{-1}\tZ^TX\beta_0\notag \\
&= \beta_0^T({\textstyle\frac{1}{n}}X^T\tZ)({\textstyle\frac{1}{n}}\tZ^T\tZ)^{-1}({\textstyle\frac{1}{n}}\tZ^TX)\beta_0\notag \\
&\stackrel{P}{\to} \beta_0^T D^TB^{-1}D\beta_0.\label{mulim}
\end{align}
In particular, $\|\mu\|= O_P(\sqrt{n})$. Plugging this into \eqref{ymu}, we get that 
\[
\frac{1}{n}\|\tilde{y}\|^2 = \frac{1}{n}\|\mu\|^2 + o_P(1)= \beta_0^T D^TB^{-1}D\beta_0 + o_P(1).
\]
Combining this with the limit of $V$ that we already established, we get
\begin{align*}
r &= \frac{1}{\frac{1}{n}\tilde{y}^T\tilde{y} - V^T(\frac{1}{n}X^TX)^{-1}V}\\
&\stackrel{P}{\to}\frac{1}{\beta_0^T D^TB^{-1}D\beta_0 - \beta_0^T D^T B^{-1}D\Sigma^{-1} D^T B^{-1} D\beta_0}. 
\end{align*}
Thus, it only remains to show that the denominator is strictly positive. To see this, first note that 
\begin{align*}
B - D\Sigma^{-1} D^T &= \begin{bmatrix}
1 & \Theta \Gamma\\
\Gamma^T \Theta^T & \Gamma^T \Sigma \Gamma + \sigma_z^2 I
\end{bmatrix} -
\begin{bmatrix}
\Theta \\
\Gamma^T \Sigma 
\end{bmatrix}\Sigma^{-1}
\begin{bmatrix}
\Theta^T & \Sigma \Gamma
\end{bmatrix}
\\
&=
\begin{bmatrix}
1 -\Theta\Sigma^{-1}\Theta^T & 0\\
0 & \sigma_z^2I
\end{bmatrix}.
\end{align*}
Now, recall that $X$ has a column of all $1$'s. Thus, the projection of the vector $1$ onto the column space of $X$ is $1$ itself. That is, $X(X^TX)^{-1}X^T1 = 1$. Consequently,
\[
1 - \frac{1}{n} 1^TX(X^TX)^{-1}X^T1 = 1 - \frac{1}{n}1^T1 = 0.
\]
But the limit of the left side above as $n\to\infty$ is $1-\Theta \Sigma^{-1} \Theta^T$. Thus, 
\begin{align}\label{lucky}
B - D\Sigma^{-1} D^T  = \begin{bmatrix}
0 & 0\\
0 & \sigma_z^2I
\end{bmatrix}.
\end{align}
Next, note that 
\begin{align*}
B^{-1} &= \frac{1}{\det(B)}
\begin{bmatrix}
1 & -\Theta \Gamma Q\\
-Q\Gamma^T \Theta^T & Q + Q\Gamma^T\Theta^T\Theta \Gamma Q
\end{bmatrix}
\end{align*}
where $Q := (\Gamma^T \Sigma \Gamma + \sigma_z^2 I)^{-1}$. 
This gives us
\begin{align*}
B^{-1}D\beta_0 &= \frac{1}{\det(B)}
\begin{bmatrix}
\Theta \beta_0 -\Theta \Gamma Q \Gamma^T \Sigma \beta_0\\
-Q\Gamma^T \Theta^T\Theta \beta_0 + (Q + Q\Gamma^T\Theta^T\Theta \Gamma Q)\Gamma^T \Sigma\beta_0
\end{bmatrix}
\end{align*}
Since $Q$ is nonsingular, the last assumption in Theorem \ref{mainthm} proves that the lower component of $B^{-1}D\beta_0$ is a nonzero vector. By the formula \eqref{lucky}, this shows that the denominator in the limit of $r$ is strictly positive.
This completes the proof of the lemma.
\end{proof}

Our next goal is to identify the limiting joint distribution of $(u,W)$. As a first step, we will approximate $u$ by the random variable
\[
v := \frac{1}{\sqrt{n}}\mu^T\epsilon.
\]
The following lemma shows $v$ approximates $u$ in the limit.
\begin{lmm}\label{uvlmm}
As $n\to\infty$, $u-v\to 0$ in probability.
\end{lmm}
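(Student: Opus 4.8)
The plan is to reuse the decomposition $\tilde y=\nu+\delta$ from the proof of Lemma~\ref{ylmm}, where $\delta_i=H_i\epsilon_{(i)}$ with $H_i=\tZ_i(\tZ_{(i)}^T\tZ_{(i)})^{-1}\tZ_{(i)}^T$, and $\nu=\tilde y-\delta$ is a function of $E$ and $X$ only, hence independent of $\epsilon$. Writing $u-v=\frac{1}{\sqrt n}(\tilde y-\mu)^T\epsilon$ and splitting along this decomposition, I would prove
\[
u-v=\frac{1}{\sqrt n}(\nu-\mu)^T\epsilon+\frac{1}{\sqrt n}\delta^T\epsilon,
\]
and show each term is $o_P(1)$. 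In both cases the strategy is to compute the conditional mean and variance given $E$ and then remove the conditioning by the Chebyshev argument already used to derive \eqref{ytilde2}.

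The first term is straightforward: since $\nu-\mu$ is $\sigma(E)$-measurable and thus independent of $\epsilon$, its conditional second moment is $\E\bigl[(\tfrac{1}{\sqrt n}(\nu-\mu)^T\epsilon)^2\mid E\bigr]=\frac{\sigma^2}{n}\|\nu-\mu\|^2$. By \eqref{ytilde3} we have $\|\nu-\mu\|=o_P(\sqrt n)$, so this is $o_P(1)$, and the conditioning argument gives $\frac{1}{\sqrt n}(\nu-\mu)^T\epsilon=o_P(1)$.

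The second term is the crux, and here the naive bound $|\delta^T\epsilon|\le\|\delta\|\,\|\epsilon\|=O_P(1)\cdot O_P(\sqrt n)$ yields only $O_P(1)$, not $o_P(1)$, so the leave-one-out structure must be exploited. The key point is that $\delta_i=H_i\epsilon_{(i)}$ does not involve $\epsilon_i$, so conditionally on $E$ it is independent of $\epsilon_i$ and $\E[\delta_i\epsilon_i\mid E]=0$, whence $\E[\tfrac{1}{\sqrt n}\delta^T\epsilon\mid E]=0$. For the conditional variance, a fourth-moment (Wick) computation shows that for $i\neq j$ the only surviving pairing gives $\E[\delta_i\epsilon_i\delta_j\epsilon_j\mid E]=\sigma^4(H_i)_j(H_j)_i$, while the diagonal terms equal $\sigma^4\,\tZ_i(\tZ_{(i)}^T\tZ_{(i)})^{-1}\tZ_i^T$. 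Both sums are then controlled by comparing $H_i$ to the full leave-one-out hat matrix $\mathcal H=\tZ(\tZ^T\tZ)^{-1}\tZ^T$ via Lemma~\ref{zlmm1}: the diagonal sum is $\sum_i\tZ_i(\tZ_{(i)}^T\tZ_{(i)})^{-1}\tZ_i^T=\tr(\mathcal H)+o_P(1)=p+1+o_P(1)$, and the off-diagonal sum is $\sum_{i\neq j}(H_i)_j(H_j)_i=\sum_{i\neq j}\mathcal H_{ij}^2+o_P(1)\le\tr(\mathcal H^2)+o_P(1)=p+1+o_P(1)$, both $O_P(1)$ because $p$ is fixed and $\mathcal H$ is a projection. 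Dividing by $n$ shows the conditional variance is $O_P(n^{-1})=o_P(1)$, and the conditioning argument then gives $\frac{1}{\sqrt n}\delta^T\epsilon=o_P(1)$. Combining the two terms yields $u-v=o_P(1)$.

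The main obstacle is precisely this second term: one cannot use $\|\tilde y-\mu\|=o_P(\sqrt n)$ directly because $\tilde y$ depends on $\epsilon$, so the argument hinges on the leave-one-out independence $\delta_i\perp\epsilon_i$ (to kill the conditional mean) together with the trace identity $\tr(\mathcal H)=\tr(\mathcal H^2)=p+1$ (to bound the conditional variance), with the gap between the exact leave-one-out quantities $(H_i)_j$ and the full hat-matrix entries $\mathcal H_{ij}$ absorbed using Lemma~\ref{zlmm1}.
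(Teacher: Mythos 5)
Your proof is correct and follows essentially the same route as the paper: the same splitting $u-v=\frac{1}{\sqrt n}(\nu-\mu)^T\epsilon+\frac{1}{\sqrt n}\delta^T\epsilon$, the same use of \eqref{ytilde3} for the first piece, and the same conditional second-moment computation for $\frac{1}{\sqrt n}\delta^T\epsilon$ exploiting that $\delta_i=H_i\epsilon_{(i)}$ is independent of $\epsilon_i$ given $E$, yielding exactly the diagonal terms $\sigma^4\|H_i\|^2$ and off-diagonal terms $\sigma^4 H_{ij}H_{ji}$. The only cosmetic difference is that the paper bounds $\sum_{j\ne i}H_{ij}^2\le\|H_i\|^2$ and then $\sum_i\|H_i\|^2\le O_P(n^{-1})\operatorname{Tr}(\tilde Z^T\tilde Z)=O_P(1)$ directly from Lemmas \ref{zlmm} and \ref{zlmm1}, rather than passing through the trace identities of the full hat matrix as you do; both give the same $O_P(n^{-1})$ conditional variance.
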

\begin{proof}
Recall the vector $\nu$ defined in equation \eqref{nudef}. Define
\[
w := \frac{1}{\sqrt{n}} \nu^T \epsilon.
\]
Note that $\mu$ and $\nu$ have no dependence on $\epsilon$. Since the entries of $\epsilon$ are i.i.d.~$N(0,\sigma^2)$ random variables, the above identity and the estimate \eqref{ytilde3} show that
\begin{align*}
\E[(v-w)^2|E] &= \frac{\sigma^2}{n}\|\mu-\nu\|^2 = o_P(1).
\end{align*}
From this, using an argument identical to the proof of \eqref{ytilde1}, it follows that $v-w=o_P(1)$. Thus, it suffices to show that $u-w \to 0$ in probability. To prove this, first note that
\begin{align*}
(u-w)^2 &= \frac{1}{n}\sum_{i,j=1}^n (\tilde{y}_i - \nu_i)(\tilde{y}_j - \nu_j)\epsilon_i\epsilon_j\\
&= \frac{1}{n}\sum_{i,j=1}^n (H_i \epsilon_{(i)})(H_j\epsilon_{(j)})\epsilon_i \epsilon_j,
\end{align*}
where $H_i$ is defined as in equation \eqref{nudef}.
Take any $1\le i\ne j\le n$. Let $H_{ij}$ denote the $j^{\mathrm{th}}$ component of the vector $H_i$ (when the components are indexed by the set $\{1,\ldots,n\}\setminus\{i\}$). Then note that the random variables $Q_{ij} := H_i\epsilon_{(i)} - H_{ij} \epsilon_j$ and $Q_{ji} := H_j\epsilon_{(j)} - H_{ji} \epsilon_i$ have no dependence on $\epsilon_i$ and $\epsilon_j$. Thus,
\begin{align*}
&\E[(H_i \epsilon_{(i)})(H_j\epsilon_{(j)})\epsilon_i \epsilon_j|E] \\
&= \E[(Q_{ij} + H_{ij} \epsilon_j)(Q_{ji} + H_{ji}\epsilon_{i})\epsilon_i \epsilon_j|E] \\
&= \E(Q_{ij}Q_{ji}|E)\E(\epsilon_i\epsilon_j) + \E(Q_{ij}|E) H_{ji}\E(\epsilon_i^2\epsilon_j) + H_{ij}\E(Q_{ji}|E)\E(\epsilon_i \epsilon_j^2) + H_{ij}H_{ji}\E(\epsilon_i^2\epsilon_j^2)\\
&= \sigma^4 H_{ij}H_{ji}\le \frac{\sigma^4}{2}(H_{ij}^2+H_{ji}^2).
\end{align*}
Similarly, 
\begin{align*}
\E[(H_i \epsilon_{(i)})^2 \epsilon_i^2|E ] &= \E[(H_i \epsilon_{(i)})^2|E]\E[ \epsilon_i^2|E ]=  \sigma^4\|H_i\|^2.
\end{align*}
Using these estimates, we get
\begin{align}
\E[(u-w)^2|E] &\le \frac{\sigma^4}{n}\sum_{i=1}^n \|H_i\|^2 + \frac{\sigma^4}{n}\sum_{i=1}^n \sum_{j\ne i} H_{ij}^2\notag \\
&= \frac{2\sigma^4}{n}\sum_{i=1}^n\|H_i\|^2.\label{uwe}
\end{align}
where $H$ is the matrix whose $(i,j)^{\mathrm{th}}$ element is $H_{ij}$. Now, note that
\begin{align*}
\|H_i\|^2 &= H_i H_i^T \\
&= \tZ_i (\tZ_{(i)}^T \tZ_{(i)})^{-1} \tZ_{(i)}^T \tZ_{(i)} (\tZ_{(i)}^T \tZ_{(i)})^{-1} \tZ_i^T\\
&= \tZ_i (\tZ_{(i)}^T \tZ_{(i)})^{-1} \tZ_i^T.
\end{align*}
Now note that for each $i$, 
\begin{align*}
\|H_i\|^2 &\le \|\tZ_i\|^2 \|(\tZ_{(i)}^T \tZ_{(i)})^{-1}\| \\
&\le \|\tZ_i\|^2 (\|(\tZ_{(i)}^T \tZ_{(i)})^{-1}-(\tZ^T\tZ)^{-1}\| + \|(\tZ^T\tZ)^{-1}\|)\\
&\le \|\tZ_i\|^2(\max_{1\le j\le n}\|(\tZ_{(j)}^T \tZ_{(j)})^{-1}-(\tZ^T\tZ)^{-1}\| + \|(\tZ^T\tZ)^{-1}\|).
\end{align*}
Therefore by Lemma \ref{zlmm1} and Lemma \ref{zlmm},
\begin{align*}
\sum_{i=1}^n \|H_i\|^2 &\le O_P(n^{-1}) \sum_{i=1}^n \|\tZ_i\|^2\\
&= O_P(n^{-1}) \tr(\tZ\tZ^T)\\
&= O_P(n^{-1}) \tr(\tZ^T\tZ) = O_P(n^{-1}) O_P(n) = O_P(1).
\end{align*}
Plugging this into \eqref{uwe}, we get
\begin{align*}
\E[(u-w)^2|E] &\le \frac{2\sigma^4}{n} O_p(1) = O_P(n^{-1}).
\end{align*}
Using this, and the same argument as in the proof of \eqref{ytilde1}, we conclude that $u-w$ tends to zero in probability.
\end{proof}

The following lemma gives the joint distribution of $(u,W)$.
\begin{lmm}\label{uwlmm}
Let $B$ and $D$ be as in Lemma \ref{zlmm}. As $n\to \infty$, $(u,W)$ converges in law to the bivariate normal distribution with mean zero and covariance matrix
\[
\sigma^2 \begin{bmatrix}
\beta_0^T D^TB^{-1}D\beta_0 & D^TB^{-1}D\beta_0\\
\beta_0^T D^TB^{-1}D & \Sigma
\end{bmatrix}.
\]
\end{lmm}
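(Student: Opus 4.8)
The plan is to reduce the problem to the conditional Gaussianity of $(v,W)$ given $E$, where $v := \frac{1}{\sqrt n}\mu^T\epsilon$ is the approximating variable from Lemma \ref{uvlmm}. First I would invoke Lemma \ref{uvlmm}, which gives $u-v\to 0$ in probability; by Slutsky's theorem it then suffices to identify the limiting law of $(v,W)$. The crucial structural fact is that $\mu = \tZ(\tZ^T\tZ)^{-1}\tZ^TX\beta_0$ is a function of $E$ alone (together with the deterministic $X$ and $\beta_0$), whereas $\epsilon\sim N(0,\sigma^2 I)$ is independent of $E$. Consequently, conditionally on $E$, the vector $(v,W)=\frac{1}{\sqrt n}(\mu^T\epsilon,\,X^T\epsilon)$ is centred and jointly Gaussian, with the random conditional covariance matrix
\[
\Lambda_n(E) = \frac{\sigma^2}{n}\begin{bmatrix}\|\mu\|^2 & \mu^T X\\ X^T\mu & X^TX\end{bmatrix}.
\]

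Next I would compute the limit of $\Lambda_n(E)$ using the pieces already established in the proof of Lemma \ref{vrlmm}. Equation \eqref{mulim} gives $\frac{1}{n}\|\mu\|^2\stackrel{P}{\to}\beta_0^T D^TB^{-1}D\beta_0$, equation \eqref{xmulim} gives $\frac{1}{n}X^T\mu\stackrel{P}{\to} D^TB^{-1}D\beta_0$, and the hypothesis on the design gives $\frac{1}{n}X^TX\to\Sigma$. Hence $\Lambda_n(E)\stackrel{P}{\to}\Lambda$, where $\Lambda$ is precisely the covariance matrix displayed in the statement, the two off-diagonal blocks being transposes of one another.

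To upgrade conditional Gaussianity to an unconditional weak limit, I would argue through characteristic functions rather than a fixed-covariance central limit theorem. For each fixed $(s,t)\in\R\times\R^e$, conditioning on $E$ yields the exact identity
\[
\E\bigl[\exp(\I(sv+t^TW))\bigm| E\bigr] = \exp\Bigl(-\tfrac12\,(s,t)\,\Lambda_n(E)\,(s,t)^T\Bigr).
\]
Since the quadratic form in the exponent converges to $(s,t)\Lambda(s,t)^T$ in probability and the right-hand side is bounded by $1$ in modulus, the bounded convergence theorem gives $\E[\exp(\I(sv+t^TW))]\to\exp(-\tfrac12(s,t)\Lambda(s,t)^T)$, which is the characteristic function of $N(0,\Lambda)$. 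Lévy's continuity theorem then yields $(v,W)\Longrightarrow N(0,\Lambda)$, and combining with the first step completes the proof.

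The computational steps are routine, drawing entirely on limits proved earlier. The one subtlety worth flagging is that $\Lambda_n(E)$ is genuinely random, so one cannot directly invoke a standard multivariate CLT with deterministic covariance; the correct device is to integrate the conditional characteristic function against the law of $E$ and pass to the limit by bounded convergence. This is exactly the ``simple argument via characteristic functions'' alluded to in the proof of \cref{thm:ridge}, and it is the only place where any care is required.
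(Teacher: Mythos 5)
Your proof follows exactly the paper's argument: condition on $E$ to get exact joint Gaussianity of $(v,W)$ with the random covariance $\Lambda_n(E)$, identify its probability limit via \eqref{mulim}, \eqref{xmulim} and $\frac1n X^TX\to\Sigma$, transfer from $v$ to $u$ by Lemma \ref{uvlmm}, and finish with characteristic functions. The only difference is that you spell out the bounded-convergence/L\'evy continuity step that the paper compresses into ``a simple argument via characteristic functions,'' and you do so correctly.
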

\begin{proof}
Conditional on $E$, the pair $(v, W)$ is jointly normal with mean $0$, and covariance matrix
\[
\frac{\sigma^2}{n}
\begin{bmatrix}
\mu^T\mu & X^T\mu\\
\mu^TX & X^TX
\end{bmatrix}.
\]
The equations \eqref{mulim} and \eqref{xmulim} give the limits of $\frac{1}{n}\mu^T\mu$ and $\frac{1}{n}X^T\mu$, and we know that $\frac{1}{n}X^TX \to \Sigma$. Thus, the above matrix converges in probability to the matrix displayed in the statement of the lemma. By Lemma \ref{uvlmm}, $(u,W)$ must have the same limiting distribution as $(v,W)$. A simple argument via characteristic functions now completes the proof.
\end{proof}
We are now ready to prove Theorem \ref{mainthm}.
\begin{proof}[Proof of Theorem \ref{mainthm}]
Let $B$ and $D$ be as in Lemma \ref{vrlmm}. By Lemma \ref{uwlmm} and Lemma \ref{vrlmm}, 
\[
u-V^T({\textstyle\frac{1}{n}}X^TX)^{-1}W = 
\begin{bmatrix}
1 & - V^T(\frac{1}{n}X^TX)^{-1}
\end{bmatrix}
\begin{bmatrix}
u \\
W
\end{bmatrix}
\]
converges in distribution to a normal random variable with mean zero and variance
\begin{align*}
&
\sigma^2\begin{bmatrix}
1 & - \beta_0^TD^TB^{-1} D\Sigma^{-1}
\end{bmatrix}
\begin{bmatrix}
\beta_0^T D^TB^{-1}D\beta_0 & D^TB^{-1}D\beta_0\\
\beta_0^T D^TB^{-1}D & \Sigma
\end{bmatrix}
\begin{bmatrix}
1\\
- \Sigma^{-1} D^TB^{-1} D\beta_0
\end{bmatrix}\\
&=\sigma^2(\beta_0^T D^TB^{-1}D\beta_0 - 2\beta_0^TD^TB^{-1} D\Sigma^{-1} D^TB^{-1}D \beta_0 \\
&\qquad \qquad + \beta_0^TD^TB^{-1} D\Sigma^{-1} \Sigma \Sigma^{-1} D^TB^{-1} D\beta_0)\\
&= \sigma^2(\beta_0^T D^T B^{-1}D\beta_0 - \beta_0^TD^TB^{-1} D\Sigma^{-1} D^TB^{-1} D\beta_0).
\end{align*}
But note that by Lemma \ref{vrlmm}, the quantity within the parentheses is equal to the reciprocal of the limit of $r$. By Lemma~\ref{mainlmm}, this completes the proof of the theorem.
\end{proof}

\section{Derivation of ALO for the Logistic Lasso Problem}\label{appendix:ALOlogistic}
First, the objective function is as follows: \begin{equation*}
        L(\beta) = \sum_{i=1}^n y_i \log p_i + (1-y_i)\log (1-p_i) +\lambda ||\beta||_1
\end{equation*}
where $p_i = \frac{1}{1+e^{-z_i^\top \beta}}$.
    
Then based on the formula given by \citeasnoun**{auddy2024approximate}, we need to compute:
\begin{equation*}
    \begin{aligned}
    &l_i(\hat{\beta})=l(y_i|z_i^\top \hat{\beta})=y_i \log p_i + (1-y_i)\log (1-p_i)\\
    &\dot{l}_i(\hat{\beta})=(\frac{y_i}{p_i}- \frac{1-y_i}{1-p_i})\frac{\partial p_i}{\partial z_i^\top \hat{\beta}} = \frac{y_i-p_i}{p_i(1-p_i)}\frac{e^{-z_i^\top \hat{\beta}}}{\left(1+e^{-z_i^\top \hat{\beta}}\right)^2} = y_i-p_i\\
    &\ddot{l}_i(\hat{\beta})=-\frac{e^{-z_i^\top \hat{\beta}}}{\left(1+e^{-z_i^\top \hat{\beta}}\right)^2}=-p_i(1-p_i)
    \end{aligned}
\end{equation*}

Therefore,
\begin{equation*}
    \log \frac{p_{-i}}{1-p_{-i}}= z_i^\top \hat{\beta} - \frac{y_i-p_i}{p_i(1-p_i)}\frac{H_{ii}}{1-H_{ii}}
\end{equation*}
where $H = Z_\mathcal{S} (Z_\mathcal{S}^\top W Z_\mathcal{S})^{-1}Z_\mathcal{S}^\top W$, $\mathcal{S} = \{i:\hat{\beta}_i\neq 0\}$ is the active set of regular Lasso estimates, without leave-one-out cross fitting.

\end{document}